\newcommand{\basingstoke}{\texttt{DO-A}}
\newcommand{\colchester}{\texttt{DO-B}}
\newcommand{\northampton}{\texttt{DO-C}}
\journalname{Journal of Scheduling}
\newcommand{\change}[1]{{#1}}
\begin{document}

% \title{Approximating Bounded Job Start Scheduling with Application in Royal Mail Deliveries under Uncertainty\footnote{A preliminary version has been accepted in COCOA 2019.}
% }
\title{
%Bounded Job Start Scheduling in Royal Mail Delivery Offices:
%Approximation Algorithms and a Lexicographic Robust Optimization Approach
Approximate and Robust Bounded Job Start Scheduling for Royal Mail Delivery Offices\footnote{A preliminary version has been accepted in COCOA 2019.}
}
\titlerunning{Approximating Bounded Job Start Scheduling}

\author{
Dimitrios Letsios \and
Jeremy T.\ Bradley \and
Suraj G \and
Ruth Misener \and
Natasha Page 
}

%\authorrunning{Short form of author list} % if too long for running head

\institute{
Dimitrios Letsios \at 
King's College London \\
\email{dimitrios.letsios@kcl.ac.uk}
\and
Jeremy T. Bradley \at
GBI/Data Science Group, Royal Mail \\
\email{jeremy.bradley@royalmail.com}
\and
Suraj G, Natasha Page, and Ruth Misener \at
Imperial College London \\
%Tel.: +123-45-678910\\
%Fax: +123-45-678910\\
\email{\{suraj.g15,r.misener,natasha.page17\}@imperial.ac.uk}
}

\date{Received: date / Accepted: date}
% The correct dates will be entered by the editor

\maketitle

\begin{abstract}
% Motivated by mail delivery scheduling problems arising in Royal Mail, we 
% generalize the fundamental makespan scheduling $P||C_{\max}$ problem to develop the \emph{bounded job start scheduling problem}. 
%that generalizes the fundamental makespan scheduling and bin packing problems.
%This problem assumes a set $\mathcal{J}$ of jobs, each specified by an integer processing time $p_j$, that have to be executed non-preemptively by a set of $m$ parallel identical machines and an integer $g$ specifying an upper bound on the number of jobs that may simultaneously begin per unit of time.
%The goal is to compute a minimum makespan schedule satisfying the bound on simultaneous job starts.
%{\color{red} 
Motivated by mail delivery scheduling problems arising in Royal Mail, we study a generalization of the fundamental makespan scheduling $P||C_{\max}$ problem which we call the \emph{bounded job start scheduling problem}.
Given a set of jobs, each specified by an integer processing time $p_j$, that have to be executed non-preemptively by a set of $m$ parallel identical machines, the objective is to compute a minimum makespan schedule subject to an upper bound $g\leq m$ on the number of jobs that may simultaneously begin per unit of time. 
\change{With perfect input knowledge,} we show that Longest Processing Time First (LPT) algorithm is tightly 2-approximate.
After proving that the problem is strongly $\mathcal{NP}$-hard even when $g=1$, we elaborate on improving the 2-approximation ratio for this case.
We distinguish the classes of long and short instances satisfying $p_j\geq m$ and $p_j<m$, respectively, for each job $j$.
We show that LPT is 5/3-approximate for the former and optimal for the latter.
Then, we explore the idea of scheduling  long jobs in parallel with short jobs to obtain tightly satisfied packing and bounded job start constraints.
For a broad family of instances excluding degenerate instances with many very long jobs, we derive a 1.985-approximation ratio.
For general instances, we require machine augmentation to obtain better than 2-approximate schedules. %}
%Finally, 
\change{In the presence of uncertain job processing times,} we exploit machine augmentation and lexicographic optimization, which is useful for $P||C_{\max}$ under uncertainty, to propose a two-stage robust optimization approach for bounded job start scheduling under uncertainty aiming in 
%good trade-offs in terms of makespan and number of used machines.
\change{a low number of used machines. Given a collection of schedules of makespan $\leq D$, this approach allows distinguishing which are the more robust.We substantiate both the heuristics and our recovery approach numerically using Royal Mail data. We show that, for the Royal Mail application, machine augmentation, i.e.\ short-term van rental, is especially relevant.}
\keywords{Bounded job start scheduling \and Approximation algorithms \and Robust scheduling \and Mail deliveries}
\end{abstract}

\section{Introduction}

%\paragraph{General Setting}

Royal Mail provides mail collection and delivery services for all United Kingdom (UK) addresses. 
With a small van fleet (as of January 2019) of 37,000 vehicles and 90,000 drivers delivering to 27 million locations in UK, efficient resource allocation is essential to guarantee the business viability.
The backbone of the Royal Mail distribution %network 
is a three-layer hierarchical network with 6 regional distribution centers serving 38 mail centers. 
Each mail center receives, processes, and distributes mail for a large geographically-defined area via 1,250 delivery offices, each serving disjoint sets of neighboring post codes. 
Mail is collected in mail centers, sorted by region, and forwarded to an appropriate onward mail center, making use of the regional distribution centers for cross-docking purposes.
From the onward mail center it is transferred to the final delivery office destination.
This process has to be completed within 12 to 16 hours for 1st class post and 24 to 36 hours for 2nd class post depending on when the initial collection takes place.
%\todo{Are these Jeremy-approved numbers?}

% \paragraph{Scheduling Problem Motivation}

In a delivery office, post is sorted, divided into routes, and delivered to addresses using a combination of small fleet vans and walked trolleys. 
%Then, postal workers drive vans stationed at the delivery office to their daily routes and deliver the mail to recipients.
%The delivery routes are statically optimized and fixed a priori.
Allocating delivery itineraries to vans is critical.
%Each delivery office has an van exit gate which imposes an upper bound on the number of vehicles that may depart per unit of time.
%{\color{red} 
Each delivery office has a van exit gate which gives an upper bound the number of vehicles that can depart per unit of time.
Thus, we deal with scheduling a set $\mathcal{J}$ of jobs (delivery itineraries), each associated with an integer processing time $p_j$, on $m$ parallel identical machines (vehicles), s.t.\ the makespan, i.e.\ the last job completion time, is minimized.
%}
%Thus, we are faced with scheduling a set $\mathcal{J}$ of jobs (delivery itineraries) where each job is associated with an integer processing time $p_j$, on $m$ parallel identical machines (vehicles), where $m$ is non-constant, to minimize the makespan, i.e.\ the time when the last job completes.
Parameter $g$ imposes an upper bound on the number of jobs that may simultaneously begin per unit of time.
Each job has to be executed non-preemptively, i.e.\ by a single machine in a continuous time interval without interruptions.
We refer to this problem as the \emph{Bounded Job Start Scheduling Problem (BJSP)}.

\change{
Our contribution is twofold:
First, we derive greedy constant-factor approximation algorithms, i.e.\ simple heuristics adoptable by Royal Mail practitioners, for effectively solving BJSP instances with perfect knowledge.
% That is, we provide simple heuristic policies 
% %(with analytically proven performance guarantees)
% %for Royal Mail delivery scheduling, 
% which can be effectively adopted in practice by Royal Mail practitioners.
Second, we propose a two-stage robust optimization approach, based on Royal Mail practices, which evaluates the robustness of BJSP schedules under uncertainty.
% because Royal Mail delivery scheduling is often subject to uncertainty, we take a first step in this direction and develop a robust scheduling approach mitigating the effect of imprecise processing times. 
Using real data, we computationally validate the performance of both the heuristics and two-stage robust optimization approach.
%Next, 
Section~\ref{Section:Relation_Makespan} discusses the relationship between BJSP and the fundamental makespan scheduling problem, a.k.a.\ $P||C_{\max}$. Section~\ref{Section:Related_Work} presents relevant literature. Section~\ref{Section:Contributions} summarizes the paper's organization and our contributions.

% In a typical Royal Mail delivery office, departures are planned in the beginning of the day, but may concide due to various reasons.
% In these case, the order at which the vehicles depart occurs empirically by the drivers.
% To deal with this issue, we provide them good heuristic policies.
% We also analyze approximation algorithms with mathematical rigor.
% Often, there is uncertainty in the duration of the deliveries.
% We make the first steps by provide invistigating robust optimization approaches.
% To this end, we develop approximation algorithms and a robust optimization approach.
}

\subsection{Relation to $P||C_{\max}$}
\label{Section:Relation_Makespan}

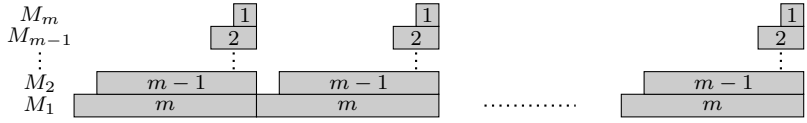
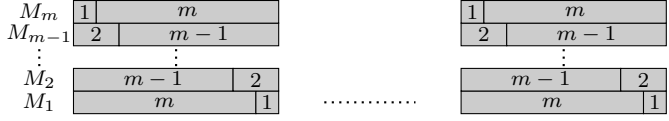
\begin{figure}[t]
\begin{subfigure}[t]{\textwidth}
\begin{center}
\begin{tikzpicture}[scale=0.3]

\node (m1) at (-1.5,0.5) {\footnotesize $M_1$};
\node (m2) at (-1.5,1.5) {\footnotesize $M_2$};
\draw[dotted, thick] (-1.5,2.1) -- (-1.5,3);
\node (m4) at (-1.5,3.5) {\footnotesize $M_{m-1}$};
\node (m4) at (-1.5,4.5) {\footnotesize $M_m$};

%------------------------------------------------------

\filldraw[fill=black!20!white] (0,0) rectangle (8,1);
\node at (4,0.5) {\footnotesize $m$};

\filldraw[fill=black!20!white] (1,1) rectangle (8,2);
\node at (4.5,1.5) {\footnotesize $m-1$};

\draw[dotted, thick] (7,2.1) -- (7,3);

\filldraw[fill=black!20!white] (6,3) rectangle (8,4);
\node at (7,3.5) {\footnotesize 2};

\filldraw[fill=black!20!white] (7,4) rectangle (8,5);
\node at (7.5,4.5) {\footnotesize 1};

%------------------------------------------------------

\filldraw[fill=black!20!white] (8,0) rectangle (16,1);
\node at (12,0.5) {\footnotesize $m$};

\filldraw[fill=black!20!white] (9,1) rectangle (16,2);
\node at (12.5,1.5) {\footnotesize $m-1$};

\draw[dotted, thick] (15,2.1) -- (15,3);

\filldraw[fill=black!20!white] (14,3) rectangle (16,4);
\node at (15,3.5) {\footnotesize $2$};

\filldraw[fill=black!20!white] (15,4) rectangle (16,5);
\node at (15.5,4.5) {\footnotesize 1};

%------------------------------------------------------

\draw[dotted, thick] (18,0.5) -- (22,0.5);

%------------------------------------------------------

\filldraw[fill=black!20!white] (24,0) rectangle (32,1);
\node at (28,0.5) {\footnotesize $m$};

\filldraw[fill=black!20!white] (25,1) rectangle (32,2);
\node at (28.5,1.5) {\footnotesize $m-1$};

\draw[dotted, thick] (31,2.1) -- (31,3);

\filldraw[fill=black!20!white] (30,3) rectangle (32,4);
\node at (31,3.5) {\footnotesize $2$};

\filldraw[fill=black!20!white] (31,4) rectangle (32,5);
\node at (31.5,4.5) {\footnotesize 1};
%------------------------------------------------------

\end{tikzpicture}
\end{center}
\caption{Optimal BJSP schedule of makespan $T_B=km$.}
\end{subfigure}
\begin{subfigure}[t]{\textwidth}
\begin{center}
\begin{tikzpicture}[scale=0.3]

\node (m1) at (-1.5,0.5) {\footnotesize $M_1$};
\node (m2) at (-1.5,1.5) {\footnotesize $M_2$};
\draw[dotted, thick] (-1.5,2.1) -- (-1.5,3);
\node (m4) at (-1.5,3.5) {\footnotesize $M_{m-1}$};
\node (m4) at (-1.5,4.5) {\footnotesize $M_m$};

%------------------------------------------------------

\filldraw[fill=black!20!white] (0,0) rectangle (8,1);
\node at (4,0.5) {\footnotesize $m$};
\filldraw[fill=black!20!white] (8,0) rectangle (9,1);
\node at (8.5,0.5) {\footnotesize $1$};

\filldraw[fill=black!20!white] (0,1) rectangle (7,2);
\node at (3.5,1.5) {\footnotesize $m-1$};
\filldraw[fill=black!20!white] (7,1) rectangle (9,2);
\node at (8,1.5) {\footnotesize $2$};

\draw[dotted, thick] (4.5,2.1) -- (4.5,3);

\filldraw[fill=black!20!white] (0,3) rectangle (2,4);
\node at (1,3.5) {\footnotesize $2$};
\filldraw[fill=black!20!white] (2,3) rectangle (9,4);
\node at (5.5,3.5) {\footnotesize $m-1$};

\filldraw[fill=black!20!white] (0,4) rectangle (1,5);
\node at (0.5,4.5) {\footnotesize $1$};
\filldraw[fill=black!20!white] (1,4) rectangle (9,5);
\node at (5,4.5) {\footnotesize $m$};

%------------------------------------------------------

\draw[dotted, thick] (11,0.5) -- (15,0.5);

%------------------------------------------------------

\filldraw[fill=black!20!white] (17,0) rectangle (25,1);
\node at (21,0.5) {\footnotesize $m$};
\filldraw[fill=black!20!white] (25,0) rectangle (26,1);
\node at (25.5,0.5) {\footnotesize $1$};

\filldraw[fill=black!20!white] (17,1) rectangle (24,2);
\node at (20.5,1.5) {\footnotesize $m-1$};
\filldraw[fill=black!20!white] (24,1) rectangle (26,2);
\node at (25,1.5) {\footnotesize $2$};

\draw[dotted, thick] (21.5,2.1) -- (21.5,3);

\filldraw[fill=black!20!white] (17,3) rectangle (19,4);
\node at (18,3.5) {\footnotesize $2$};
\filldraw[fill=black!20!white] (19,3) rectangle (26,4);
\node at (22.5,3.5) {\footnotesize $m-1$};

\filldraw[fill=black!20!white] (17,4) rectangle (18,5);
\node at (17.5,4.5) {\footnotesize $1$};
\filldraw[fill=black!20!white] (18,4) rectangle (26,5);
\node at (22,4.5) {\footnotesize $m$};

\end{tikzpicture}
\end{center}
\caption{Optimal $P||C_{\max}$ schedule, without the BJSP constraint, of makespan $T_M=k(m+1)/2$.}
\end{subfigure}
\caption{\change{
BJSP instance with $m$ machines, $k$ jobs of processing time $i$, for each $i=1,\ldots,m$, and $g=1$. For an odd number $k=o(m)$, $T_B\simeq 2T_M$.}}
\label{Figure:PCmax}
\end{figure}

\change{ With perfect knowledge, BJSP is strongly related to 
%the fundamental makespan scheduling problem, a.k.a.\ $P||C_{\max}$ \cite{Graham1969}.
$P||C_{\max}$ which is defined similarly, but drops the BJSP constraint \cite{Graham1969}.}
Broadly speaking, BJSP is harder as a generalization of $P||C_{\max}$ and the two problems become equivalent when $g=m$. 
%Further, $P||C_{\max}$ is the BJSP relaxation obtained by dropping the BJSP constraint.
\change{
%The main additional difficulty in designing efficient algorithms for BJSP is bounding the idle machine time which occurs due to the bounded job start constraint.
$P||C_{\max}$ is strongly $\mathcal{NP}$-hard as a straightforward generation of the \emph{3-Partition} problem \cite{Garey1979}.
The well-known \emph{List Scheduling (LS)} and \emph{Longest Processing Time First (LPT)} algorithms achieve tight approximation ratios for $P||C_{\max}$ equal to 2 and 4/3, respectively.
Further, $P||C_{\max}$ admits a polynomial-time approximation scheme (PTAS).
}

\change{
%\paragraph{Technical Challenge}
Technically, good solutions for both BJSP and $P||C_{\max}$ must attain low imbalance $\max_i\{T-T_i\}$, where $T$ and $T_i$ are the makespan and completion time of machine $i$, respectively.
However, BJSP exhibits the additional difficulty of managing and bounding idle machine time
%, because of the BJSP constraint, 
during the time interval $[0,\min_i\{T_i\}]$.
To this end, we develop an algorithm that schedules long jobs in parallel with short jobs and bounds idle time with a concave relaxation. 
%analysis attempt to deal with this issue.
Figure~\ref{Figure:PCmax} shows a job set where the minimum makespan schedules with ($T_B$) and without ($T_M$) the bounded job start constraint differ by a factor of 2.
%feasible schedules may require idle machine time before all jobs have begun.
%Therefore, BJSP exhibits the additional difficulty of effectively bounding the total idle period $\sum_{t\leq r}(m-|\mathcal{A}_t|)$, where $r$ and $\mathcal{A}_t$ are the last job start time and set of jobs executed during $[t,t+1)$, respectively.

Approximate $P||C_{\max}$ solutions can be converted into feasible solutions for BJSP.
On the negative side, $P||C_{\max}$ optimal solutions are a factor $\Omega(m)$ from the BJSP optimum, in the worst case.}
To see this, take an arbitrary $P||C_{\max}$ instance and construct a BJSP one with $g=1$, by adding a large number of unit jobs.
The BJSP optimal schedule requires time intervals during which $m-1$ machines are idle at each time, while the $P||C_{\max}$ optimal schedule is perfectly balanced and all machines are busy until the last job completes.
On the positive side, we may easily convert any $\rho$-approximation algorithm for $P||C_{\max}$ into a $2\rho$-approximation algorithm for BJSP using na\"ive bounds. 
Given that $P||C_{\max}$ admits a PTAS, %polynomial-time approximation scheme, 
we obtain an $O(n^{1/\epsilon}\cdot poly(n))$-time $(2+\epsilon)$-approximation algorithm for BJSP.
Our main goal is to obtain tighter approximation ratios.

\subsection{Related Work}
\label{Section:Related_Work}

\change{
Next, we present related work to designing BJSP approximation algorithms and robust optimization approaches, relevant to Royal Mail delivery offices.}

% {\color{red} 

% We show that LPT achieves low idle machine time after the last job begins, while SPT achieves low idle time before the last job begins. Determining the best trade-off between the two so that the overall makespan of the resulting schedule is minimized is a challenge. In addition, the length of the jobs affects the quality of the resulting schedule in BJSP. Efficient algorithms for BJSP require carefully managing the idle machine machine time and developing appropriate bounds for analysis. The proposed LSM algorithm, which reduces the overall idle time by scheduling long jobs in parallel with long jobs, and our concave relaxation, for bounding the idle time before the last job begins (in the case of LPT for long instances), are key contributions in this direction.
% }

%{\color{red} \emph{Needs work: Discussion of related problems with job starting constrains and other relevant applications.}}
%BJSP is related to several problems investigated in the literature.
%BJSP is related to various scheduling problems investigated in the literature. 
\paragraph{Approximation Algorithms.}
BJSP relaxes the scheduling problem with forbidden sets, i.e.\ non-overlapping constraints, where subsets of jobs cannot run in parallel \cite{Schaffter1997}.
For the latter problem, better than 2-approximation algorithms are ruled out, unless $\mathcal{P}=\mathcal{NP}$ \cite{Schaffter1997}.
Even when there is a strict order between jobs in the same forbidden set, the scheduling with forbidden sets problem is equivalent to the precedence-constrained scheduling problem $P|prec|C_{\max}$ and cannot be approximated by a factor lower than $(2-\epsilon)$, assuming a variant of the unique games conjecture \cite{Svensson2011}.
% The scheduling problem with forbidden sets in which there is also a strict order between incompatible jobs in the same forbidden set is the well-known precedence-constrained scheduling problem and cannot be approximated by a factor $(2-\epsilon)$ assuming a variant of the unique games conjecture \cite{Svensson2011}. 
%This last result matches the performance of the well-known list scheduling algorithm by \citet{Graham1969}. 
Also, BJSP relaxes the scheduling with forbidden job start times problem, where no job may begin at certain time points, which does not admit constant-factor approximation algorithms \cite{Billaut2009,Gabay2016, Mnich2018,Rapine2013}.
%Scheduling with forbidden start times does not admit a constant-factor approximation algorithm even in the single machine case.
Despite the commonalities with the aforementioned literature, to the authors' knowledge, there is a lack of approximation algorithms for scheduling problems with bounded job starts.

\paragraph{Robust Optimization.}
\change{Royal Mail delivery times may be imprecise. % can be predicted from historical data?
Once a delivery has begun, it might finish earlier or later than its anticipated completion time.
Because of uncertain job completion times, Royal Mail vans attempting pre-computed schedules may not be able to complete all deliveries during working hours.
To this end, robust optimization provides a useful framework for structuring uncertainty, e.g.\ box uncertainty sets, and incorporating it in the decision-making process \cite{BenTal2009,Bertsimas2011,Goerigk2016,Kouvelis2013}.
Typically, Royal Mail delivery schedules are computed in two-stages: (i) the first stage computes a feasible, efficient schedule for an initial nominal problem instance before the day begins, and (ii) the second stage recovers the initial schedule by accounting for uncertainty during the day.
This setting is naturally captured by two-stage robust optimization with recovery \cite{BenTal2004,Bertsimas2010,Hanasusanto2015,Liebchen2009}.
%Prior literature investigates two-stage robust counterparts of discrete optimization problems, including timetabling and vehicle routing \cite{Chassein2016,Liebchen2009}. 

A common way of measuring solution robustness of a given discrete optimization problem instance is by comparing the final solution objective value after uncertainty realization with the solution objective value that we could have achieved if we had a crystal ball that accurately predicts the future.
From this perspective, we recently proposed a two-stage robust scheduling approach for $P||C_{\max}$, based on lexicographic optimization \cite{Letsios2018}.
If processing times are perturbed by a $(1+\epsilon)$ factor, the lexicographic optimization approach yields schedules a factor $(2+O(\epsilon))$ far from achievable optima with perfect knowledge.
But this common way of meansuring solution robustness is less applicable for our application because the makespan, i.e.\ the number of working hours in the day, is fixed.
So, in the Royal Mail context, certain decisions can be irrevocable during the recovery process.
For Royal Mail delivery offices, job start times can be irrevocable during the day, to reduce delivery delays and overtimes.
Further, resource augmentation (or constraint violation), e.g.\ backup machines, might be essential to ensure the resulting solutions' feasibility \cite{BenTal2000,kalyanasundaram2000speed}.
For this application, we measure a solution's robustness with the level of resource augmentation, e.g.\ number of short-term rental vans, required after uncertainty realization.
Robust bounded job start scheduling with resource augmentation is an intriguing open question.
}

% {\color{red}

% \paragraph{Robust Optimization Approaches.}
% \begin{itemize}
%     \item Two stages.
%     \item Motivate robust optimization (uncertainty set). provides a nice framework for solving optimization problems under uncertainty.
% \end{itemize}

% We consider robust optimization, i.e.\ hedge against worst-case realizations of imprecise parameter values.
% For $P||C_{\max}$ under uncertainty, \citet{Letsios2018} design a robust optimization approach. 

% A two-stage robust optimization setting under uncertainty requires an uncertainty set and a performance guarantee.
% A two-stage robust optimization approach requires an exact method for producing the initial solution and a recovery strategy.

% }

\subsection{Paper Organization and Contributions}
\label{Section:Contributions}

%Section~\ref{Section:Problem_Definition} defines BJSP and formulates it as an integer program. %, and introduces relevant notation.
%{\color{red} 
Section~\ref{Section:Problem_Definition} formally defines BJSP, proves the problem's $\mathcal{NP}$-hardness, and derives an $O(\log n)$ integrality gap for a natural integer programming formulation. %}
Section~\ref{Section:LPT} investigates \emph{Longest Processing Time First (LPT)} algorithm
%, i.e.\ the most natural BJSP option, 
and derives a tight $2$-approximation ratio.
We thereafter explore improving this ratio for the special case $g=1$.
Section~\ref{Section:Problem_Definition} shows that BJSP is strongly $\mathcal{NP}$-hard even when $g = 1$.
Several of our arguments can be extended to arbitrary $g$, but
%Furthermore, we conjecture all our algorithms and arguments can be extended to the more general case with arbitrary $g$.
focusing on $g=1$ avoids many floors, ceilings, and simplifies our presentation.
%From an application viewpoint, 
Furthermore, any Royal Mail instance can be converted to $g = 1$ using small discretization. 
% Since any $\rho$-approximation algorithm for $P||C_{\max}$ can be converted into a $2\rho$-approximation algorithm for BJS, we get an $8/3$-approximation ration for Longest Processing Time First (LPT) algorithm.
% Section~\ref{Section:LPT} shows that LPT is tightly $2$-approximate.
%Similarly, to the LPT case, our generic argument for converting any $\rho$-approximate $P||C_{\max}$ schedule into a $2\rho$-approximate schedule for the BJS problem implies a 4-approximation ratio for the Shortest Processing Time First (SPT) algorithm.

Section~\ref{Section:Long_Short} distinguishes \emph{long} versus \emph{short} instances.
An instance $\langle m,\mathcal{J}\rangle$ is \emph{long} if $p_j\geq m$ for each $j\in\mathcal{J}$ and \emph{short} if $p_j<m$ for all $j\in\mathcal{J}$.
This distinction comes from the observation that idle time occurs mainly because of (i) simultaneous job completions for long jobs and (ii) limited allowable parallel job executions for short jobs.
%Section~\ref{Section:Long_Short_LPT} 
Section~\ref{Section:Long_Short} proves that LPT is 5/3-approximate for long instances and optimal for short instances.
A key ingredient for establishing the ratio in the case of long instances is a concave relaxation for bounding idle machine time, \change{before the last job start}.
% Observe that an arbitrary instance $\langle m,\mathcal{J}\rangle$ admits a partitioning into a long instance $\langle m,\mathcal{J}^L \rangle$ and a short instance $\langle m,\mathcal{J}^S\rangle$, where $\mathcal{J}=\mathcal{J}^L\cup\mathcal{J}^S$.
% Therefore, algorithms for long and short instances can be converted into algorithms for arbitrary instances.
% Section~\ref{Section:SPT} considers \emph{Long Job Shortest Processing Time First Algorithm (LSPT)}, which schedules long jobs in non-decreasing order of processing times.
% For long instances, we show that LSPT achieves a better approximation ratio than LPT when the largest processing time $p_{\max}$ is relatively small.
%A key observation is that LSPT avoids any idle periods due to the BJSP constraint when scheduling long instances, but at the price of attaining low imbalance.
%{\color{red} 
Section~\ref{Section:Long_Short} also obtains an improved approximation ratio for long instances, when the maximum job processing time is relatively small, using the \emph{Shortest Processing Time First (SPT)} algorithm.
\change{For long instances, our analysis shows that LPT and SPT achieve low idle machine time after and before, respectively, the last job begins.
%, while SPT achieves low idle time before the last job begins. 
We leave determining the best trade-off between the two in order to %minimize the makespan 
%of the resulting schedule is minimized 
%is the challenge. 
obtain a better approximation ratio as an open question.}
%}

% Interestingly, SPT scheduling avoids any idle time until the last job begins in the case of long instances.
% Using this observation, Section~\ref{Section:SPT} obtains a tight 2-approximation ratio for a variant of SPT algorithm in the more general case with arbitrary jobs.
% Our findings demonstrate that LPT achieves low imbalance and possibly significant idle machine time, while the SPT performance is the other way round.

Greedy scheduling, e.g.\ LPT and SPT, which sequences long jobs first and short jobs next, or vice versa, cannot achieve an approximation ratio better than 2.
%This is the case for the LPT and SPT algorithms.
Section~\ref{Section:LSM} proposes \emph{Long-Short Mixing (LSM)}, which devotes a certain number of machines to long jobs and uses all remaining machines for short jobs.
%{\color{red} 
By executing the two job types in parallel, LSM \change{reduces the idle time before the last job begins and} achieves a 1.985-approximation ratio for a broad family of instances.
\change{
Carefully bounding idle time before the last job start by accounting for the parallel execution of long jobs with short job starts is the main technical difficulty behind our analysis.}
For degenerate instances with many very long jobs, we require constant-factor machine augmentation, i.e.\ $fm$ machines where $f>1$ is constant, to achieve a strictly lower than 2-approximation ratio.

Because Royal Mail delivery scheduling is subject to uncertainty, Section~\ref{Section:Uncertainty} exploits machine augmentation and lexicographic optimization for $P||C_{\max}$ under uncertainty \cite{Letsios2018,Skutella2016}
%a recent method for constructing robust solutions to $P||C_{\max}$ based on lexicographic optimization \cite{Letsios2018} 
to construct a two-stage robust optimization approach for the BJSP under uncertainty.
\change{We measure robustness based on the resource augmentation required for the final solution feasibility.
Our approach distinguishes which among different solutions is more robust.}
Section~\ref{Section:Numerical_Results} substantiates our \change{algorithms and} robust optimization approach empirically using Royal Mail data.
%data from Royal Mail delivery offices.
Section~\ref{Section:Conclusion} concludes with a collection of intriguing future directions.

%%%%%%%%%%%%%%%%%%%%%%%%%%%%%%%%%%%%%%%%%%%%%%%%%%%%%%%%%%%%%%%%

\section{Problem Definition and Preliminary Results}
\label{Section:Problem_Definition}

% This section describes and formulates the mail delivery scheduling problem as an integer program (IP).
% Section~\ref{Section:Numerical_Results} uses the proposed formulation for numerical evaluations.

% Section \ref{Section:Problem_Definition} describes the mail delivery scheduling problem. 
% Section \ref{Section:Integer_Program} presents an integer programming formulation.
% This section describes the mail delivery scheduling problem and presents an integer programming formulation two MILP formulations: a discrete-time MILP model and a continuous-time MILP model.

% \subsection{Problem Description}
% \label{Section:Problem_Definition}

An instance $I=\langle m,\mathcal{J}\rangle$ of the \emph{Bounded Job Start Scheduling Problem (BJSP)} is specified by a set $\mathcal{M}=\{1,\ldots,m\}$ of parallel identical machines and a set $\mathcal{J}=\{1,\ldots,n\}$ of jobs.
A machine may execute at most one job per unit of time.  
Job $j\in \mathcal{J}$ is associated with an integer processing time $p_j$.
Each job should be executed non-preemptively, i.e.\ in a continuous time interval without interruptions, by a single machine.
\emph{BJSP parameter} $g$ imposes an upper bound on the number of jobs that may begin per unit of time. 
The goal is to assign each job $j\in \mathcal{J}$ to a machine and decide its starting time so that this BJSP constraint is not violated and the makespan, i.e.\ the time at which the last job completes, is minimized. 
Consider a feasible schedule $\mathcal{S}$ with makespan $T$. 
We denote the start time of job $j$ by $s_j$.
Each job $j$ must be entirely executed during the interval $[s_j,C_j)$, where $C_j=s_j+p_j$ is the completion time of $j$.
So, $T=\max_{j\in\mathcal{J}}\{C_j\}$.
%We say that 
Job $j$ is \emph{alive} at time $t$ if $t\in[s_j,C_j)$.
Let $\mathcal{A}_t=\{j:[s_j,C_j)\cap[t-1,t)\neq\emptyset,j\in\mathcal{J}\}$ and $\mathcal{B}_t=\{j:s_j\in[t-1,t),j\in\mathcal{J}\}$ be the set of alive and beginning jobs during time unit $t$, respectively.
Schedule $\mathcal{S}$ is feasible only if $|\mathcal{A}_t|\leq m$ and $|\mathcal{B}_t|\leq g$, for all $t$. 
%Table \ref{tbl:notation} in Appendix \ref{s:notation} summarizes the notation.

BJSP is strongly $\mathcal{NP}$-hard because it becomes equivalent with $P||C_{\max}$ in the special case where $g=\min\{m,n\}$.
Theorem \ref{Theorem:NP_hardness} shows that BJSP is strongly $\mathcal{NP}$-hard also when $g=1$.

\begin{theorem}
\label{Theorem:NP_hardness}
BJSP is strongly $\mathcal{NP}$-hard in the special case $g=1$.
\end{theorem}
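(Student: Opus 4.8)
The plan is to reduce from 3-Partition, which is strongly $\mathcal{NP}$-complete. Given a 3-Partition instance with $3n$ positive integers $a_1,\ldots,a_{3n}$ and a bound $B$ satisfying $\sum_i a_i = nB$ and $B/4 < a_i < B/2$ for all $i$, I would construct a BJSP instance with $g=1$ that admits a schedule of makespan at most a target $T$ if and only if the $a_i$ partition into $n$ triples each summing to $B$. The instance uses $m=n$ machines and, for each $i$, a single job with processing time $p_i = \lambda a_i$ where $\lambda = 3n$; I set $T = \lambda B + n - 1$. Since 3-Partition is strongly $\mathcal{NP}$-complete the $a_i$ are polynomially bounded, and the scaling factor $\lambda$ is polynomial, so all processing times stay polynomially bounded, as required for strong $\mathcal{NP}$-hardness.

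For the (easier) backward direction I would argue via machine loads. The total processing time is $\lambda nB = n\cdot(\lambda B)$, so the loads average to $\lambda B$, and each machine's load is at most its completion time, hence at most $T = \lambda B + n - 1$. Crucially, every load is an integer multiple of $\lambda = 3n$, and the largest multiple of $\lambda$ not exceeding $\lambda B + n - 1$ is $\lambda B$ itself, since the next multiple $\lambda B + \lambda = \lambda B + 3n$ already exceeds $\lambda B + n - 1$. Thus every load is at most $\lambda B$, and as the loads average to $\lambda B$ each must equal exactly $\lambda B$; equivalently, the items on each machine sum to exactly $B$. The condition $B/4 < a_i < B/2$ then forces exactly three items per machine, recovering a valid 3-Partition.

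The forward direction is where the genuine difficulty of the $g=1$ constraint appears: given triples $S_1,\ldots,S_n$, I must schedule the corresponding jobs so that no two jobs ever start in the same unit interval, yet no machine finishes after $T$. I would run the three jobs of machine $k$ back-to-back with the first one starting at time $k-1$, so machine $k$ completes at $(k-1) + \lambda B \le T$. The observation that makes this $g=1$-feasible is a residue argument: because every processing time is a multiple of $\lambda = 3n$, every start time on machine $k$ is congruent to $k-1 \pmod{\lambda}$; these residues $0,1,\ldots,n-1$ are distinct (as $n \le \lambda$), so jobs on different machines never share a start time, while on a single machine the start times strictly increase. Hence all $3n$ start times are pairwise distinct and $|\mathcal{B}_t|\le 1$ holds, and $|\mathcal{A}_t|\le m$ is automatic since each machine runs one job at a time.

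I expect the main obstacle to be precisely this forward construction: naively staggering the start times to meet $g=1$ risks inflating the makespan beyond $T$ and breaking the equivalence. The resolution is to choose $\lambda$ large enough (i) to absorb the unavoidable $n-1$ staggering overhead without letting any load exceed $\lambda B$ in the backward direction, and (ii) to make the per-machine start-time residues distinct so that the $g=1$ constraint is satisfied essentially for free in the forward direction. Once these two roles of $\lambda$ are in place, verifying the remaining feasibility conditions is routine.
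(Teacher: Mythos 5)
Your proposal is correct and follows essentially the same route as the paper: a reduction from 3-Partition in which each item $a_i$ becomes a job of processing time $\lambda a_i$ with $\lambda$ polynomial in the input size, chosen so that the $O(n)$ staggering overhead forced by $g=1$ is strictly smaller than one scale unit $\lambda$, making the makespan threshold separate exact partitions from all others. Your bookkeeping differs only cosmetically (the divisibility-plus-averaging argument for the backward direction and the residue-class argument for distinct start times in the forward direction are slightly cleaner than the paper's counting of the $\leq 4$-job case and its greedy start-time assignment), but the underlying reduction and the role of the scaling factor are identical.
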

\begin{proof}
We present an $\mathcal{NP}$-hardness proof from 3-Partition.
Given a set $\mathcal{A}=\{a_1,\ldots,a_{3m}\}$ and a parameter $B\in\mathbb{Z}^+$ s.t.\ $a_j\in\mathbb{Z}^+$, $B/4\leq a_j\leq B/2$, for $j\in\{1,\ldots,3m\}$, and $\sum_{j=1}^{3m} a_j=mB$, the 3-Partition problem asks whether there exists a partition of $\mathcal{A}$ into $m$ subsets $\mathcal{S}_1,\ldots,\mathcal{S}_m$ s.t. $\sum_{j\in \mathcal{S}_i}a_j=B$ for each $i\in\{1,\ldots,m\}$.
Given an instance of 3-Partition, construct a BJSP instance $I=\langle m,\mathcal{J}\rangle$ with $n=3m$ jobs of processing time $p_j=n^2a_j$, for $j\in \{1,\ldots,3m\}$, and BJSP parameter $g=1$.
W.l.o.g., $n^2>3n$.
We show that $\mathcal{A}$ admits a 3-Partition iff there exists a feasible schedule $\mathcal{S}$ of makespan $T<n^2B+n^2$ for $I$.

%For the first direction, 
$\implies$: Suppose that $\mathcal{A}$ admits a 3-Partition $\mathcal{S}_1,\ldots,\mathcal{S}_m$.
Because $B/4< a_j< B/2$, for $j\in\{1,\ldots,3m\}$, $\mathcal{S}_i$ contains exactly three elements, i.e.\ $|\mathcal{S}_i|=3$, for each $i\in\{1,\ldots,m\}$.
We fix some arbitrary order $1,\ldots,3m$ of all jobs and construct a schedule $\mathcal{S}$ for $I$ where all jobs in $\mathcal{S}_i$ are executed by machine $i\in\mathcal{M}$.
The job starting times are decided greedily. 
In particular, let $T_i$ be the last job completion time in machine $i$ just before assigning job $j$. 
If no job has been assigned to $i$, then $T_i=0$.
We set $s_j$ equal to the earliest time slot after $T_i$ at which no job begins in any machine, i.e.\ $\min\{t:|\mathcal{B}_t|<1,t>T_i\}$. 
Now, let $T_i$ be the last completion time in machine $i$, once the greedy procedure has been completed. 
% The job starting times in $\mathcal{S}$ are decided in a greedy round-robin manner.
% For $i=1,\ldots,m$, the first job of the $i$-th machine is scheduled in the earliest time slot so that the BJS constraint is not violated. 
% This greedy policy is repeated for the second and third jobs of the machines.
Consider any job $j\in\mathcal{S}_i$ and let $j'\in\mathcal{S}_i$ be the last job executed before $j$ in machine $i$.
If no job is executed before $j$, then $s_j\leq n$.
Otherwise, by construction, we have $|\mathcal{B}_t|=1$, for every $t\in[C_{j'}+1,s_j-1]$.
Hence, $s_{j}-C_{j'}\leq n-1$.
Since $|\mathcal{S}_i|=3$ and $|\{t:|\mathcal{B}_t|=1, t\in\mathcal{D}\}|\leq n$, we conclude 
% Because of the BJS constraint $g=1$, at most one job must begin per unit of time.
% So, there can be at most $n$ idle slots before the execution of each job, due to the greedy round-robin policy.
% Since $\sum_{j\in \mathcal{S}_i}a_j=B$, we get that machine $i\in\mathcal{M}$ has completion time 
$T_i\leq \sum_{j\in \mathcal{S}_i}p_j+3n=n^2\left(\sum_{j\in \mathcal{S}_i}a_j\right)+3n= n^2B+3n< n^2B+n^2$.
So schedule $\mathcal{S}$ attains makespan $T<n^2B+n^2$.

%To the opposite direction, 
$\impliedby$: Suppose that there exists a feasible schedule $\mathcal{S}$ of makespan $T<n^2B+n^2$ for $I$.
We argue that each machine executes exactly three jobs.
Suppose for contradiction that machine $i\in\mathcal{M}$ executes a subset $\mathcal{S}_i$ of jobs with $|\mathcal{S}_i|\geq 4$.
Denote by $T_i=\max\{C_j:j\in\mathcal{S}_i\}$ the last job completion time in machine $i$.
Then, $T_i\geq\sum_{j\in\mathcal{S}_i}p_j=n^2\left(\sum_{j\in\mathcal{S}_i}a_j\right)$.
Because $a_j\in\mathbb{Z}^+$ and $a_j>B/4$, it must be the case that $\sum_{j\in\mathcal{S}_i}a_j\geq B+1$.
Hence, $T_i\geq n^2B+n^2$, which is a contradiction on the fact that $T_i\leq T$.
Thus, schedule $\mathcal{S}$ defines a partitioning of the jobs into $m$ subsets $\mathcal{S}_1,\ldots,\mathcal{S}_m$ s.t.\ $|\mathcal{S}_i|=3$, for each $i\in\mathcal{M}$.
We claim that $\sum_{j\in \mathcal{S}_i}a_i=B$.
Otherwise, there would be a machine $i\in\mathcal{M}$ with $\sum_{j\in \mathcal{S}_i}a_i\geq B+1$ and we would obtain a contradiction using similar reasoning to before.
We conclude that $\mathcal{A}$ admits a 3-Partition.
\end{proof}

% \subsection{Discrete-Time MILP Formulation}
% \label{Section:Integer_Program}

Next, we investigate the integrality gap of a natural integer programming formulation.
To obtain this integer program, we partition the time horizon into a set $D=\{1,\ldots, \tau\}$ of unit-length discrete time slots. 
Time slot $t\in D$ corresponds to time interval $[t-1,t)$. 
We may na\"ively choose $\tau=\sum_{j\in\mathcal{J}}p_j$, but smaller $\tau$ values are possible using tighter makespan upper bounds.
For simplicity, this manuscript assumes discrete time intervals $[s,t]=\{s,s+1,\ldots,t-1,t\}$, i.e.\ of integer length.
Interval $[1,\tau]$ is the \emph{time horizon}.
% For each $i\in I$, duration $d_i$ is rounded to $\lceil d_i/\lambda\rceil$.
%The main idea of 
In integer programming Formulation (\ref{Eq:DT_Model}), binary variables decide a starting time for each job.
Binary variable $x_{j,s}$ is 1 if job $j\in \mathcal{J}$ begins at time slot $s\in D$, and 0 otherwise. 
Continuous variable $T$ corresponds to the makespan.
If job $j$ starts at $s$, then it is performed exactly during the time slots $s,s+1,\ldots,s+p_j-1$.
Hence, job $j$ is alive at time slot $t$ iff it has begun at one among the time slots in the set $A_{j,t}=\{t-p_j+1,t-p_j+2,\ldots,t\}$.
To complete before the time horizon ends, job $j$ must begin at a time slot in the set $F_j=\{1,2,\ldots,\tau-p_j+1\}$.
Finally, denote by $\mathcal{J}_s=\{j:s\in \mathcal{F}_j,j\in\mathcal{J}\}$ the eligible subset of jobs at $s$, i.e.\ the ones that may be feasibly begin at time slot $s$ without exceeding the time horizon.
Formulation (\ref{Eq:DT_Model}) models the BJSP problem.
%We denote by $c=\delta/g$, the throughput capacity. 
%For each time slot $t$, at most $\lceil c\rceil$ vehicles may depart during the time slots $C_t=\{t,t+1,\ldots,t-1+\lceil1/c\rceil\}$. 
%It suffices to distinguish two cases, based on whether $c\geq 1$, or $c<1$.
%
\begin{subequations}
\label{Eq:DT_Model}
\begin{align}
\min_{x_{j,s},\;T} \quad & T \label{Eq:D_Objective} \\
& T \geq x_{j,s}(s+p_j) & j\in\mathcal{J}, s\in D \label{Eq:D_Makespan} \\
& \sum_{j\in \mathcal{J}} \sum_{\substack{s\in A_{j,t}}} x_{j,s} \leq m & t\in D \label{Eq:D_Vehicles_Count} \\
& \sum_{s\in F_j} x_{j,s} = 1 & j\in \mathcal{J} \label{Eq:D_Itinerary_Assignment} \\
& \sum_{j\in \mathcal{J}_s} x_{j,s} \leq g & s\in D \label{Eq:D_Throughput} \\ 
& x_{j,s}\in\{0,1\} & j\in \mathcal{J}, s\in F_j \label{Eq:Integrality}
\end{align}
\end{subequations} 

Expression (\ref{Eq:D_Objective}) minimizes makespan.
Constraints (\ref{Eq:D_Makespan}) enforce that the makespan is equal to the last job completion time.
Constraints (\ref{Eq:D_Vehicles_Count}) ensure that at most $m$ machines are used at each time slot $t$. 
Constraints (\ref{Eq:D_Itinerary_Assignment}) requires that each job $j$ is scheduled.
Constraints (\ref{Eq:D_Throughput}) express the BJSP constraint.

%As a preliminary result, 
Theorem~\ref{Thm:Integrality_Gap} shows that the fractional relaxation obtained by replacing Eq.\ (\ref{Eq:Integrality}) with the constraints $0\leq x_{j,s}\leq 1$, for $j\in\mathcal{J}$ and $s\in F_j$, has a non-constant integrality gap.
Thus, stronger linear programming (LP) relaxations are required for obtaining constant-factor approximation algorithms with LP rounding.

\begin{theorem}
\label{Thm:Integrality_Gap}
The fractional relaxation of integer programming formulation (\ref{Eq:DT_Model}) has integrality gap $\Omega(\log n)$.
\end{theorem}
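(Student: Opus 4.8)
The plan is a randomized-rounding argument establishing the \emph{upper} bound on the gap. Since dropping~(\ref{Eq:Integrality}) turns Formulation~(\ref{Eq:DT_Model}) into a relaxation, its optimum $T^{\mathrm{LP}}$ obeys $T^{\mathrm{LP}}\le T^{\mathrm{IP}}$, where $T^{\mathrm{IP}}$ is the integer optimum, so the integrality gap equals $T^{\mathrm{IP}}/T^{\mathrm{LP}}$ and it suffices to take an arbitrary optimal fractional solution $x^\ast$ of value $T^{\mathrm{LP}}$ and build from it a \emph{feasible integer} schedule of makespan $O(\log n)\cdot T^{\mathrm{LP}}$. First I would record what $x^\ast$ certifies. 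Constraint~(\ref{Eq:D_Makespan}) bounds the fractional completion times of each job and thus confines the bulk of every job's mass to an initial horizon of length $H=O(T^{\mathrm{LP}})$; truncating the negligible late mass and renormalizing each distribution keeps~(\ref{Eq:D_Itinerary_Assignment}) tight at the cost of only a constant factor in the right-hand sides of~(\ref{Eq:D_Vehicles_Count}) and~(\ref{Eq:D_Throughput}). In particular $T^{\mathrm{LP}}\ge\max_j p_j$, and combining the confinement with~(\ref{Eq:D_Throughput}) (total start mass $\sum_{s}\sum_{j\in\mathcal J_s}x^\ast_{j,s}=n$ spread at rate $\le g$ per slot over $O(T^{\mathrm{LP}})$ early slots) yields $T^{\mathrm{LP}}=\Omega(n/g)$. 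These two lower bounds will make the final horizon estimate consistent.

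Next I would round. Independently for each job $j$ draw a start slot $S_j=s$ with probability $x^\ast_{j,s}$, which is legitimate because $\sum_{s}x^\ast_{j,s}=1$ by~(\ref{Eq:D_Itinerary_Assignment}). Fix a slot $t$ and set $N_t=\sum_{j}\sum_{s\in A_{j,t}}\mathbb{1}[S_j=s]$ and $M_t=\sum_{j\in\mathcal J_t}\mathbb{1}[S_j=t]$, the numbers of alive and of beginning jobs at $t$. By construction $\mathbb{E}[N_t]=\sum_{j}\sum_{s\in A_{j,t}}x^\ast_{j,s}\le m$ and $\mathbb{E}[M_t]=\sum_{j\in\mathcal J_t}x^\ast_{j,t}\le g$, using~(\ref{Eq:D_Vehicles_Count}) and~(\ref{Eq:D_Throughput}). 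Since the $S_j$ are independent, $N_t$ and $M_t$ are sums of independent indicators, so a Chernoff bound together with a union bound over the at most $\tau=\mathrm{poly}(n)$ slots gives, with high probability and simultaneously for every $t$, the additive guarantees $N_t\le m+O(\log n)$ and $M_t\le g+O(\log n)$. Conditioning on such an outcome, the rounded assignment violates the machine budget $m$ and the start budget $g$ by only $O(\log n)$ at each slot.

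Finally I would repair these additive violations by a list-scheduling / stretching step: scanning slots left to right and maintaining a queue of jobs whose rounded start has been reached, release at most $g$ jobs per slot onto at most $m$ free machines and defer the $O(\log n)$ excess of each slot. Because every slot contributes an overflow of only $O(\log n)$ and all original starts lie within the horizon $H=O(T^{\mathrm{LP}})$, the total deferral needs $O(\log n)\cdot H/g$ additional slots, so the makespan grows to $H+O(\log n)\cdot H=O(\log n)\cdot T^{\mathrm{LP}}$ (the bound $T^{\mathrm{LP}}=\Omega(n/g)$ ensures $n/g=O(T^{\mathrm{LP}})$, so this horizon indeed accommodates all $n$ jobs), which yields the claimed $O(\log n)$ gap. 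The main obstacle is the repair rather than the rounding: in the regime of interest $g=1$ (and $m$ may also be small), the \emph{multiplicative} Chernoff bound gives no concentration, so one must argue through additive deviations and then show that re-sequencing the $O(\log n)$ per-slot overflow of \emph{non-preemptive} jobs — which cannot be compressed — inflates the length by at most a logarithmic factor while honoring the machine and job-start budgets simultaneously. Controlling this queue/backlog growth is where the quantitative $\log n$ must be extracted carefully, and it is exactly this factor that the companion $\Omega(\log n)$ construction shows cannot be avoided.
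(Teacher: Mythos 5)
There is a genuine mismatch here, both in the direction of the claim you prove and in the validity of the steps. Despite the literal ``$O(\log n)$'' in the statement, what the paper actually establishes (and what the surrounding text confirms is intended: the relaxation ``has a non-constant integrality gap,'' hence ``stronger LP relaxations are required'') is a \emph{lower bound}: an explicit instance on which the gap is $\Theta(\log n)$. The paper's proof takes $n=m$ unit jobs with $g=m$ and spreads each job harmonically, $x_{j,s}=1/(s\sum_{t=1}^{\tau}\tfrac{1}{t})$; constraint (\ref{Eq:D_Makespan}) is then satisfied with $T=\Theta(1/\log n)$ for $\tau=\mathrm{poly}(n)$, while every integral schedule has constant makespan, giving a $\Theta(\log n)$ gap. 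Your proposal instead attacks the converse, upper-bound direction by randomized rounding and relegates the construction to an unproven ``companion $\Omega(\log n)$'' remark --- but that construction \emph{is} the theorem. Even if your rounding were sound, it would not prove what the paper proves.

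Moreover, your rounding argument is unsound, and the upper bound it targets is in fact false for this relaxation. Every LP lower bound you invoke fails because (\ref{Eq:D_Makespan}) has the product form $T\geq x_{j,s}(s+p_j)$ rather than forcing $T\geq s+p_j$ whenever $x_{j,s}>0$. The paper's own instance refutes all three of your claims simultaneously: there $\max_j p_j=1$ and $n/g=1$, yet $T^{\mathrm{LP}}=\Theta(1/\log n)$, so neither $T^{\mathrm{LP}}\geq\max_j p_j$ nor $T^{\mathrm{LP}}=\Omega(n/g)$ holds, and each job's mass is spread over the \emph{entire} horizon rather than confined to $O(T^{\mathrm{LP}})$ slots --- indeed a window of length $O(T^{\mathrm{LP}})$ is less than one slot, so your truncate-and-renormalize step would delete essentially all of the mass, not a constant fraction. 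No repair can rescue the approach, because the $O(\log n)$ upper bound itself is false: take $n-1$ unit jobs plus one job of length $p=n^2$, with $m=g=n$ and the naive horizon $\tau=\sum_j p_j$. The long job has $F_j=\{1,\ldots,n\}$, and spreading $x_{j,s}=T/(s+p)$ over these slots yields a feasible fractional solution with $T^{\mathrm{LP}}\leq (p+n)/n\approx n$, while any integral schedule has makespan at least $p+1\approx n^2$; the gap is $\Theta(n)$. This weakness of (\ref{Eq:D_Makespan}) is precisely the point of the theorem: the natural LP cannot support any polylogarithmic-factor rounding, which is why the paper exhibits the bad instance and concludes that stronger relaxations are needed.
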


%\begin{proof} %{Appendix \ref{Appendix:Fractional_IntegralityGap}}
\begin{proof}
Consider an instance with $m$ machines,  $n=m$ jobs of processing time $p_j=1$ for each $j\in\mathcal{J}$, and BJSP parameter $g=m$.
For this instance, the LP solution sets $x_{j,s}=1/(s\cdot\sum_{t=1}^{\tau}\frac{1}{t})$ for each $j,s$.
The LP fractional solution is feasible as at each time, no more than $m$ job pieces are feasibly executed (and begin), while the cost is $\max\{sx_{j,s}\}=1/\sum_t\frac{1}{t}$.
On the contrary, the optimal integral solution has makespan 1. 
\end{proof}

\section{LPT Algorithm}
\label{Section:LPT}

Longest Processing Time first algorithm (LPT) schedules the jobs on a fixed number $m$ of machines w.r.t.\ the order $p_1\geq\ldots\geq p_n$.
Recall that $|\mathcal{A}_t|$ and $|\mathcal{B}_t|$ is the number of alive and beginning jobs, respectively, at time slot $t\in D$.
We say that time slot $t\in D$ is \emph{available} if $|\mathcal{A}_t|<m$ and $|\mathcal{B}_t|<g$.
LPT schedules the jobs greedily w.r.t.\ their sorted order. 
Each job $j$ is scheduled in the earliest available time slot, i.e.\ at $s_j=\min\{t:|\mathcal{A}_t|<m, |\mathcal{B}_t|<g, t\in D\}$.
Theorem \ref{Thm:Naive_LPT} proves a tight approximation ratio of 2 for LPT.

\begin{theorem}
LPT is 2-approximate for minimizing makespan and this ratio is tight. 
\label{Thm:Naive_LPT}
\end{theorem}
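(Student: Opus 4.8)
The plan is to bound the LPT makespan $T$ by twice the optimal makespan $T^{\ast}$ via two \emph{complementary} lower bounds, and then to exhibit a family of instances on which this factor is asymptotically attained.

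For the upper bound, let $j^{\ast}$ be a job completing last, so $T=s_{j^{\ast}}+p_{j^{\ast}}$, and write $r=s_{j^{\ast}}$. The starting point is that, since LPT assigns $j^{\ast}$ to the \emph{earliest} available slot, every slot $t\in\{1,\dots,r\}$ was unavailable when $j^{\ast}$ was placed, i.e.\ satisfied $|\mathcal{A}_t|=m$ or $|\mathcal{B}_t|=g$. I would partition these slots into the set $M$ of \emph{machine-full} slots ($|\mathcal{A}_t|=m$) and the set $S$ of the remaining, \emph{start-full} slots ($|\mathcal{B}_t|=g$ but $|\mathcal{A}_t|<m$), so that $r=|M|+|S|$. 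Because the counts $|\mathcal{A}_t|,|\mathcal{B}_t|$ only increase as LPT proceeds, a slot that is blocked when $j^{\ast}$ is placed stays blocked; hence no job placed after $j^{\ast}$ begins before $r$, so every job starting in an $S$-slot precedes $j^{\ast}$ in the LPT order and therefore has processing time at least $p_{j^{\ast}}$.

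I would then derive the two bounds. Counting busy machine-time, the $M$-slots contribute $m|M|$ units of work while $j^{\ast}$ contributes a further $p_{j^{\ast}}$ on a disjoint interval, giving $\sum_{j}p_j\ge m|M|+p_{j^{\ast}}$ and thus $T^{\ast}\ge |M|+p_{j^{\ast}}/m$. Counting job starts, the $S$-slots hold $g|S|$ jobs of length at least $p_{j^{\ast}}$ which, with $j^{\ast}$ itself, yield at least $g|S|+1$ jobs of length at least $p_{j^{\ast}}$; since any feasible schedule starts at most $g$ jobs per slot, the last of them begins no earlier than time $|S|$, so $T^{\ast}\ge |S|+p_{j^{\ast}}$. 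Adding the two inequalities gives $2T^{\ast}\ge |M|+|S|+p_{j^{\ast}}+p_{j^{\ast}}/m\ge T$, the claimed ratio (the leftover term $p_{j^{\ast}}/m$ even recovers a $(2-1/m)$-type refinement).

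For tightness I would build, for growing $m$, an instance with $g=\Theta(\sqrt{m})$ consisting of $\Theta(m)$ jobs of length $\Theta(m)$ together with $\Theta(m^{3/2})$ unit jobs, with the magnitudes tuned so that the work bound and the start-rate bound on the optimum coincide. Because LPT runs all long jobs first, they saturate the $m$ machines; being machine-full throughout, this phase blocks every unit job until the long jobs finish, after which the unit jobs begin at rate $g$ on otherwise idle machines. Thus LPT spends one makespan-worth of time on the long jobs and a second, equally long, start-limited phase on the unit jobs, whereas the optimum \emph{interleaves} the two types, slotting unit jobs into the machine-time the long jobs do not use so that all machines stay busy and the makespan is essentially halved. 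The step I expect to be the main obstacle is exactly this last one: verifying that the interleaved optimal schedule is genuinely feasible, i.e.\ that the unit jobs can be inserted among the long jobs without ever exceeding the per-slot start bound $g$ (the constraint LPT fails to exploit), so that the optimal makespan stays within a $1+o(1)$ factor of the common lower bound and the ratio indeed tends to $2$.
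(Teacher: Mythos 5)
Your upper-bound argument is correct and is essentially the paper's proof in different clothing: your split of the slots before $s_{j^\ast}$ into machine-full slots $M$ and start-full slots $S$ is exactly the paper's observation that every $t\le s_\ell$ with $|\mathcal{A}_t|<m$ must have $|\mathcal{B}_t|=g$; your work bound $T^\ast\ge |M|+p_{j^\ast}/m$ is the packing bound $T^\ast\ge\frac{1}{m}\sum_j p_j$; and your pigeonhole bound $T^\ast\ge |S|+p_{j^\ast}$ is the paper's $T^\ast\ge\lceil j/g\rceil+p_j$ (the paper first bounds $|S|\le\lceil \ell/g\rceil$ and then invokes that inequality at $j=\ell$; you shortcut the intermediate step, which is fine). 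Apart from an off-by-one ($s_{j^\ast}$ itself \emph{was} available when $j^\ast$ was placed, so $|M|+|S|=s_{j^\ast}-1$), this half is sound.

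The tightness construction has a genuine gap, and it is not only the feasibility check you flagged: with $\Theta(m)$ long jobs of length $\Theta(m)$ the ratio cannot tend to $2$. Write $K=cm$ for the number of long jobs, all of length $L$, with $c=\Theta(1)$. For the optimum to interleave, it must keep at least one machine free of long jobs during the long phase: while all $m$ machines run long jobs, $|\mathcal{A}_t|=m$ and no unit job can run (if the optimum does use all $m$ machines for long jobs, it blocks unit jobs for essentially the whole long phase just as LPT does, and LPT becomes near-optimal). But $cm$ identical jobs of length $L$ on at most $m-1$ machines force some machine to carry $c+1$ of them, so $T^\ast\ge (c+1)L$. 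On the other hand the start-rate bound gives $N\le gT^\ast$ for the number of unit jobs, so LPT's makespan is at most $cL+N/g+o(T^\ast)\le cL+T^\ast+o(T^\ast)\le\bigl(\tfrac{c}{c+1}+1\bigr)T^\ast+o(T^\ast)=\bigl(2-\tfrac{1}{c+1}\bigr)T^\ast+o(T^\ast)$, bounded away from $2$ for constant $c$. (The same obstruction kills the $c=1$ variant where each machine runs unit jobs before and after its single long job: the union of free slots has size at most $2\Delta$ where $T^\ast=L+\Delta$, so $N\le 2g\Delta$; but then the start-rate bound forces $2\Delta\le T^\ast$, i.e.\ $\Delta\le L$, capping the ratio at $3/2$.)

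The fix is to take $\omega(1)$ long jobs \emph{per machine}, so that the ``one extra round'' the optimum pays for reserving machines becomes negligible. This is precisely what the paper's instance does: $g=1$, $m(m-1)$ long jobs of length $p=\omega(m)$ (so $c=m-1$ rounds, and the per-round stagger $m$ is lower order relative to $p$), and $m(p-m)$ unit jobs. There LPT takes $(m-1)p$ for the long phase plus $\approx mp$ for the start-limited unit phase, while the optimum runs the long jobs in $m$ rounds on $m-1$ machines and fills machine $m$ and the spare start capacity with unit jobs, achieving $T^\ast=mp$; the ratio is $2-1/m$, exactly the $2-1/(c+1)$ ceiling above with $c=m-1\to\infty$. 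Your instance needs to be re-parametrized along these lines (e.g.\ $\omega(m)$ long jobs, or length $\omega(m/g)$ per round with $\omega(1)$ rounds) before the interleaving argument you sketch can go through.
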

\begin{proof}
Denote by $\mathcal{S}$ and $\mathcal{S}^*$ the LPT and a minimum makespan schedule, respectively.
Let $\ell$ be the job completing last in $\mathcal{S}$, i.e.\ $T=s_{\ell}+p_{\ell}$.
For each time slot $t\leq s_{\ell}$, either $|\mathcal{A}_t|=m$, or $|\mathcal{A}_t|<m$.
Since $\ell$ is scheduled at the earliest available time slot, for each $t\leq s_{\ell}$ s.t. $|\mathcal{A}_t|<m$, we have $|\mathcal{B}_t|=g$.
Let $\lambda$ be the total length of time s.t.\ $|\mathcal{A}_t|<m$ in $\mathcal{S}$.
Because of the BJSP constraint, exactly $g$ jobs begin per unit of time, which implies that $\lambda\leq \lceil\frac{\ell}{g}\rceil$. 
Therefore, schedule $\mathcal{S}$ has makespan:
\begin{equation*}
T = s_{\ell} + p_{\ell} \leq \frac{1}{m}\sum_{j\neq \ell} p_j + \lambda + p_{\ell} \leq \frac{1}{m}\sum_{j=1}^n p_j + 
\left(\left\lceil\frac{\ell}{g}\right\rceil+p_{\ell}\right).
\end{equation*}
Denote by $s_j^*$ the starting time of job $j$ in $\mathcal{S}^*$ and let $\pi_1,\ldots,\pi_n$ the job indices ordered in non-decreasing schedule $\mathcal{S}^*$ starting times, i.e.\ 
$s_{\pi_1}^*\leq\ldots\leq s_{\pi_n}^*$.
Because of the BJSP constraint, $s_{\pi_j}^*\geq\lceil j/g\rceil$. 
In addition, there exists $j'\in[j,n]$ s.t.\ $p_{\pi_{j'}}>p_j$.
Thus, $\max_{j'=j}^n\{s_{\pi_{j'}^*}+p_{\pi_{j'}}\}\geq \lceil j/g\rceil+p_j$, for $j=1,\ldots,n$.
%and the LPT ordering, $s_{\pi_j}^*+p_{\pi_j}\geq \lceil j/g\rceil+p_j$ for each $j\in\mathcal{J}$.
Hence, $\mathcal{S}^*$ has makespan:
\begin{equation*}
T^* \geq \max\left\{\frac{1}{m}\sum_{j=1}^np_j,\max_{j=1}^n\left\{\left\lceil\frac{j}{g}\right\rceil+p_j\right\}\right\}.
\end{equation*}
We conclude that $T\leq 2T^*$.

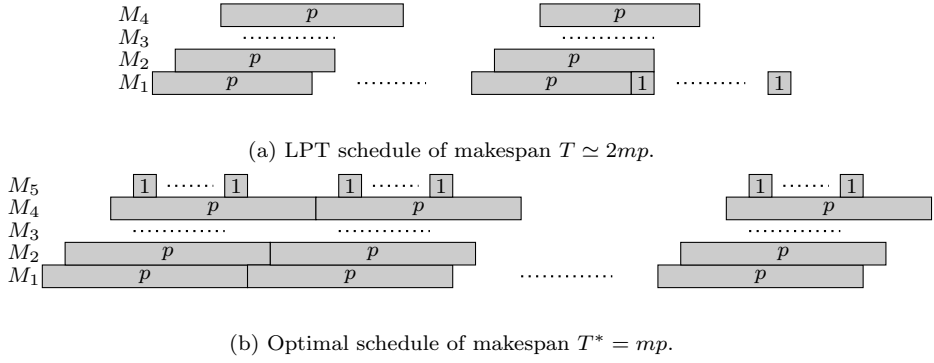
\begin{figure}[t]
\begin{subfigure}[t]{\textwidth}
\begin{center}
\begin{tikzpicture}[scale=0.3]

\node (m1) at (-0.8,0.5) {\footnotesize $M_1$};
\node (m2) at (-0.8,1.5) {\footnotesize $M_2$};
\node (m3) at (-0.8,2.5) {\footnotesize $M_3$};
\node (m4) at (-0.8,3.5) {\footnotesize $M_4$};

%------------------------------------------------------

\filldraw[fill=black!20!white] (0,0) rectangle (7,1);
\node at (3.5,0.5) {\footnotesize $p$};

\filldraw[fill=black!20!white] (1,1) rectangle (8,2);
\node at (4.5,1.5) {\footnotesize $p$};

\draw[dotted, thick] (4,2.5) -- (8,2.5);

\filldraw[fill=black!20!white] (3,3) rectangle (11,4);
\node at (7,3.5) {\footnotesize $p$};

%------------------------------------------------------

\draw[dotted, thick] (9,0.5) -- (12,0.5);

%------------------------------------------------------

\filldraw[fill=black!20!white] (14,0) rectangle (21,1);
\node at (17.5,0.5) {\footnotesize $p$};

\filldraw[fill=black!20!white] (15,1) rectangle (22,2);
\node at (18.5,1.5) {\footnotesize $p$};

\draw[dotted, thick] (18,2.5) -- (22,2.5);

\filldraw[fill=black!20!white] (17,3) rectangle (24,4);
\node at (20.5,3.5) {\footnotesize $p$};

%------------------------------------------------------

\filldraw[fill=black!20!white] (21,0) rectangle (22,1);
\node at (21.5,0.5) {\footnotesize 1};

\draw[dotted, thick] (23,0.5) -- (26,0.5);

\filldraw[fill=black!20!white] (27,0) rectangle (28,1);
\node at (27.5,0.5) {\footnotesize 1};

\end{tikzpicture}
\end{center}
\caption{LPT schedule of makespan $T\simeq 2mp$.}
\end{subfigure}
\begin{subfigure}[t]{\textwidth}
\begin{center}
\begin{tikzpicture}[scale=0.3]

\node (m1) at (-0.8,0.5) {\footnotesize $M_1$};
\node (m2) at (-0.8,1.5) {\footnotesize $M_2$};
\node (m3) at (-0.8,2.5) {\footnotesize $M_3$};
\node (m4) at (-0.8,3.5) {\footnotesize $M_4$};
\node (m4) at (-0.8,4.5) {\footnotesize $M_5$};

%------------------------------------------------------

\filldraw[fill=black!20!white] (0,0) rectangle (9,1);
\node at (4.5,0.5) {\footnotesize $p$};

\filldraw[fill=black!20!white] (1,1) rectangle (10,2);
\node at (5.5,1.5) {\footnotesize $p$};

\draw[dotted, thick] (4,2.5) -- (8,2.5);

\filldraw[fill=black!20!white] (3,3) rectangle (12,4);
\node at (7.5,3.5) {\footnotesize $p$};

\filldraw[fill=black!20!white] (4,4) rectangle (5,5);
\node at (4.5,4.5) {\footnotesize 1};

\draw[dotted, thick] (5.5,4.5) -- (7.5,4.5);

\filldraw[fill=black!20!white] (8,4) rectangle (9,5);
\node at (8.5,4.5) {\footnotesize 1};

%------------------------------------------------------

\filldraw[fill=black!20!white] (9,0) rectangle (18,1);
\node at (13.5,0.5) {\footnotesize $p$};

\filldraw[fill=black!20!white] (10,1) rectangle (19,2);
\node at (14.5,1.5) {\footnotesize $p$};

\draw[dotted, thick] (13,2.5) -- (17,2.5);

\filldraw[fill=black!20!white] (12,3) rectangle (21,4);
\node at (16.5,3.5) {\footnotesize $p$};

\filldraw[fill=black!20!white] (13,4) rectangle (14,5);
\node at (13.5,4.5) {\footnotesize 1};

\draw[dotted, thick] (14.5,4.5) -- (16.5,4.5);

\filldraw[fill=black!20!white] (17,4) rectangle (18,5);
\node at (17.5,4.5) {\footnotesize 1};

%------------------------------------------------------

\draw[dotted, thick] (21,0.5) -- (25,0.5);

%------------------------------------------------------

\filldraw[fill=black!20!white] (27,0) rectangle (36,1);
\node at (31.5,0.5) {\footnotesize $p$};

\filldraw[fill=black!20!white] (28,1) rectangle (37,2);
\node at (32.5,1.5) {\footnotesize $p$};

\draw[dotted, thick] (31,2.5) -- (35,2.5);

\filldraw[fill=black!20!white] (30,3) rectangle (39,4);
\node at (34.5,3.5) {\footnotesize $p$};

\filldraw[fill=black!20!white] (31,4) rectangle (32,5);
\node at (31.5,4.5) {\footnotesize 1};

\draw[dotted, thick] (32.5,4.5) -- (34.5,4.5);

\filldraw[fill=black!20!white] (35,4) rectangle (36,5);
\node at (35.5,4.5) {\footnotesize 1};

%------------------------------------------------------

\end{tikzpicture}
\end{center}
\caption{Optimal schedule of makespan $T^*=mp$.}
\end{subfigure}
\caption{\change{
BJSP instance with $g = 1$ for the tightness of the LPT 2-approximation ratio, with $m$ machines, $\Theta(m^2)$ jobs of processing time $p$ and $\Theta(mp)$ unit-length jobs such that $p=\omega(m)$ and $m=\omega(1)$.}}
\label{Figure:LPT_Tightness}
\end{figure}

\change{ Figure~\ref{Figure:LPT_Tightness} illustrates a tightness example of our analysis for LPT.}
Consider an instance $I=\langle m,\mathcal{J}\rangle$ with $m(m-1)$ long jobs of processing time $p$, where $p=\omega(m)$ and $m=\omega(1)$, $m(p-m)$ unit jobs, and BJSP parameter $g=1$.
LPT schedules the long jobs into $m-1$ \change{ groups}, each one with exactly $m$ . 
All jobs of a \change{ group} are executed in parallel for their greatest part. 
In particular, the $i$-th job of the $k$-th \change{ groups} is executed by machine $i$ starting at time slot $(k-1)p+i$.
All unit jobs are executed sequentially by machine $1$ starting at $(m-1)p+1$.
Observe that $\mathcal{S}$ is feasible and has makespan $T=(m-1)p+m(p-m)=(2m-1)p-m^2$.
The optimal solution $\mathcal{S}^*$ schedules all jobs in $m$ \change{ groups}.
The $k$-th \change{ group} contains $(m-1)$ long jobs and $(p-m+1)$ unit jobs.
Specifically, the $i$-th long job is executed by machine $i$ beginning at $(k-1)p+i$, while all short jobs are executed consecutively by machine $m$ starting at $(k-1)p+m$ and completing at $kp$.
Schedule $\mathcal{S}^*$ is feasible and has makespan $T^*=mp$.
Because $\frac{m}{p}\rightarrow0$ and $\frac{1}{m}\rightarrow0$, i.e.\ both approach zero, $T\rightarrow 2T^*$.
%Since $\frac{m}{p}\simeq 0$ and $\frac{1}{m}\simeq 0$, we conclude that $T\simeq 2T^*$.
%Appendix~\ref{Appendix:LPT_Tightness} illustrates this tightness example.
\end{proof}

\section{Long and Short Instances}
\label{Section:Long_Short}

This section assumes that $g = 1$, but several of the arguments can be extended to arbitrary $g$.
From an application viewpoint, any Royal Mail instance can be converted to $g=1$ using small discretization. 

\subsection{Longest Processing Time First}
\label{Section:Long_Short_LPT}

%With the aim of bounding the total idle time of BJS feasible schedules, 
We consider two natural classes of BJSP instances for which LPT achieves an approximation ratio better than 2.
Instance $\langle m,\mathcal{J}\rangle$ is (i) \emph{long} if $p_j\geq m$ for each $j\in \mathcal{J}$ and (ii) \emph{short} if $p_j<m$ for every $j\in \mathcal{J}$.
This section proves that LPT is 5/3-approximate for long instances and optimal for short instances.
\change{ Intuitively, LPT schedules for long instances contain a significant amount of time without job starts, where all machines execute long jobs in parallel. LPT schedules for short instances have no time where all machines simultaneously execute jobs in parallel, because of the BJSP constraint and the fact that all jobs are short. In this case, the number of job starts is significant compared to the overall makespan. Using these observations, we are able to obtain better than 2-approximate schedules for these two classes of instances.}

Consider a feasible schedule $\mathcal{S}$ and let $r=\max_{j\in\mathcal{J}}\{s_j\}$ be the last job start time.
%suppose that job $\ell\in\mathcal{J}$ begins last, i.e.\ $s_{\ell}\geq s_j$ for each $j\in\mathcal{J}$, and let $B=s_{\ell}.$ 
%Recall that $S$ is feasible if it satisfies IP constraints (\ref{Eq:D_Vehicles_Count})-(\ref{Eq:Integrality}).
We say that $\mathcal{S}$ is a \emph{compact schedule} if it holds that either (i) $|\mathcal{A}_t|=m$, or 
(ii) $|\mathcal{B}_t|=1$, for each $t\in[1,r]$.
%Based on Lemma~\ref{Lemma:Initial_Idle}, we may restrict our attention to the class of compact schedules. 
%Furthermore, Lemma~\ref{Lemma:Initial_Idle} derives a lower bound on the optimal makespan.
Lemma~\ref{Lemma:Initial_Idle} shows the existence of an optimal compact schedule and derives a lower bound on the optimal makespan.

% \begin{figure}[t]
% \begin{center}
% \input{idle_bound.tex}
% \end{center}
% \caption{\color{red} Structure of a compact schedule. At each time $t$ until the last job starting time $b$, either all machines process a job, or a job begins.}
% \label{Figure:Compact_Schedule}
% \end{figure}

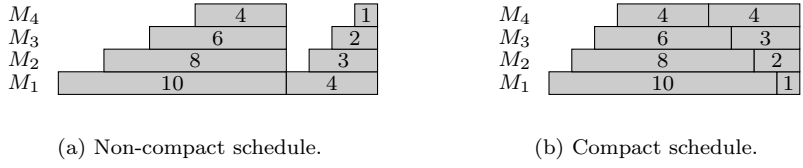
\begin{figure}[t]
\begin{subfigure}[t]{0.5\textwidth}
\begin{center}
\begin{tikzpicture}[scale=0.3]

\node (m1) at (-1.5,0.5) {\footnotesize $M_1$};
\node (m2) at (-1.5,1.5) {\footnotesize $M_2$};
\node (m4) at (-1.5,2.5) {\footnotesize $M_3$};
\node (m4) at (-1.5,3.5) {\footnotesize $M_4$};

%------------------------------------------------------

\filldraw[fill=black!20!white] (0,0) rectangle (10,1);
\node at (5,0.5) {\footnotesize $10$};

\filldraw[fill=black!20!white] (2,1) rectangle (10,2);
\node at (6,1.5) {\footnotesize $8$};

\filldraw[fill=black!20!white] (4,2) rectangle (10,3);
\node at (7,2.5) {\footnotesize $6$};

\filldraw[fill=black!20!white] (6,3) rectangle (10,4);
\node at (8,3.5) {\footnotesize $4$};

\filldraw[fill=black!20!white] (10,0) rectangle (14,1);
\node at (12,0.5) {\footnotesize $4$};

\filldraw[fill=black!20!white] (11,1) rectangle (14,2);
\node at (12.5,1.5) {\footnotesize $3$};

\filldraw[fill=black!20!white] (12,2) rectangle (14,3);
\node at (13,2.5) {\footnotesize $2$};

\filldraw[fill=black!20!white] (13,3) rectangle (14,4);
\node at (13.5,3.5) {\footnotesize $1$};

%------------------------------------------------------

\end{tikzpicture}
\end{center}
\caption{Non-compact schedule.}
\end{subfigure}
\begin{subfigure}[t]{0.5\textwidth}
\begin{center}
\begin{tikzpicture}[scale=0.3]

\node (m1) at (-1.5,0.5) {\footnotesize $M_1$};
\node (m2) at (-1.5,1.5) {\footnotesize $M_2$};
\node (m4) at (-1.5,2.5) {\footnotesize $M_3$};
\node (m4) at (-1.5,3.5) {\footnotesize $M_4$};

%------------------------------------------------------

\filldraw[fill=black!20!white] (0,0) rectangle (10,1);
\node at (5,0.5) {\footnotesize 10};

\filldraw[fill=black!20!white] (10,0) rectangle (11,1);
\node at (10.5,0.5) {\footnotesize 1};

\filldraw[fill=black!20!white] (1,1) rectangle (9,2);
\node at (5,1.5) {\footnotesize 8};

\filldraw[fill=black!20!white] (9,1) rectangle (11,2);
\node at (10,1.5) {\footnotesize 2};

\filldraw[fill=black!20!white] (2,2) rectangle (8,3);
\node at (5,2.5) {\footnotesize 6};

\filldraw[fill=black!20!white] (8,2) rectangle (11,3);
\node at (9.5,2.5) {\footnotesize 3};

\filldraw[fill=black!20!white] (3,3) rectangle (7,4);
\node at (5,3.5) {\footnotesize 4};

\filldraw[fill=black!20!white] (7,3) rectangle (11,4);
\node at (9,3.5) {\footnotesize 4};

\end{tikzpicture}
\end{center}
\caption{Compact schedule.}
\end{subfigure}
\caption{\change{ Converting a non-compact schedule to a compact one, by shifting jobs back in increasing order of their starting times.}}
\label{Figure:Compact_Schedule}
\end{figure}

%The Lemma~\ref{Lemma:Initial_Idle} proof requires the following definition for comparing schedules lexicographically.
%in the remainder of the section.

\begin{lemma}
\label{Lemma:Initial_Idle}
For each instance $I=\langle m,\mathcal{J}\rangle$, 
% and let $\mathcal{J}^L=\{j:p_j\geq m,j\in\mathcal{J}\}$ be the set of long jobs. Then, (i) 
there exists a feasible compact schedule $\mathcal{S}^*$ which is optimal. 
Let $\mathcal{J}^L=\{j:p_j\geq m,j\in\mathcal{J}\}$.
%and (ii) 
If $|\mathcal{J}^L|\geq m$, then $\mathcal{S}^*$ has makespan
$T^*\geq \frac{1}{m} \left( \frac{m(m-1)}{2}+\sum_{j=1}^np_j \right)$.
\end{lemma}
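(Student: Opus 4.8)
The plan is to establish the two assertions separately, working throughout with start-time profiles: since the job intervals $[s_j,s_j+p_j)$ form an interval graph, the constraint $|\mathcal{A}_t|\le m$ for all $t$ is equivalent (by interval-graph colouring) to the existence of a feasible non-preemptive assignment of the jobs to $m$ machines, so I may reason purely about alive counts and recover an assignment at the end. For the \emph{existence of an optimal compact schedule}, I would take a minimum-makespan schedule and, among all optimal ones, fix $\mathcal{S}^*$ minimizing $\sum_{j\in\mathcal{J}} s_j$. I claim $\mathcal{S}^*$ is compact. If not, some slot $t_0\in[1,r]$ has $|\mathcal{A}_{t_0}|<m$ and $|\mathcal{B}_{t_0}|=0$; since $t_0\le r$, some job begins after $t_0$, so let $j$ be the first such job. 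Then no job begins during $[t_0,s_j)$, and the key observation is that $|\mathcal{A}_t|$ is non-increasing on this interval (jobs can only complete, not start), so every $t\in[t_0,s_j)$ satisfies $|\mathcal{A}_t|\le|\mathcal{A}_{t_0}|<m$.

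Moving $j$ to begin at $t_0$ is therefore feasible: the newly occupied slots lie in $[t_0,s_j)$ and keep $|\mathcal{A}_t|\le m$, while $|\mathcal{B}_{t_0}|$ rises only to $1=g$. No completion time increases, so the makespan remains optimal, yet $\sum_j s_j$ strictly decreases, contradicting the choice of $\mathcal{S}^*$. Hence $\mathcal{S}^*$ is compact.

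For the \emph{lower bound}, I would use a machine-time accounting argument. Each job is alive for exactly $p_j$ slots, so $\sum_{t=1}^{T^*}|\mathcal{A}_t|=\sum_{j=1}^n p_j$, and the total idle time equals $\sum_{t=1}^{T^*}(m-|\mathcal{A}_t|)=mT^*-\sum_{j=1}^n p_j$. It thus suffices to bound the idle time below by $\frac{m(m-1)}{2}$. Because $g=1$, at most one job begins per slot, so at most $t$ jobs have started by slot $t$ and every alive job has started, giving $|\mathcal{A}_t|\le\min\{t,m\}$. Consequently each of the first $m-1$ slots leaves at least $m-t$ idle machines, and $\sum_{t=1}^{m-1}(m-t)=\frac{m(m-1)}{2}$. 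The hypothesis $|\mathcal{J}^L|\ge m$ guarantees these slots exist: with $g=1$ the $m$ distinct start times force the last start to occur at slot $\ge m$, so $T^*\ge m$. Rearranging $mT^*\ge\sum_{j=1}^n p_j+\frac{m(m-1)}{2}$ yields the claim.

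I expect the crux to be the feasibility verification in the exchange argument: one must rule out overloading any later slot when filling the gap at $t_0$, and this is exactly what the monotonicity of $|\mathcal{A}_t|$ between consecutive start times delivers, in tandem with the interval-graph equivalence that lets me treat the schedule as a start-time profile rather than an explicit machine assignment. The accounting in the second part is routine once the per-slot bound $|\mathcal{A}_t|\le\min\{t,m\}$ is established.
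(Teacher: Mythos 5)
Your proof is correct and follows essentially the same route as the paper's: an exchange argument on an extremal optimal schedule for compactness (you minimize $\sum_j s_j$ and shift a job back to the gap, where the paper lexicographically minimizes the sorted start-time vector and shifts by one slot — both verifications rest on the same monotonicity of $|\mathcal{A}_t|$ between consecutive starts), followed by the identical idle-time accounting via $|\mathcal{A}_t|\le\min\{t,m\}$. Your explicit appeal to interval-graph colourability to recover a machine assignment is a detail the paper leaves implicit, but it is not a different method.
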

\begin{proof} 
% Part (i) requires Definition~\ref{Def:LexicographiC_Comparison} for lexicographically comparing schedules.
% \begin{definition}
% \label{Def:LexicographiC_Comparison}
% %Let $\pi$ be the order of jobs with in non-increasing starting times, i.e.\ $s_{\pi_1}\geq\ldots\geq s_{\pi_n}$, in a feasible schedule $\mathcal{S}$.
% Consider two feasible schedules $\mathcal{S}$ and $\mathcal{S}'$ with corresponding orders $\pi$ and $\pi'$ of jobs in non-increasing start times. 
% That is, for $j\in\{1,\ldots,n\}$, $s_{\pi_j}$ and $s_{\pi_j'}'$ is the $j$-th greatest start time in $\mathcal{S}$ and $\mathcal{S}'$, respectively. %, breaking ties arbitrarily.
% %$s_{\pi_1}\geq\ldots\geq s_{\pi_n}$ and $s_{\pi_1'}'\geq\ldots\geq s_{\pi_n'}'$, respectively.
% %That is, the job starting times in $S$ and $\mathcal{S}'$ satisfy $s_{\pi_1}\geq\ldots\geq s_{\pi_n}$ and $s_{\pi_1'}'\geq\ldots\geq s_{\pi_n'}'$.
% We say that $\mathcal{S}$ is lexicographically smaller than $\mathcal{S}'$, and write $\mathcal{S}\leq_{\text{lex}}\mathcal{S}'$, if $s_{\pi_{\ell}}< s_{\pi_\ell'}'$, where $\ell=\min\{j:s_{\pi_j}\neq s_{\pi_j'}',j\in\{1,\ldots,n\}\}$.
% \end{definition}
For the first part, among the set of all optimal schedules, pick the schedule $\mathcal{S}^*$ lexicographically minimizing\footnote{\change{ Let $s_{j_1}^*<\ldots<s_{j_n}^*$ be order of job start times in $\mathcal{S}^*$.
Moreover, denote by $s_{j_1}<\ldots<s_{j_n}$ the order of job starts in another arbitrarily chosen optimal schedule $\mathcal{S}$.
If $k\in[1,n]$ is the smallest index such that $s_{j_k}^*>s_{j_k}$, then there exists $k'\in[1,k)$ such that $s_{j_k'}^*<s_{j_k'}$.}} the vector of job start times sorted in non-decreasing order.
% Let $\pi^*$ and $\pi$ be the order of jobs in non-increasing starting times in $\mathcal{S}^*$ and a different optimal schedule
% let $\mathcal{S}$ be an optimal schedule with a vector $\vec{s}=(s_1,\ldots,s_n)$ of job starting times and suppose that $\mathcal{S}$ is different from $\mathcal{S}^*$.
% Then, $\mathcal{S}^*<_{\text{lex}}\mathcal{S}$, i.e.\ $s_{\ell}^*<s_{\ell}$, where $\ell=\min\{j:s_j^*\neq s_{j},j\in\mathcal{J}\}$.
We claim that $\mathcal{S}^*$ is compact.
Assume that this is not the case.
Then, there exists a time $t\in [1,r)$ such that $|\mathcal{A}_t|<m$ and $|\mathcal{B}_t|<1$.
Consider the earliest such time $t$.
Moreover, let $t'$ be the earliest time $t'>t$ satisfying either $|\mathcal{A}_t|=m$, or $|\mathcal{B}_t|=1$.
Clearly, there exists a job $j\in\mathcal{J}$ such that $s_j=t'$.
If we decrease the job $j$ start time by one unit of time, we obtain a feasible compact schedule $\mathcal{S'}$ with makespan 
$T'\leq T^*$, where $T^*$ is the schedule $\mathcal{S}^*$ makespan, and $\mathcal{S}'$ has a lexicographically smaller vector of job start times than $\mathcal{S}^*$ which is a contradiction. %the fact that $\mathcal{S}$ lexicographically minimizes the job start times.
\change{ Figure~\ref{Figure:Compact_Schedule} illustrates how to convert a non-compact schedule to a compact one.}

The second part requires lower bounding the total idle machine time $\Lambda=\sum_{t=1}^{r}(m-|\mathcal{A}_t|)$ for each feasible schedule $\mathcal{S}$.
%\begin{definition}
%\label{Def:Idle_Time}
%In a feasible schedule $\mathcal{S}$, the idle machine time is $\Lambda=\sum_{t=1}^{\tau}(m-|\mathcal{A}_t|)$. 
%\end{definition}
%Consider any feasible compact schedule $\mathcal{S}$ and let $\gamma=\lceil m/g\rceil-1$. 
%By the BJS constraint, $|\mathcal{B}_t|=g$, hence $m-|\mathcal{A}_t|\geq m-tg$, for $t=1,\ldots,\gamma$.
By the BJSP constraint \change{ and the fact that $|\mathcal{J}^L|\geq m$}, $|\mathcal{B}_t|=1$ and, hence, $m-|\mathcal{A}_t|\geq m-t$, for each $t\in\{1,\ldots,m\}$.
This is because all machines are idle before the first time slot.
%Hence, $m-|\mathcal{A}_t|\geq(m-t)g$, for $t=1,\ldots,a$.
\change{ So, the idle machine during time interval $[1,m]$ is lower bounded by $\Lambda\geq\sum_{t=1}^m(m-t)=\frac{m(m-1)}{2}$}.
% \begin{equation*}
% \Lambda\geq\sum_{t=1}^{\gamma}(m-tg)=am-\frac{a(a+1)g}{2}
% \end{equation*}
A simple packing argument \change{ (including both idle and non-idle machine time)} implies that schedule $\mathcal{S}$ has makespan $T\geq\frac{1}{m}\left(\frac{m(m-1)}{2}+\sum_{j=1}^np_j\right)$.
\end{proof}

Next, we analyze LPT in the case of long instances.
Similar to the Lemma \ref{Lemma:Initial_Idle} proof, we may show that LPT produces a compact schedule $\mathcal{S}$.
%Consider an arbitrary feasible compact schedule $S$ and let $i\in\mathcal{J}$ be the job that completing last, i.e.\ $C_i=T$. 
%the job that completes last, i.e.\ $C_i=\max_{j=1}^n\{C_j\}$ or equivalently $C_i=T(S)$.
We partition the interval $[1,r]$ into a sequence $P_1,\ldots,P_k$, \change{ where $k\leq r$}, of maximal periods satisfying the following invariant:
for each $q\in\{1,\ldots,k\}$, either (i) $|\mathcal{A}_t|<m$ for each $t\in P_q$ or (ii) $|\mathcal{A}_t|=m$ for each $t\in P_q$.
That is, there is no pair of time slots $s,t\in P_q$ such that $|\mathcal{A}_s|<m$ and $|\mathcal{A}_t|=m$.
We call $P_q$ a \emph{slack period} if $P_q$ satisfies (i),
otherwise, $P_q$ is a \emph{full} period. 
%Define a \emph{slot} as a pair $(i,t)$ associated with a machine $i$ and a time $t$.
For a given period $P_q$ of length $\lambda_q$, denote by $\Lambda_q=\sum_{t\in P_q}(m-|\mathcal{A}_t|)$ the idle machine time.
Note that $\Lambda_q=0$, for each full period $P_q$.
Lemma \ref{Lemma:Idle_Slots} upper bounds the total idle machine time of slack periods in the LPT schedule $\mathcal{S}$, except the very last period $P_k$. 
When $P_k$ is slack, the length $\lambda_k$ of $P_k$ is upper bounded by Lemma~\ref{Lem:Last_Slack_Period}.

\begin{lemma}
\label{Lemma:Idle_Slots}
Let $k'=k-1$.
Consider a long instance $I=\langle m,\mathcal{J}\rangle$, with $|\mathcal{J}|\geq m$, and the LPT schedule $\mathcal{S}$. 
It holds that (i) $\lambda_q\leq m-1$ and (ii) $\Lambda_q\leq\frac{\lambda_q(\lambda_q-1)}{2}$ for each slack period $P_q$, where 
$q\in\{1,\ldots,k'\}$.
Furthermore, (iii) $\sum_{q=1}^{k'}\Lambda_q\leq \frac{nm}{2}$.
\end{lemma}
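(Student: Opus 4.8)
The plan is to prove the three statements in the order (i), (ii), (iii), obtaining (iii) from the first two by a short summation. Throughout I use that the LPT schedule is compact (established exactly as in the proof of Lemma~\ref{Lemma:Initial_Idle}): every slack slot $t$ has $|\mathcal{A}_t|<m$ and hence, since $g=1$, satisfies $|\mathcal{B}_t|=1$. So in a slack period \emph{exactly one job begins per time slot}, and because the instance is long all of these jobs have processing time at least $m$.

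For (i) I would argue by contradiction. Suppose a slack period $P_q$ has length $\lambda_q\ge m$ and let $t_0$ be its first slot. Each of the slots $t_0,\dots,t_0+m-1$ begins one job, so these are $m$ distinct jobs; since each has processing time at least $m$, the job begun at $t_0+i$ still occupies slot $t_0+m-1$ for every $i\in\{0,\dots,m-1\}$ (its interval has length $p_j\ge m\ge m-i$). Hence $|\mathcal{A}_{t_0+m-1}|\ge m$, contradicting that $t_0+m-1\in P_q$ is a slack slot. Therefore $\lambda_q\le m-1$. Note this argument uses nothing about $q\le k'$, so the length bound itself holds for any slack period; the index restriction will become relevant only in (ii).

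Statement (ii) is the crux. The idea is to lower bound the number $|\mathcal{A}_t|$ of alive jobs throughout the period and convert this into an upper bound on the per-slot idle deficit $m-|\mathcal{A}_t|$. At a slot $t=t_0+i$ of $P_q$ I would split the alive jobs into those begun inside $P_q$ so far — there are exactly $i+1$ of them, all still alive because they are long — and those begun before $t_0$ that survive to $t$; write $h(t)$ for the latter count. Since jobs begun before $t_0$ can only complete, $h$ is non-increasing on $P_q$. For a non-last period $q\le k'$, the next period $P_{q+1}$ is full (adjacent periods alternate type by maximality), and the full-to-slack transition then forces $|\mathcal{A}|=m-1$ at the last slot of $P_q$; this pins $h$ at the end of the period, and monotonicity propagates the estimate backwards to give $|\mathcal{A}_{t_0+\lambda_q-1-k}|\ge (m-1)-k$, i.e.\ the idle deficit grows at most linearly as one moves toward the start of the period. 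Summing this arithmetic progression over the $\lambda_q$ slots yields the quadratic idle-time bound of (ii). This backward-from-$(m-1)$ estimate is precisely the concave relaxation of the idle machine time advertised in the introduction. The main obstacle is controlling how fast $|\mathcal{A}_t|$ can rise — equivalently, bounding the number of simultaneous job completions at the full-to-slack transition and ruling out further large drops inside the period — which is exactly where the long-instance hypothesis $p_j\ge m$ and the decreasing LPT order are needed. It is also where the last period must be excluded: lacking a following full period, $P_k$ need not end at $|\mathcal{A}|=m-1$, so its length and idle time are instead bounded separately in Lemma~\ref{Lem:Last_Slack_Period}.

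Finally, (iii) follows by summation. Full periods contribute $\Lambda_q=0$, so $\sum_{q=1}^{k'}\Lambda_q$ effectively ranges over slack periods only. Combining (ii) with the length bound $\lambda_q\le m-1$ from (i) gives $\Lambda_q\le\frac{\lambda_q(\lambda_q-1)}{2}\le\frac{\lambda_q m}{2}$, whence $\sum_{q=1}^{k'}\Lambda_q\le\frac{m}{2}\sum_{q=1}^{k'}\lambda_q$. Since every slack slot begins exactly one of the $n$ jobs and distinct slots begin distinct jobs, the total number of slack slots — and hence $\sum_{q=1}^{k'}\lambda_q$ — is at most $n$, which yields $\sum_{q=1}^{k'}\Lambda_q\le\frac{nm}{2}$, as required.
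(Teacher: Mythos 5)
Your proposal follows the paper's route for parts (i) and (ii) and takes a genuinely simpler route for (iii). Part (i) is the paper's argument essentially verbatim. For (ii), the paper deletes every job completing inside $P_q$ (which can only increase the idle time) so that the alive count rises by exactly one per slot and reaches $m$ at the first slot of $P_{q+1}$; your decomposition into the $i+1$ jobs begun inside $P_q$ plus the non-increasing survivor count $h(t)$ is the same argument in different notation, and your anchor $|\mathcal{A}|=m-1$ at the last slot of $P_q$ is exactly the paper's $|\mathcal{A}_{t+1}|=m$. One caveat you share with the paper: the progression you describe has deficits $m-|\mathcal{A}_{t_0+\lambda_q-1-k}|\le k+1$ for $k=0,\dots,\lambda_q-1$, which sums to $\lambda_q(\lambda_q+1)/2$, not $\lambda_q(\lambda_q-1)/2$; the paper's own chain contains the identical off-by-one (its sum should run to $\lambda_q$, not $\lambda_q-1$). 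This is immaterial downstream, since $\lambda_q+1\le m$ by (i) still yields $\Lambda_q\le \lambda_q m/2$ and hence (iii), which is all that Theorem~\ref{Thm:LPT_Long} actually uses --- but you should not claim the progression ``yields the bound of (ii)'' without doing the sum. For (iii) you diverge usefully: the paper maximizes $\sum_q\lambda_q(\lambda_q-1)/2$ via Concave Program~(\ref{Eq:Idle_Model}) and a Jensen-type exchange argument concentrating mass at $\lambda_q=m$, whereas your term-wise bound $\Lambda_q\le \lambda_q m/2$ combined with $\sum_q\lambda_q\le n$ reaches $nm/2$ in one line; this is cleaner and loses nothing. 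A minor misattribution: controlling how fast $|\mathcal{A}_t|$ can rise needs only $|\mathcal{B}_t|\le g=1$, not the long-instance hypothesis or the LPT order; $p_j\ge m$ is what guarantees (as you correctly use) that all jobs begun inside $P_q$ survive to its last slot, and the LPT order plays no role in this lemma.
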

\begin{proof}
For (i), let $P_q=[s,t]$ be a slack time period in $\mathcal{S}$ and assume for contradiction that $\lambda_q\geq m$, i.e.\ $t\geq s+m-1$.
Given that $p_j\geq m$ for each $j\in\mathcal{J}$, we have $\{j:s_j\in[s,s+m-1],j\in\mathcal{J}\}\subseteq \mathcal{A}_{s+m-1}$. 
That is, all jobs starting during $[s,s+m-1]$ are alive at time $s+m-1$.
Since $P_q$ is a slack period, $|\mathcal{A}_u|<m$ holds for each $u\in [s,s+m-1]$.
Because $\mathcal{S}$ is compact, $|\mathcal{B}_u|=1$, i.e.\ exactly $g=1$ jobs begin, at each $u\in [s,s+m-1]$.
This implies $|\mathcal{A}_{s+m-1}|\geq m$, contradicting the fact that $P_q$ is a maximal slack period. 

For (ii), consider the partitioning $\mathcal{A}_u=\mathcal{A}_u^-+\mathcal{A}_u^+$ for each time slot $u\in P_q=[s,t]$, where $\mathcal{A}_u^-$ and $\mathcal{A}_u^+$ is the set of alive jobs at time $u$ completing inside $P_q$ and after $P_q$, i.e.\ $C_j\in[s,t]$ and $C_j>t$, respectively.
Since $\lambda_q\leq m-1$, every job $j$ beginning during $P_q$, i.e.\ $s_j\in[s,t]$, must complete after $P_q$, i.e.\ $C_j>t$.
We modify schedule $\mathcal{S}$ by removing every job $j$ completing inside $P_q$, i.e.\ $C_j\in P_q$.
Clearly, the modified schedule $\mathcal{S}'$ has increased idle time $\Lambda_q'$ during $P_q$, i.e.\ $\Lambda_q\leq \Lambda_q'$.
Further, no job $j$ with $s_j\in P_q$ is removed.
Because $|\mathcal{B}_u|=1$ for each $u\in P_q$, we have $|\mathcal{A}_u|=|\mathcal{A}_{u+1}|-1$ for $u=s,\ldots,t$.
Furthermore, $|\mathcal{A}_{t+1}|=m$.
So:
%, it must be the case that: 
% and $|\mathcal{A}_u|=|\mathcal{A}_{u+1}|-1$ for $u=s,\ldots,t$ in $S'$, where $e_u$ is the number of empty, i.e.\ idle, machines at $u$.
% We conclude that
\begin{equation*}
\Lambda_q' = \sum_{u=s}^t(m-|\mathcal{A}_u|) = \sum_{u=s}^t[m-(t-s+1)] = \sum_{u=1}^{\lambda_q-1}u = \frac{\lambda_q(\lambda_q-1)}{2}.
\end{equation*}
% Therefore, $\sum_{u=s-1}^tf_u=\sum_{u=s}^tb_u$, where $b_u$ and $f_u$ is the number of jobs beginning and finishing, respectively, at time slot $u$.
% Since exactly one job begins at each time slot $u\in P_q$, we conclude that $\lambda_q=\sum_{u=s}^tb_u$.
% 
% In $S'$, no job completes during $P_q$.
% Since $b_u=1$ for each $u\in P_q$, we have that $e_u=e_{u+1}+1$ for $u=s,\ldots,t$, i.e.\ the number of idle periods increases by one.
% The length of the idle period remains $\lambda_q$ and exactly one job begins per unit of time in $P_q$.
% In fact, the length of the period is equal to the number of jobs completing at $s-1$ and the number of jobs completing during $P_q$.
% The number of machines is still $m$ and $m-\lambda_q$ in the idle period.
% Because we have removed the jobs, the number of idle slots is exactly $\lambda_q-u$ in the $u$-th time slot of the period.
% Hence, we conclude the required bound.
% Since $a_{u+1}=m$ and $\lambda_q=u-t+1$, we get that $a_{t'}=m-(u-t')$, for $t'\in P_q$.
% We conclude that
% \begin{equation*}
% L_q = \sum_{t'=t}^u(m-a_{t'}) = \sum_{t'=t}^u(u-t'+1) = \sum_{k=1}^{\lambda_q-1}k = \frac{\lambda_q(\lambda_q-1)}{2}.
% \end{equation*}
%
For (iii),
consider slack period $P_q$, for $q\in\{1,\ldots,k'\}$.
By (i), $\lambda_q\leq m-1$.
%Recall that at most one job may begin per unit of time, because of the BJSP constraint.
%Since $\mathcal{S}$ is compact, we get that $\sum_{q=1}^{k}\lambda_q\leq n-1$.
%By the BJSP constraint, 
Since at most $g=1$ jobs begin at each $t\in P_q$, 
$\sum_{q=1}^{k'}\lambda_q\leq n-1$.
By (ii), Concave Program (\ref{Eq:Idle_Model}) upper bounds $\sum_{q=1}^{k'}\Lambda_q$.
\begin{subequations}
\label{Eq:Idle_Model}
\begin{align}
\max_{\lambda \in \{ 0,1\}^{k'}} \quad & \sum_{q=1}^{k'} \frac{\lambda_q(\lambda_q-1)}{2} \label{Eq:Idle_Objective} \\
%& \lambda_q \leq m-1 & q\in \{1,\ldots,k\} \label{Eq:Idle_Individual_Length} \\
%& \lambda_q \leq m & q\in \{1,\ldots,k\} \label{Eq:Idle_Individual_Length} \\
%& \sum_{q=1}^{k}\lambda_q \leq n-1 \label{Eq:Total_Length} \\
& 1\leq \lambda_q \leq m & q\in \{1,\ldots,k'\} \label{Eq:Idle_Individual_Length} \\
& \sum_{q=1}^{k'}\lambda_q \leq n \label{Eq:Total_Length} 
\end{align}
\end{subequations} 
%Let $\lambda^*$ be an optimal solution to (\ref{Eq:Idle_Model}).
Assume w.l.o.g.\ that $n/m$ is integer.
If $k'\leq n/m$, by setting $\lambda_q=m$, for $q\in\{1,\ldots,k'\}$, we obtain $\sum_{q=1}^{k'}\lambda_q(\lambda_q-1)/2 \leq k'm(m-1)/2\leq nm/2$.
If $k'> n/m$, we argue that the solution $\lambda_q=m$, for $q\in\{1,\ldots,n/m\}$, and $\lambda_q=0$, otherwise, is optimal for Concave Program (\ref{Eq:Idle_Model}).
In particular, for any solution $\lambda$ with $0<\lambda_q,\lambda_{q'}<m$ such that $q\neq q'$, we may construct a modified solution with higher objective value by applying Jensen's inequality $f(\lambda)+f(\lambda')\leq f(\lambda+\lambda')$ for any $\lambda,\lambda'\in[0,\infty)$, with respect to the single variable, convex function $f(\lambda)=\lambda(\lambda-1)/2$. 
If $\lambda_q+\lambda_{q'}\leq m$, we may set $\tilde{\lambda}_q=\lambda_q+\lambda_{q'}$ and $\tilde{\lambda}_{q'}=0$.
Otherwise, $m<\lambda_q+\lambda_{q'}\leq 2m$ and we set $\tilde{\lambda}_q=m$ and $\tilde{\lambda}_{q'}=\lambda_q+\lambda_{q'}-m$.
In both cases, $\lambda_{q''}=\tilde{\lambda}_{q''}$, for each $q''\in\{1,\ldots,k'\}\setminus\{q,q'\}$.
Clearly, $\tilde{\lambda}$ attains higher objective value than $\lambda$, for Concave Program (\ref{Eq:Idle_Model}).
\end{proof}

\begin{lemma}
\label{Lem:Last_Slack_Period}
Suppose that $P_k$ is a slack period and let $\mathcal{J}_k$ be the set of jobs beginning during $P_k$.
Then, it holds that $\lambda_k\leq\frac{1}{m}\sum_{j\in\mathcal{J}_k}p_j$.
\end{lemma}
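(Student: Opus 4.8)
The plan is to combine the structure of a compact schedule with the defining property of \emph{long} instances. Write $P_k=[s,r]$, so that $\lambda_k=r-s+1$. Since $P_k$ is the last period of the partition and it is slack, we have $|\mathcal{A}_t|<m$ for every time slot $t\in P_k$.

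First I would argue that exactly one job begins in each time slot of $P_k$, so that $|\mathcal{J}_k|=\lambda_k$. Because the LPT schedule $\mathcal{S}$ is compact, every $t\in[1,r]$ satisfies either $|\mathcal{A}_t|=m$ or $|\mathcal{B}_t|=1$. On $P_k$ the first alternative fails by slackness, hence $|\mathcal{B}_t|=1$ for each $t\in P_k$ (recall $g=1$). Since each job begins exactly once, summing over the $\lambda_k$ time slots of $P_k$ gives $|\mathcal{J}_k|=\sum_{t\in P_k}|\mathcal{B}_t|=\lambda_k$; that is, the number of jobs beginning in $P_k$ equals its length.

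Next I would invoke the long-instance hypothesis $p_j\geq m$ for every $j\in\mathcal{J}$. Applied to the jobs of $\mathcal{J}_k$, this yields $\sum_{j\in\mathcal{J}_k}p_j\geq m\,|\mathcal{J}_k|=m\,\lambda_k$, and dividing through by $m$ delivers the claimed bound $\lambda_k\leq\frac{1}{m}\sum_{j\in\mathcal{J}_k}p_j$.

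I do not expect a genuine obstacle here: the only step requiring care is the counting identity $|\mathcal{J}_k|=\lambda_k$, which rests on the joint use of slackness (ruling out $|\mathcal{A}_t|=m$), compactness (forcing $|\mathcal{B}_t|=1$), and the assumption $g=1$. Once this identity is in place, the long-instance property $p_j\geq m$ converts the count directly into the processing-time bound. The one minor point to pin down is the indexing convention for $\lambda_k$, so that the beginning jobs are counted without an off-by-one discrepancy, but this does not affect the substance of the argument.
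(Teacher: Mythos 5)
Your proof is correct and follows essentially the same route as the paper's: slackness plus compactness forces $|\mathcal{B}_t|=1$ on every slot of $P_k$, so $|\mathcal{J}_k|=\lambda_k$, and the long-instance bound $p_j\geq m$ then gives $\sum_{j\in\mathcal{J}_k}p_j\geq m\lambda_k$. The paper states this more tersely but the argument is identical; your explicit counting step $|\mathcal{J}_k|=\lambda_k$ is exactly the detail the paper leaves implicit.
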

\begin{proof}
Because $P_k$ is a slack period, it must be the case that $|\mathcal{B}_u|=1$, for each $u\in P_k$.
Since we consider long instances, $p_j\geq m$ for each $j\in\mathcal{J}$.
Therefore, $\lambda_k\leq\frac{1}{m}\sum_{j\in\mathcal{J}_k}p_j$.
\end{proof}

\change{
\begin{theorem}
\label{Thm:LPT_Long}
LPT achieves an approximation ratio 
$\rho\in \left[\frac{3}{2},\frac{5}{3} \right]$ in the case of long instances.
%LPT achieves an approximation ratio $\rho\in[\frac{3}{2},\frac{5}{3}]$, for long instances.
\end{theorem}
}
\begin{proof}
Denote the LPT and optimal schedules by $\mathcal{S}$ and $\mathcal{S}^*$, respectively.
Let $\ell\in\mathcal{J}$ be a job completing last in $\mathcal{S}$, i.e.\ $T=s_{\ell}+p_{\ell}$.
Recall that LPT sorts jobs s.t.\ $p_1\geq\ldots\geq p_n$.
W.l.o.g.\ we assume that $\ell=\arg\min_{j\in\mathcal{J}}\{p_j\}$.
Indeed, we may discard every job $j>\ell$ and bound the algorithm's performance w.r.t.\ instance 
$\tilde{I}=\langle m,\mathcal{J}\setminus\{j:j>\ell,j\in\mathcal{J}\}\rangle$.
Let $\tilde{S}$ and $\tilde{S}^*$ be the LPT and an optimal schedule attaining makespan $\tilde{T}$ and $\tilde{T}^*$, respectively, for instance $\tilde{I}$.
Showing that $\tilde{T}\leq(5/3)\tilde{T}^*$ is sufficient for our purposes because $T=\tilde{T}$ and 
$\tilde{T}^*\leq T^*$.
We distinguish two cases based on whether $p_n> T^*/3$, or $p_n\leq T^*/3$.

Case 1 ($p_n> T^*/3$): We claim $T\leq (3/2)T^*$.
Initially, observe that $n\leq 2m$.
Otherwise, there would be a machine $i\in\mathcal{M}$ executing at least $|\mathcal{S}_i^*|\geq 3$ jobs, say $j,j',j''\in\mathcal{J}$, in $\mathcal{S}^*$.
This machine would have last job completion time $T_i^*\geq p_j+p_{j'}+p_{j''}>T^*$, a contradiction.
% Suppose that machine $i\in\mathcal{M}$ executes jobs $j,j',j''\in\mathcal{}$ 
% In $\mathcal{S}^*$, $i$ would have completion time $T_i^*>T(\mathcal{S}^*)$ which is a contradiction.
If $n\leq m$, LPT clearly has makespan $T=T^*$.
So, consider $n>m$.
Then, some machine executes at least two jobs in $\mathcal{S}^*$, i.e.\ $p_n\leq T^*/2$.
To prove our claim, it suffices to show $s_n\leq T^*$, i.e.\ job $n$ starts before or at $T^*$.
Let $c=\max_{1\leq j\leq m}\{C_j\}$ be the time when the last among the $m$ biggest jobs completes. 
If $s_n\leq c$, then $s_n\leq\max_{1\leq j\leq m}\{j+p_j\}\leq T^*$.
Otherwise, let $\lambda=s_n-c$.
Because $n\leq 2m$, it must be the case that $\lambda\leq m$. 
Furthermore, $|\mathcal{A}_t|<m$ and, thus, $|\mathcal{B}_t|=1$, for each $t\in[c+1,s_n-1]$.
That is, exactly one job begins per unit of time during $[c+1,s_n]$.
Due to the LPT ordering, these are exactly the jobs $\{n-\lambda,\ldots,n\}$.
Since $\lambda\leq m$ and $p_j\geq m$, at least $m-h$ units of time of job $n-h$ are executed from time $s_n$ and onwards, for $h\in\{1,\ldots,\lambda\}$.
Thus, the total processing load which executed not earlier than $s_n$ is $\mu\geq\sum_{h=1}^{\lambda}(m-h)$.
On the other hand, at most $m-h$ machines are idle at time slot $c+h$, for $h\in\{1,\ldots,\lambda\}$.
So, the total idle machine time during $[m+1,s_n-1]$ is $\Lambda\leq\sum_{h=1}^{\lambda-1}(m-h)$.
We conclude that $\mu\geq\Lambda$ which implies that $s_n\leq\frac{m(m-1)}{2}+\frac{1}{m}\sum_{j\in\mathcal{J}}p_j$.
By Lemma~\ref{Lemma:Initial_Idle}, our claim follows.

Case 2 ($p_n\leq T^*/3$):
In the following, Equalities (\ref{Eq:Long_Last_Job}) hold because job $n$ completes last and by the definition of alive jobs.
%The first equality holds since job $n$ completes last and the second by the definition of the set of alive jobs.
Inequalities (\ref{Eq:Long_Packing})-(\ref{Eq:Long_Bounding}) use a simple packing argument with job processing times and machine idle time taking into account: (\ref{Eq:Long_Packing}) Lemma~\ref{Lem:Last_Slack_Period}, (\ref{Eq:Long_Gap_Bounding}) Lemma~\ref{Lemma:Idle_Slots} property $\sum_{q=1}^{k'}\Lambda_q\leq \frac{nm}{2}$, and (\ref{Eq:Long_Bounding}) the bound $T^* \geq \max\{\frac{1}{m}\sum_{j\in\mathcal{J}}p_j,n+p_n,3p_n\}$.
% Recall that $T(S)=b_i+p_i$.
% W.l.o.g.\ $p_i=\min_{j=1}^n\{p_j\}$.
% For the $\ell$ slack periods, it must be the case that $\sum_{q=1}^\ell\lambda_q\leq n$ because at most one job begins per unit of time.
% Furthermore, by Lemma \ref{Lemma:Idle_Bound}, $\sum_{q=1}^{\ell}L_q\leq \frac{n}{m}\frac{m(m-1)}{2}\leq\frac{nm}{2}$.
% \begin{equation*}
% T(S^*) \geq \frac{1}{m} \sum_{j=1}^n p_j \geq \frac{n\min_{j=1}^np_j}{m} \geq r\min_{j=1}^n p_j
% \end{equation*}
% Hence, $\min_{j=1}^n p_j\leq \frac{1}{r}T(S^*)$.
\begin{subequations}
\begin{align}
T & = s_n + p_n \label{Eq:Long_Last_Job} 
 = \frac{1}{m}\left( \sum_{t=1}^{s_n}|\mathcal{A}_t| + \sum_{t=1}^{s_n} \left(m-|\mathcal{A}_t| \right) \right) + p_n %\min_{j=1}^n\{p_j\}
%\label{Eq:Long_Alive} 
\\
& \leq \frac{1}{m}\left(\sum_{i=1}^np_i + \sum_{q=1}^{k'}\Lambda_q\right) + p_n % \min_{j=1}^n\{p_j\} 
\label{Eq:Long_Packing} \\
& \leq \frac{1}{m}\sum_{i=1}^np_i + \frac{n}{2} + p_n \label{Eq:Long_Gap_Bounding} \\
%\leq T(\mathcal{S}^*) + \frac{1}{2}T(\mathcal{S}^*) + \frac{1}{6}T(\mathcal{S}^*)
& \leq \frac{5}{3} T^*. \label{Eq:Long_Bounding}
\end{align}
\end{subequations}
\end{proof}

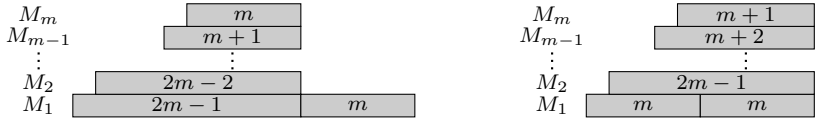
\begin{figure}[t]
\begin{subfigure}[t]{0.5\textwidth}
\begin{center}
\begin{tikzpicture}[scale=0.3]

\node (m1) at (-1.5,0.5) {\footnotesize $M_1$};
\node (m2) at (-1.5,1.5) {\footnotesize $M_2$};
\draw[dotted, thick] (-1.5,2.1) -- (-1.5,3);
\node (m4) at (-1.5,3.5) {\footnotesize $M_{m-1}$};
\node (m4) at (-1.5,4.5) {\footnotesize $M_m$};

%------------------------------------------------------

\filldraw[fill=black!20!white] (0,0) rectangle (10,1);
\node at (5,0.5) {\footnotesize $2m-1$};

\filldraw[fill=black!20!white] (1,1) rectangle (10,2);
\node at (5.5,1.5) {\footnotesize $2m-2$};

\draw[dotted, thick] (7,2.1) -- (7,3);

\filldraw[fill=black!20!white] (4,3) rectangle (10,4);
\node at (7,3.5) {\footnotesize $m+1$};

\filldraw[fill=black!20!white] (5,4) rectangle (10,5);
\node at (7.5,4.5) {\footnotesize $m$};

\filldraw[fill=black!20!white] (10,0) rectangle (15,1);
\node at (12.5,0.5) {\footnotesize $m$};

%------------------------------------------------------

\end{tikzpicture}
\end{center}
\caption{LPT schedule of makespan $T=3m-1$.}
\end{subfigure}
\begin{subfigure}[t]{0.5\textwidth}
\begin{center}
\begin{tikzpicture}[scale=0.3]

\node (m1) at (-1.5,0.5) {\footnotesize $M_1$};
\node (m2) at (-1.5,1.5) {\footnotesize $M_2$};
\draw[dotted, thick] (-1.5,2.1) -- (-1.5,3);
\node (m4) at (-1.5,3.5) {\footnotesize $M_{m-1}$};
\node (m4) at (-1.5,4.5) {\footnotesize $M_m$};

%------------------------------------------------------

\filldraw[fill=black!20!white] (0,0) rectangle (5,1);
\node at (2.5,0.5) {\footnotesize $m$};

\filldraw[fill=black!20!white] (5,0) rectangle (10,1);
\node at (7.5,0.5) {\footnotesize $m$};

\filldraw[fill=black!20!white] (1,1) rectangle (10,2);
\node at (5.5,1.5) {\footnotesize $2m-1$};

\draw[dotted, thick] (7,2.1) -- (7,3);

\filldraw[fill=black!20!white] (3,3) rectangle (10,4);
\node at (6.5,3.5) {\footnotesize $m+2$};

\filldraw[fill=black!20!white] (4,4) rectangle (10,5);
\node at (7,4.5) {\footnotesize $m+1$};

\end{tikzpicture}
\end{center}
\caption{Optimal schedule of makespan $T^*=2m$.}
\end{subfigure}
\caption{\change{BJSP \emph{long instance} with a 3/2 lower bound of LPT.}}
\label{Figure:LPT_Tightness_Long}
\end{figure}

We complement Theorem~\ref{Thm:LPT_Long} with a long instance $I=\langle m,\mathcal{J}\rangle$ where LPT is 3/2-approximate and leave closing the gap between the two as an open question.
Instance $I$ \change{ is illustrated in Figure~\ref{Figure:LPT_Tightness_Long} and} contains $m+1$ jobs, where $p_j=2m-j$, for $j\in\{1,\ldots,m\}$, and $p_{m+1}=m$.
In the LPT schedule $\mathcal{S}$, job $j$ is executed at time $s_j=j$, for $j\in\{1,\ldots,m\}$, and $s_{m+1}=2m-1$.
Hence, $T=3m-1$.
But an optimal schedule $\mathcal{S}^*$ assigns job $j$ to machine $j+1$ at time $s_j=j+1$, for $j\in\{1,\ldots,m-1\}$. 
Moreover, jobs $m$ and $m+1$ are assigned to machine $1$ beginning at times $s_m=1$ and $s_{m+1}=m$, respectively. 
Clearly, $\mathcal{S}^*$ has makespan $T^*=2m$. 

% \begin{lemma}
% Let $\ell\in\mathcal{J}$ be the job completing last in the LPT schedule.
% Then, it holds that $p_j\leq T^*/2$.
% \end{lemma}
% \begin{proof}
% Suppose that $p_{\ell}>T^*/2$.
% Then, we claim that $\ell\leq m$.
% If $\ell>m$, then $j$ must begin in the same machine with another job $p_j\geq p_{\ell}$.
% However, for each instance $\langle m,\mathcal{J}\rangle$ such that $|\mathcal{J}|\leq m$, LPT produces obviously an optimal schedule.
% \end{proof}

% \begin{lemma}
% Suppose that 
% \end{lemma}
% \begin{proof}
% \end{proof}

% \paragraph{Short Instances}

Theorem \ref{Lem:LPT_Short} completes this section with a simple argument on the LPT performance for short instances.

\begin{theorem}
\label{Lem:LPT_Short}
LPT is optimal for short instances.
\end{theorem}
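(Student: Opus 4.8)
The plan is to show that for a short instance the machine-count constraint is never active, so with $g=1$ the scheduling collapses to a one-dimensional sequencing problem in which LPT simply fills consecutive time slots, and then to match the resulting makespan against the lower bound already extracted inside the proof of Theorem~\ref{Thm:Naive_LPT}.

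First I would argue that $|\mathcal{A}_t|\le m-1<m$ at every slot $t$. Since $g=1$, at most one job begins per slot, and since $p_j<m$ each job occupies fewer than $m$ consecutive slots. Hence a job alive at slot $t$ must have begun within the window $\{t-m+2,\ldots,t\}$ of $m-1$ consecutive slots, and as at most one job begins per slot this window can contain at most $m-1$ started jobs; therefore $|\mathcal{A}_t|\le m-1$. The only point needing care here is the interval bookkeeping in the definition of $\mathcal{A}_t$, but once that is pinned down the bound is immediate. Consequently the machine-count condition $|\mathcal{A}_t|<m$ holds everywhere and is redundant.

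It follows that when LPT considers job $j$ (in the order $p_1\ge\cdots\ge p_n$), availability reduces to $|\mathcal{B}_t|<1$, i.e.\ to the requirement that no previously scheduled job begins at $t$. Since jobs $1,\dots,j-1$ occupy slots $1,\dots,j-1$, the earliest available slot is $j$, so LPT sets $s_j=j$ and attains makespan
\[
T=\max_{1\le j\le n}\{s_j+p_j\}=\max_{1\le j\le n}\{j+p_j\}.
\]
For the lower bound I would reuse the argument from the proof of Theorem~\ref{Thm:Naive_LPT} specialized to $g=1$: ordering the jobs by their start times in an optimal schedule as $s^*_{\pi_1}\le\cdots\le s^*_{\pi_n}$, the constraint $|\mathcal{B}_t|\le 1$ forces $s^*_{\pi_j}\ge j$, and a pigeonhole argument on the LPT order shows that among the latest-starting jobs $\pi_j,\dots,\pi_n$ at least one has processing time at least $p_j$. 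Together these give $T^*\ge j+p_j$ for every $j$, hence $T^*\ge\max_{j}\{j+p_j\}$. Combining with the displayed value of $T$ yields $T=T^*$.

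The main obstacle is the first step, namely verifying rigorously that $|\mathcal{A}_t|<m$ throughout, since this is precisely what reduces the two-dimensional (machines $\times$ time) packing to pure sequencing on the time axis and makes LPT coincide with the obvious greedy placement. Everything afterwards is a direct comparison against a lower bound that is already available, so no genuinely new estimate is required.
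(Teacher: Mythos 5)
Your proposal is correct and follows essentially the same route as the paper: both arguments hinge on the observation that $p_j<m$ together with $g=1$ forces $|\mathcal{A}_t|<m$ at every slot (a job alive at $t$ must have started within the last $m-1$ slots, each admitting at most one start), so LPT's placement is governed solely by the job-start constraint, giving $s_j=j$ and a makespan matching the lower bound $T^*\geq\max_j\{j+p_j\}$ from the proof of Theorem~\ref{Thm:Naive_LPT}. The only cosmetic difference is that the paper establishes $|\mathcal{A}_t|<m$ by contradiction just at the slot blocking the last-completing job, while you prove it uniformly for all $t$; the content is identical.
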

\begin{proof}
Let $p_1\geq\ldots\geq p_n$ be the LPT job ordering and suppose that job $\ell$ completes last in LPT schedule $\mathcal{S}$. 
%i.e.\ $C_{\ell}=T(\mathcal{S})$.
We claim that job $\ell$ begins at time slot $s_{\ell}=\lceil\ell/g\rceil$ in $\mathcal{S}$.
Due to the BJSP constraint, we have $s_{\ell}\geq \lceil\ell/g\rceil$.
Assume that the last inequality is strict.
Then, $|\mathcal{A}_t|=m$ for some time slot $t<s_{\ell}$.
So, there exists job $j\in \mathcal{A}_t$ with $s_j=t-m+1$.
Since $p_j\leq m-1$, we get $C_j<t$ which contradicts $j\in\mathcal{A}_t$.
Because $T^*\geq\lceil j/g\rceil+j$ for each $j\in\mathcal{J}$, the theorem follows.
\end{proof}

\subsection{Shortest Processing Time First}
\label{Section:SPT}

This section investigates the performance of \emph{Long Job Shortest Processing Time First Algorithm (LSPT)}.
LSPT orders the jobs as follows: (i) each long job precedes every short job, (ii) long jobs are sorted according to \emph{Shortest Processing Time First (SPT)}, and (iii) short jobs are sorted as in LPT.
LSPT schedules jobs greedily, in the same vein as LPT, with this new job ordering.
For long instances, when the largest processing time $p_{\max}$ is relatively small compared to the average machine load, Theorem~\ref{Thm:SPT} shows that LSPT achieves an approximation ratio better than the 5/3, i.e.\ the approximation ratio Theorem~\ref{Thm:LPT_Long} obtains for LPT.
From a worst-case analysis viewpoint, the main difference between LSPT and LPT is that the former requires significantly lower idle machine time until the last job start, but at the price of much higher difference between the last job completion times in different machines.

\begin{theorem}
\label{Thm:SPT}
LSPT is 2-approximate for minimizing makespan.
For long instances, LSPT is $(1+\min\{1,1/\alpha\})$-approximate, where $\alpha=(\frac{1}{m}\sum_{j\in\mathcal{J}}p_j)/p_{\max}$.
\end{theorem}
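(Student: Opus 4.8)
The plan is to establish two separate upper bounds on the LSPT makespan $T$ and take their minimum, since $\min\{2,\,1+1/\alpha\}=1+\min\{1,1/\alpha\}$. Write $L=\frac1m\sum_{j\in\mathcal J}p_j$ and let $\ell$ be a job completing last, so $T=s_\ell+p_\ell$. Splitting $[1,s_\ell]$ into busy and idle time gives the packing identity
\begin{equation*}
m\,s_\ell=\sum_{t=1}^{s_\ell}|\mathcal A_t|+\sum_{t=1}^{s_\ell}\bigl(m-|\mathcal A_t|\bigr)\le mL+\Lambda ,
\end{equation*}
where $\Lambda$ denotes the idle machine time accumulated before the last start and I used $\sum_{t\le s_\ell}|\mathcal A_t|\le\sum_{j}p_j=mL$ (the work performed in a window cannot exceed the total work). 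Hence $T=s_\ell+p_\ell\le L+\Lambda/m+p_\ell$, and the whole argument reduces to controlling $\Lambda$ and invoking the lower bounds $T^*\ge L$, $T^*\ge p_{\max}$, and $T^*\ge L+\tfrac{m-1}{2}$ (the last from Lemma~\ref{Lemma:Initial_Idle} when $|\mathcal J^L|\ge m$).

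For long instances LSPT reduces to SPT, and the key observation --- the part I expect to carry the refined bound --- is that SPT incurs idle time \emph{only} during the initial ramp-up. First I would record that $g=1$ forces distinct start times and that the greedy rule makes starts non-decreasing, so jobs start strictly in SPT order, $s_1<\dots<s_n$; combined with $p_1\le\dots\le p_n$ this yields $C_a=s_a+p_a<s_b+p_b=C_b$ for every $a<b$, i.e.\ all completion times are distinct and occur in SPT order. Consequently at most one machine is freed per time slot, so whenever a machine becomes free and an unscheduled job remains it is started immediately (its begin-slot is available), and the number of alive jobs stays equal to $m$ from slot $m$ up to the last start. Thus the only idle time before $s_\ell\le s_n$ is the initial triangle, $\Lambda\le\tfrac{m(m-1)}{2}$. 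Substituting gives $T\le L+\tfrac{m-1}{2}+p_{\max}\le T^*+p_{\max}$, and since $p_{\max}=\tfrac{p_{\max}}{L}\,L\le\tfrac1\alpha T^*$ we obtain $T\le(1+1/\alpha)T^*$. (When $n\le m$ every job runs on its own machine and LSPT is optimal by the argument of Theorem~\ref{Lem:LPT_Short}, so this case is immediate.)

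For the general $2$-approximation I would proceed in the spirit of Theorem~\ref{Thm:Naive_LPT}: after discarding every job placed after $\ell$ in the LSPT order (which leaves the remaining schedule unchanged and keeps $\ell$ last-completing), charge each slack slot to the unique job beginning in it to bound $\Lambda\le m\lceil\ell/g\rceil$, giving $T\le L+\lceil\ell/g\rceil+p_\ell$, and pair this against $T^*\ge\max\{L,\max_j(\lceil j/g\rceil+p_j)\}$. Here I expect the main obstacle: unlike LPT, the LSPT order places the longest relevant job last, so the clean pairing ``$\lceil\ell/g\rceil+p_\ell\le T^*$'' that LPT exploits no longer holds, and a crude slack count only yields a factor $3$. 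The fix is to reuse the structure above --- distinct, ordered completions on the long prefix and the tight short-instance behaviour of Theorem~\ref{Lem:LPT_Short} on the short suffix --- to argue that the idle accumulated across the long-to-short transition is genuinely small rather than $m\lceil\ell/g\rceil$, which is exactly the delicate step. Once the $2$-approximation and the $(1+1/\alpha)$ bound are both in hand, taking their minimum yields the claimed $(1+\min\{1,1/\alpha\})$ ratio for long instances.
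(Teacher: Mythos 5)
Your treatment of the long-instance bound is correct and is essentially the paper's own argument: the paper also observes that, under SPT order and $g=1$, two long jobs cannot complete simultaneously (it phrases this as a contradiction -- if a machine went idle after slot $m$ while long jobs remained, two long jobs $j',j''$ with $s_{j'}<s_{j''}$ and $p_{j'}\le p_{j''}$ would have to finish together -- whereas you derive the strictly ordered completions directly), concludes that all idle time before the last long-job start is the initial triangle $\frac{m(m-1)}{2}$, and then pairs $T\le T^*+p_{\max}$ with $p_{\max}\le\min\{1,1/\alpha\}\,T^*$ exactly as you do. Your version is, if anything, written more carefully than the paper's.

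The genuine gap is the general $2$-approximation, which you explicitly leave open at the ``long-to-short transition'' step. The resolution is not a refined slack count of the form $\Lambda\le m\lceil\ell/g\rceil$ paired with $\lceil\ell/g\rceil+p_\ell\le T^*$; the paper abandons that pairing entirely. Instead it reuses precisely the structure you built for the long prefix: let $t'$ be the first slot after $m$ with an idle machine. The no-simultaneous-completion argument shows every long job starts before $t'$, so $t'\le\frac{m-1}{2}+\frac{1}{m}\sum_j p_j\le T^*$ by Lemma~\ref{Lemma:Initial_Idle} (this is where $|\mathcal{J}^L|\ge m$ is needed). After $t'$ only short jobs remain to start, and every subsequent slot up to the last start either has all $m$ machines busy or hosts a short-job start, so the suffix contributes at most $n+p_{\min}\le T^*$ (if the last-completing job is long, the suffix is instead at most $p_{\max}\le T^*$). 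Summing gives $T\le 2T^*$ with no delicate transition estimate. You are also missing the complementary case $|\mathcal{J}^L|<m$, where the Lemma~\ref{Lemma:Initial_Idle} lower bound is unavailable; the paper handles it separately by noting every long job then starts by $p_{\max}$, giving $T\le p_{\max}+n+p_{\min}\le 2T^*$. Your instinct that the ordered-completion structure plus the short-instance behaviour closes the argument is right, but as written the $2$-approximation -- and hence the $\min$ in the final ratio -- is not established.
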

\begin{proof}
Let $\mathcal{S}$ be the LSPT schedule and suppose that it attains makespan $T$.
Moreover, denote by $\ell\in\mathcal{S}$ the job completing last, i.e.\ $C_{\ell}=T$.
Similarly to the Lemma~\ref{Lemma:Initial_Idle} proof, $\mathcal{S}$ is compact.
% %Suppose that $|\mathcal{J}^L|\geq m$ and let $i\in\mathcal{J}$ be the job completing last.
% Thus, if $\ell\in\mathcal{J}^S$, by arguing along the same lines with the Lemma~\ref{Thm:Naive_LPT} proof for analyzing LPT, schedule $\mathcal{S}$ has makespan $T\leq \frac{1}{n}\sum_{j=1}^np_j+n+p_{\min}$. 
% %similarly to the LPT analysis, because $\mathcal{S}$ is compact.
% Next, consider that $\ell\in\mathcal{J}^L$.
We distinguish two cases based on whether $|\mathcal{J}^L|< m$, or $|\mathcal{J}^L|\geq m$.
In the former case, every job $j\in\mathcal{J}^L$ satisfies $s_j\leq p_{\max}$.
Using similar reasoning to the Theorem~\ref{Thm:Naive_LPT} proof, we get $T\leq p_{\max} + n + p_{\min}$.

In latter case, let $t'=\min\{t:|\mathcal{A}_t|<m, t>m\}$.
That is, $t'$ is the earliest time slot strictly after time $t=m$ when at least one machine is idle.
We claim that $s_j<t'$ for each $j\in\mathcal{J}^L$, i.e.\ every long job begins before $t'$ and there is no idle machine time during 
$[m,t']$. %, where $r=\max_{j\in\mathcal{J}^L}\{s_j\}$ is the last long job start time.
Assume for contradiction that there is some job $j\in\mathcal{J}^L$ such that $s_j>t'$.
Since, there is an idle machine at $t'$ and long jobs remain to begin after $t'$, at least two jobs $j',j''\in\mathcal{J}^L$ complete simultaneously at $t'-1$, i.e.\ $C_{j'}=C_{j''}=t'-1$.
Because of the BJSP constraint $|\mathcal{B}_t|\leq 1$, it must by the case that $s_{j'}\neq s_{j''}$.
W.l.o.g.\ $s_{j'}< s_{j''}$.
By the SPT ordering, we also have that $p_{j'}\leq p_{j''}$.
Thus, we get the contradiction $C_{j'}<C_{j''}$.
Our claim implies that $t'\leq\frac{m(m-1)}{2}+\frac{1}{n}\sum_{j\in\mathcal{J}}p_j$.
%, and (ii) if $s_j\geq t'$, then $j\in\mathcal{J}^S$.

Next, consider two subcases based on whether $\ell\in\mathcal{J}^L$, or $\ell\in\mathcal{J}^S$.
If $\ell\in\mathcal{J}^L$, then $T\leq t'+p_{\max}$.
Otherwise, if $\ell\in\mathcal{J}^S$, then $T\leq t'+n+p_{\min}$. 
In both subcases,
\begin{equation*}
T\leq \frac{m(m-1)}{2}+\frac{1}{n}\sum_{j=1}^np_j+\max\left\{p_{\max},n+p_{\min}\right\}.
\end{equation*}
%
% Now consider the case $|\mathcal{J}^L|< m$.
% Clearly, every job $j\in\mathcal{J}^L$ satisfies $s_j\leq p_{\max}$.
% Hence, we have that $T\leq p_{\max} + n + p_{\max}$.
%
Obviously, the optimal solution satisfies:
\begin{equation*}
T^*\geq \max\left\{\frac{1}{m}\sum_{j=1}^np_j,p_{\max},n+p_{\min}\right\}.
\end{equation*}
In all cases, $T\leq 2T^*$.
For long instances, i.e.\ the case $\ell\in\mathcal{J}^L$, LSPT is $(1+\min\{1,1/\alpha\})$-approximate.
\end{proof}

\section{Parallelizing Long and Short Jobs}
\label{Section:LSM}

This section proposes the \emph{Long Short Job Mixing Algorithm (LSM)} to compute 1.985-approximate schedules for a broad family of instances, e.g.\ with at most $\lceil 5m/6\rceil$ jobs of processing time (i) $p_j>(1-\epsilon)(\sum_{j'}p_{j'})$, or (ii) $p_j>(1-\epsilon)(\max_{j'}\{j'+p_{j'}\})$ assuming non-increasing $p_{j}$'s, for sufficiently small constant $\epsilon>0$.
For degenerate instances with more than $\lceil5m/6\rceil$ jobs of processing time $p_j>T^*/2$, where $T^*$ is the optimal objective value, LSM requires constant machine augmentation to achieve an approximation ratio lower than 2.
There can be at most $m$ such jobs.
%using 1.2-machine augmentation.
In the Royal Mail application, machine augmentation \cite{davis2011survey,kalyanasundaram2000speed,phillips2002optimal} adds more delivery vans.
For simplicity, we also assume that $m=\omega(1)$, but the approximation ratio can be adapted for smaller values of $m$.
However, we require that $m\geq 7$.

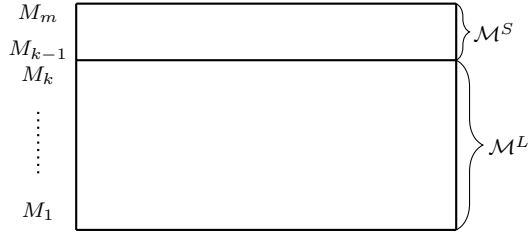
\begin{figure}[t]
\begin{center}
\begin{tikzpicture}[scale=0.5]

\node (m1) at (-1,0.5) {\footnotesize $M_1$};
\node (m2) at (-1,4.1) {\footnotesize $M_k$};
\draw[dotted, thick] (-1,1.5) -- (-1,3.25);
\node (m4) at (-1,4.75) {\footnotesize $M_{k-1}$};
\node (m4) at (-1,5.75) {\footnotesize $M_m$};

%------------------------------------------------------

% \filldraw[fill=black!20!white] (0,0) rectangle (10,1);
% \node at (5,0.5) {\footnotesize $2m-1$};

\draw[thick] (0,0) -- (0,6);
\draw[thick] (10,0) -- (10,6);
\draw[thick] (0,0) -- (10,0);
\draw[thick] (0,6) -- (10,6);
\draw[thick] (0,4.5) -- (10,4.5);

%------------------------------------------------------

\draw [decorate,decoration={brace,mirror,amplitude=10pt},xshift=0pt,yshift=0pt]
(10,0) -- (10,4.5) node [black,midway,xshift=0.7cm] {\footnotesize $\mathcal{M}^L$};

\draw [decorate,decoration={brace,mirror,amplitude=5pt},xshift=0pt,yshift=0pt]
(10,4.5) -- (10,6) node [black,midway,xshift=0.5cm] {\footnotesize $\mathcal{M}^S$};

\end{tikzpicture}
\end{center}
\caption{\change{ Structure of schedules produced by LSM. Long jobs are prioritized in the subset $\mathcal{M}^L$ of the machines. The subset $\mathcal{M}^S$ of remaining machines execute only short jobs.}}
\label{Figure:LSM_Structure}
\end{figure}

LSM attempts to construct a feasible schedule where long jobs are executed in parallel with short jobs, \change{as depicted in Figure~\ref{Figure:LSM_Structure}}.
%short job starts occur at time slots where many long jobs are executed in parallel.
LSM uses $m^L<m$ machines for executing long jobs.
The remaining $m^S=m-m^L$ machines are reserved for short jobs. 
%Furthermore, LSM assigns short jobs once all long jobs have been scheduled.
Carefully selecting $m^L$ allows to obtain a good trade-off in terms of (i) delaying long jobs and (ii) achieving many short job starts at time slots when many long jobs are executed in parallel.
Here, we set $m^L=\lceil 5m/6\rceil$.
% %Despite the fact that strictly less than $m$ are used for long jobs, 
% Let $\mathcal{S}(m^L)$ be the LSM schedule when $m^L$ machines are dedicated to the long jobs and denote by $T(m^L)$ the corresponding makespan.
% LSM exhaustively constructs all schedules, for $m^L\in[1,m-1]$, and selects the one with the smallest makespan.
% To derive the LSM approximation ratio, we elaborate on concrete values of $m^L$ for different cases encountered in the analysis.
% LSM returns a schedule $\mathcal{S}$ of makespan $\min\{T(m^L):1\leq m^L\leq m-1\}$.
% %To select the $m^L$ value, LSM constructs a feasible schedule.
% %The remainder of the section proves an approximation ratio for LSM.
% %Theorem~\ref{Thm:LSM} proves a $(\frac{m^L}{m}+\frac{2}{3})$-approximation ratio for LSM.
% %When $m^L=\lceil3m/4\rceil$, LSM is 1.79-approximate.
Before formally presenting LSM, we modify the notions of long and short jobs by setting $\mathcal{J}^L=\{j:p_j\geq m^L, j\in\mathcal{J}\}$ and $\mathcal{J}^S=\{j:p_j<m^L,j\in\mathcal{J}\}$, %the sets of long and short jobs, 
respectively. 
%Similarly to before, the sets of long and short jobs are defined as $\mathcal{J}^L=\{j:p_j\geq m, j\in\mathcal{J}\}$ and $\mathcal{J}^S=\{j:p_j<m,j\in\mathcal{J}\}$.
Both $\mathcal{J}^L$ and $\mathcal{J}^S$ are sorted in non-increasing processing time order.
%LSM maintains two different lists for $\mathcal{J}^L$ and $\mathcal{J}^S$ which are both sorted according to LPT.
%LSM algorithm considers the partitioning $\mathcal{J}=\mathcal{J}^L\cup\mathcal{J}^S$ of jobs into the sets $\mathcal{J}^L=\{j\in\mathcal{J}:p_j\geq m\}$ and $\mathcal{J}^S=\{j\in\mathcal{J}:p_j<m\}$ of long and short jobs, respectively.
%Both sets of jobs are sorted in non-increasing order of processing times.
LSM schedules jobs greedily by traversing time slots in increasing order. %from the earliest to the latest.
Let $\mathcal{A}_t^L$ be the set of alive long jobs at time slot $t\in D$.
For $t=1,\ldots,\tau$, LSM proceeds as follows:
(i) if $|\mathcal{A}_t^L|<m^L$, then the next long job begins at $t$,
(ii) if $|J^L|=0$ or $m^L\leq|\mathcal{A}|< m$, LSM schedules the next short job to start at $t$,
else (iii) LSM considers the next time slot.
%For analysis purposes, if at some point $|J^L|=0$, then the remaining short jobs begin in the next available time slots.
%On the other hand, if it happens that $|J^S|=0$, then the long jobs are executed greedily on the $m^L$ machines and $m^S$ machines remain idle.
From a complementary viewpoint, LSM partitions the machines $\mathcal{M}$ into $\mathcal{M}^L=\{i:i\leq m^L,i\in\mathcal{M}\}$ and $\mathcal{M}^S=\{i>m^L,i\in\mathcal{M}\}$.
LSM prioritizes long jobs on machines $\mathcal{M}^L$ and assigns only short jobs to machines $\mathcal{M}^S$.
A job may undergo processing on machine $i\in\mathcal{M}^S$ only if all machines in $\mathcal{M}^L$ are busy.
The Algorithm~\ref{Alg:LSM} pseudocode describes LSM. %is a more formal description of LSM.

\begin{algorithm}
\caption{Long-Short Mixing (LSM)}
\begin{algorithmic}
\State Sort $\mathcal{J}^L=\{j\in\mathcal{J}:p_j\geq m^L\}$ in non-increasing order.
\State Sort $\mathcal{J}^S=\{j\in\mathcal{J}:p_j<m^L\}$ in non-increasing order.
\For {$t=1,\ldots,\tau$}
\If {$|\mathcal{A}_t|<m$}
\If {$|\mathcal{A}_t^L|<m^L$ and $|\mathcal{J}^L|>0$}
\State $j'=\arg\max_{j\in\mathcal{J}^L}\{p_j\}$
\State $\mathcal{J}^L=\mathcal{J}^L\setminus\{j'\}$
\ElsIf {$|\mathcal{J}^S|>0$}
\State $j'=\arg\max_{j\in\mathcal{J}^S}\{p_j\}$
\State $\mathcal{J}^S=\mathcal{J}^S\setminus\{j'\}$
\EndIf
\State In either of the above cases, set $s_{j'}=t$.
\EndIf
\EndFor
\end{algorithmic}
\label{Alg:LSM}
\end{algorithm}

Theorem \ref{Thm:LSM} shows that LSM achieves a better approximation ratio than LPT, i.e.\ strictly lower than 2, for a broad family of instances.

%Theorem \ref{Thm:LSM} shows that, with bounded machine augmentation, LSM achieves a better approximation ratio than LPT, i.e.\ strictly lower than 2.

%proves a better than 2 approximation ratio for LSM and captures the trade-off between prioritizing long or short jobs.
%Note that $m^L<m$ machines using LPT, the analysis uses arguments in the proof of Theorem~\ref{Thm:LPT_Long} for analyzing LPT in the case of long instances.

\begin{theorem}
\label{Thm:LSM}
LSM is 1.985-approximate (i) for instances with no more than $\lceil 5m/6\rceil$ jobs s.t.\ $p_j>(1-\epsilon)\max\{\frac{1}{m}\sum_{j'}p_{j'},\max_{j'}\{j'+p_{j'}\}\}$ for sufficiently small $\epsilon>0$, and (ii) for general instances using 1.2-machine augmentation.
\end{theorem}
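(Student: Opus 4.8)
The plan is to compare the LSM makespan $T$ with three standard lower bounds on the optimum, and then control the only source of loss, namely idle machine time accumulated before the last job starts. First I would record, exactly as in the proofs of Theorem~\ref{Thm:Naive_LPT} and Lemma~\ref{Lemma:Initial_Idle}, that for $g=1$ the optimum satisfies $T^*\geq L_{\mathrm{avg}}:=\frac{1}{m}\sum_{j}p_j$, $T^*\geq L_{\mathrm{seq}}:=\max_{j}\{j+p_j\}$ with the $p_j$ sorted non-increasingly, and $T^*\geq p_{\max}$. As in Lemma~\ref{Lemma:Initial_Idle}, LSM produces a compact schedule, so I write $T=s_\ell+p_\ell$ for a last-completing job $\ell$, and bound its start time by a pure packing estimate $s_\ell\leq L_{\mathrm{avg}}+\Lambda_{<s_\ell}/m$, where $\Lambda_{<s_\ell}$ is the total idle machine-time accumulated over $[1,s_\ell]$ (total area equals work done plus idle). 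Everything then reduces to bounding $\Lambda_{<s_\ell}$ together with $p_\ell$.

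I would split on the size of $p_\ell$. If $p_\ell>(1-\epsilon)\max\{L_{\mathrm{avg}},L_{\mathrm{seq}}\}$, then $\ell$ is one of the \emph{very long} jobs, and here the instance restriction of part~(i) is indispensable: there are at most $m^L=\lceil 5m/6\rceil$ such jobs, they are the largest, and since each exceeds $m^L$ none of them completes during the first $m^L$ slots. Hence during the ramp-up LSM places them one per slot on the long machines $\mathcal{M}^L$ with no intervening idle, so $s_\ell$ equals the rank of $\ell$ in the sorted order, giving $T=s_\ell+p_\ell\leq L_{\mathrm{seq}}\leq T^*$. This case is therefore essentially free.

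The substantive case is $p_\ell\leq(1-\epsilon)\max\{L_{\mathrm{avg}},L_{\mathrm{seq}}\}$, where the whole effort goes into bounding $\Lambda_{<s_\ell}$. I would decompose this idle into (a) the ramp-up triangles forced by the $g=1$ constraint, one on the $m^L$ long machines of size $\tfrac{m^L(m^L-1)}{2}$ and one on the $m^S=m-m^L$ short machines, each quantified by a triangular sum in $m^L$ and $m^S$; and (b) the ``slack-period'' idle on the long machines created by simultaneous long-job completions, which I would bound by the concave/Jensen argument of Lemma~\ref{Lemma:Idle_Slots} and Lemma~\ref{Lem:Last_Slack_Period} applied to the $m^L$-machine long subschedule. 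The key structural claim, and the reason mixing beats pure LPT, is that the short jobs running in parallel on $\mathcal{M}^S$ backfill precisely the gaps left by the staggered long-job completions, so that after ramp-up the schedule stays essentially full and no further idle survives. Combining (a) and (b) with $T^*\geq L_{\mathrm{avg}}$ and $T^*\geq L_{\mathrm{seq}}$, substituting $m^L=\lceil 5m/6\rceil$, and using $m\geq 7$ and $m=\omega(1)$ to absorb floors, ceilings and $O(1/m)$ terms, I would show the maximum of the competing ratio terms is minimized at this split and evaluates to $1.985$.

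For part~(ii) I would handle the degenerate instances by augmentation. A general instance may have up to $m$ jobs with $p_j>T^*/2$, more than the $m^L$ long machines can host, which is exactly the configuration excised in~(i). Running LSM on $1.2m=6m/5$ machines makes $m^L=\lceil 5(6m/5)/6\rceil=m$ long machines and $m^S=m/5$ short machines, so every job with $p_j>T^*/2$ now fits on a long machine and the ramp-up/idle accounting of the previous paragraph goes through verbatim, still measured against the original optimum $T^*$, again yielding $T\leq 1.985\,T^*$. I expect the main obstacle to be the tight idle estimate in the third paragraph: formalizing that short jobs genuinely cancel the staggered-completion gaps so that only the ramp-up idle remains, and then squeezing the constant down to $1.985$ through the precise choice $m^L=5m/6$. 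The near-degenerate instances with many $p_j\approx T^*$ are the binding worst case, which is why they are forbidden in~(i) and neutralized by the choice $m^L=m$ after augmentation in~(ii).
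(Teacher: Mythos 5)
Your scaffolding (the three lower bounds, the observation that the instance restriction and the $1.2$-augmentation both serve to keep the very long jobs from colliding, and the choice $m^L=\lceil 5m/6\rceil$) matches the paper, and your first case is a clean observation: a very long job is among the top $m^L$ long jobs, so it starts at its rank and finishes by $\max_j\{j+p_j\}\leq T^*$ (though note this needs $p_\ell\geq m^L$ so that $\ell$ is long in LSM's classification; a job can exceed $(1-\epsilon)\max\{\cdot\}$ yet be short, in which case LSM does not start it at its rank). The genuine gap is in your main case. The claim that ``the short jobs running in parallel on $\mathcal{M}^S$ backfill precisely the gaps \ldots\ so that after ramp-up the schedule stays essentially full and no further idle survives'' is false, and it is precisely what the $g=1$ constraint forbids: only one job may begin per slot, so the $m^S$ short machines cannot in general be kept busy, and the dominant loss is neither the ramp-up triangles nor the slack-period idle on the long machines but the up to $n^S$ slots during which a short job starts while fewer than $m$ machines run. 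If your decomposition $s_\ell\leq L_{\mathrm{avg}}+\Lambda/m$ with $\Lambda$ consisting only of ramp-up and slack-period idle were valid, it would yield a ratio near $1$, which is not attainable; on instances dominated by short jobs the makespan is governed by the start-slot count $n^L+n^S+p_\ell$, i.e.\ by $L_{\mathrm{seq}}$, not by $L_{\mathrm{avg}}$ plus a small correction.

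What the paper actually does in this case, and what your sketch is missing, is a counting argument rather than an idle-area argument: it partitions $[1,T]$ into full slots $F^L,F^S$ and start slots $B^L,B^S$, and proves the key inequality (\ref{Eq:Small_Starts}), namely that during the full slots at least $m^S$ short jobs begin per $m^L$ consecutive slots, so $\lambda(B^S)\leq n^S-(m^S/m^L)\bigl(\lambda(F^L)+\lambda(F^S)\bigr)$. This is the quantitative form of ``mixing beats pure LPT'': every full slot absorbs a fraction $m^S/m^L$ of a short-job start that would otherwise cost a slot of its own. Combining it with the packing inequality over $F^L,B^L,F^S$ and the bounds $T^*\geq\frac1m\sum_j p_j$ and $T^*\geq n^L+n^S+p_\ell$ is what produces the trade-off optimized at $m^L=\lceil 5m/6\rceil$ and the constant $1.985$; the paper also needs a separate treatment of $\ell\in\mathcal{J}^L$ versus $\ell\in\mathcal{J}^S$ (the former uses a packing bound over the $m^L$ long machines only, giving the factor $m/m^L$), and it is the subcase $\ell\in\mathcal{J}^L$ with $p_\ell>T^*/2$ --- not the last job per se --- where the instance restriction or the augmentation is invoked so that no two jobs of $\mathcal{J}^V$ share a long machine. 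Without an argument of the type of (\ref{Eq:Small_Starts}), your proof of the $1.985$ bound does not go through.
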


%}

\begin{proof}
Let $\mathcal{S}$ be the LSM schedule and $\ell=\arg\max\{C_j:j\in\mathcal{J}\}$ the job completing last.
That is, $\mathcal{S}$ has makespan $T=C_{\ell}$. 
For notational convenience, given a subset $P\subseteq D$ of time slots, denote by $\lambda(P)=|P|$ and $\mu(P)=\sum_{t\in P}|\mathcal{A}_t|$ 
%and $\sigma(P)=|\{t:|\mathcal{B}_t|\geq 1,t\in P\}|$ 
the number of time slots and executed processing load, 
%and number of job starts, 
respectively, during $P$.
Furthermore, let $n^L=|\mathcal{J}^L|$ and $n^S=|\mathcal{J}^S|$ be the number of long and short jobs, respectively.
We distinguish two cases based on whether (i) $\ell\in\mathcal{J}^S$ or (ii) $\ell\in\mathcal{J}^L$.
%set $r=\max_{j\in\mathcal{J}}\{s_j\}$  equal to the last job start.
%In the former case, we loose in the length of the last job.
%In the latter case, we loose in the number of short job starts.
% The analysis proceeds similarly to the Theorem \ref{Thm:LPT_Long} analysis for the LPT performance for short instances, but requires bounding the LPT performance on a smaller number of machines.
% Similar reasoning with the Lemma~\ref{Lemma:Initial_Idle} proof implies that $\mathcal{S}$ is compact.
%
\paragraph{Case \bm{$\ell\in\mathcal{J}^S$}}
We partition time slots $\{1,\ldots,T\}$ into five subsets. %individual pieces for which we use distinct arguments.
Let $r^L=\max_{j\in\mathcal{J}^L}\{s_j\}$ and $r^S=\max_{j\in\mathcal{J}^S}\{s_j\}$ be the maximum long and short job start time, respectively, in $\mathcal{S}$.
Since $\ell\in \mathcal{J}^S$, it holds that $r^L<r^S$.
%Denote by $\mathcal{A}_t^L$ the subset of long jobs running at time slot $t$.
For each time slot $t\in[1,r^L]$ in $\mathcal{S}$, either $|\mathcal{A}_t^L|=m^L$ long jobs simultaneously run at $t$, or not.
In the latter case, it must be the case that $t=s_j$ for some $j\in\mathcal{J}^L$. %, i.e.\ a long job $j$ starts at $t$.
On the other hand, for each time slot $t\in[r^L+1,s_{\ell}]$, either $|\mathcal{A}_t|=m$, %and all machines are busy, 
or $t=s_j$ %and some short job $j\in \mathcal{J}^S$ begins.
for some $j\in \mathcal{J}^S$.
Finally, $[s_{\ell},T(\mathcal{S})]$ is exactly the interval during which job $\ell$ is executed.
%Figure~\ref{Fig:LSM_Time} illustrates this partitioning.
%In particular, let
Then, we may define: 
\begin{itemize}
  \item $F^L=\{t:|\mathcal{A}_t^L|=m^L\}$, 
  \item $B^L=\{t:|\mathcal{A}_t^L|<m^L,t=s_j,j\in\mathcal{J}^L\}$, 
  \item $F^S=\{t:t>r^L,|\mathcal{A}_t|=m\}$, and
  \item $B^S=\{t:t>r^L,|\mathcal{A}_t|<m,t=s_j,j\in\mathcal{J}^S\}$.
\end{itemize}
%Denote the time length, i.e.\ number of time slots, that cases (i)-(iii) occur by $\lambda,\sigma^L$, and $\sigma^S$, respectively.
%In fact, we denote by $\sigma^S$ only the time slots at which long jobs have been completed.
%Denote by $\lambda^F=|\{t:|\mathcal{B}_t|=0,t\in\mathcal{D}\}|$, $\lambda^L=|\{t:s_j=t,j\in\mathcal{J}^L,t\in\mathcal{D}\}|$, and $\lambda^S=|\{t:s_j=t,j\in\mathcal{J}^S,t\in\mathcal{D})\}|$ the corresponding length of time in $\mathcal{S}$.
Clearly, %sets $F^L$, $B^L$, $F^S$, $B^S$ are pairwise disjoint and
\begin{align}
\label{Eq:Alg_Makespan_Bound}
T \leq \lambda(F^L) + \lambda(B^L) + \lambda(F^S) + \lambda(B^S) + p_{\ell}.
\end{align}
Next, we upper bound a linear combination of $\lambda(F^L)$, $\lambda(B^S)$, and $\lambda(F^S)$ taking into account the fact that certain short jobs begin during a subset $\hat{B}^S\subseteq F^L\cup F^S$ of time slots. %certain number of short jobs begin during $[1,r^L]$.
%Let $\hat{B}^S$ be the set of time slots at which a small job starts during $[1,r^L]$.
%Clearly, $\hat{B}^S\subseteq F^L$ %. % these job starts occur during the time slots $F^L$.
By definition, $\lambda(B^S)\leq n^S-\lambda(\hat{B}^S)$.
We claim that $\lambda(\hat{B}^S)\geq(m^S/m^L)(\lambda(F^L)+\lambda(F^S))$.
For this, consider the time slots $F^L\cup F^S$ as a continuous time period by disregarding intermediate $B^L$ and $B^S$ time slots.
Partition this $F^L\cup F^S$ time period into subperiods of equal length $m^L$.
Note that no long job begins during $F^L\cup F^S$ and the machines in $\mathcal{M}^S$ may only execute small jobs in $\mathcal{S}$.
Because of the greedy nature of LSM and the fact that $p_j< m^L$ for $j\in \mathcal{J}^S$, there are at least $m^S$ short job starts in each subperiod.
Hence, our claim is true and we obtain that $\lambda(B^S)\leq n^S - (m^S/m^L)(\lambda(F^L)+\lambda(F^S))$, or equivalently:
\begin{align}
\label{Eq:Small_Starts}
m^S\lambda(F^L) + m^S\lambda(F^S) + m^L\lambda(B^S) \leq m^Ln^S.
\end{align}
%
% We claim that $\lambda(\hat{B}^S)\geq\lambda(F^L)/3$.
% For this, consider the time slots $F^L$ as a continuous time period by discarding intermediate $B^L$ time slots.
% Partition this $F^L$ time period into subperiods of equal length $m/4$.
% Because $p_j\leq 3m/4$ for $j\in J^S$, in each machine, there is at least one small job start every three subperiods.
% Hence, our claim is true and we obtain that
% \begin{align}
% \label{Eq:Small_Start_Bound}
% \lambda(B^S)\leq |J^S| - \lambda(F^L)/3.
% \end{align}
%
Subsequently, we upper bound a linear combination of $\lambda(F^L)$, $\lambda(B^L)$, and $\lambda(F^S)$ using a simple packing argument. 
The part of the LSM schedule for long jobs is exactly the LPT schedule for a long instance with $n^L$ jobs and $m^L$ machines.
If $|\mathcal{A}_{r^L}^L|<m$, we make the convention that $\mu(B^L)$ does not contain any load of jobs beginning in the maximal slack period completed at time $r^L$.
Observe that $\mu(F^L)=m^L\lambda(F^L)$ and $\mu(F^S)=m\lambda(F^S)$.
%let $\mu^L$ and $\mu^S$ be the processing load executed during the $\sigma^L$ and $\lambda^S$ time slots, respectively.
%Because $m^L$ machines are dedicated to the jobs in $\mathcal{J}^L$, 
Additionally, by Lemma~\ref{Lemma:Idle_Slots}, we get that $\mu(B^L)\geq m^L\lambda(B^L)/2$, except possibly the very last slack period.
Then, %by Lemma~\ref{Lem:Last_Slack_Period},
$\mu(F^L)+\mu(B^L)+\mu(F^S)\leq\sum_{j\in J}p_j$.
%Given that $m^L<m$,
% The load $\mu^F$ executed during full periods is clearly upper bounded by the total load $\frac{1}{m}\sum_{j\in\mathcal{J}}p_j$ after subtracting the load $\mu^L+\mu^S$ executed during the long and short job starting periods.
% Hence,
Hence, we obtain:
\begin{align}
\label{Eq:Packing}
m^L\lambda(F^L) + \frac{1}{2}m^L\lambda(B^L) + m\lambda(F^S) \leq \sum_{j\in\mathcal{J}}p_j.
\end{align}
% \begin{align}
% \label{Eq:Full_Load_1}
% \lambda(F^L) + \lambda(F^S) 
% %= \frac{\mu(F^L)}{m^L} + \frac{\mu(F^S)}{m} 
% \leq \frac{1}{m^L}\left(\sum_{j\in J}p_j - \mu(B^L) - \mu(F^S)\right) + \frac{\mu(F^S)}{m}
% \leq \frac{1}{m^L}\sum_{j\in J^L}p_j - \frac{\lambda(B^L)}{2}
% \end{align}
%
Summing Expressions (\ref{Eq:Small_Starts}) and (\ref{Eq:Packing}),
\begin{align*}
\left( m^L+m^S \right)\lambda(F^L) + \frac{1}{2}m^L\lambda(B^L) + \left( m+m^S \right)\lambda(F^S) + m^L\lambda(B^S) \leq \sum_{j\in\mathcal{J}}p_j + m^Ln^S.
\end{align*}
Because $m=m^L+m^S$, if we divide by $m$, the last expression gives:
\begin{align}
\label{Eq:Short_Bound}
\lambda(F^L) + \frac{1}{2}\left( \frac{m^L}{m^{\phantom{L}}} \right)\lambda(B^L) + \lambda(F^S) + \left( \frac{m^L}{m^{\phantom{L}}} \right)\lambda(B^S)  \leq 
\frac{1}{m}\sum_{j\in\mathcal{J}}p_j + \left( \frac{m^L}{m^{\phantom{L}}} \right)n^S.
\end{align}
We distinguish two subcases based on whether $\lambda(F^S)+\lambda(F^L)\geq 5n^S/6$ or not.
Obviously, $\lambda(B^L)\leq n^L$.
In the former subcase, Inequality (\ref{Eq:Small_Starts}) gives $\lambda(B^S)\leq(1-\frac{5m^S}{6m^L})n^S$.
Using Inequality (\ref{Eq:Short_Bound}), Expression (\ref{Eq:Alg_Makespan_Bound}) becomes:
\begin{align*}
T 
& \leq \frac{1}{m}\sum_{j\in\mathcal{J}}p_j + \left( 1-\frac{m^L}{2m} \right)n^L
+ \left[\left( \frac{m^L}{m^{\phantom{L}}} \right)+\left( 1-\frac{m^L}{m^{\phantom{L}}} \right) \left(1-\frac{5m^S}{6m^L} \right)\right]n^S + p_{\ell}.
% & \leq \frac{1}{m}\sum_{j\in\mathcal{J}}p_j + (1-\frac{m^L}{2m})n^L + 
% (1-\frac{m^S}{2m^L}+\frac{m^S}{2m})n^S
\end{align*}
For $m^L=\lceil 5m/6\rceil$, we have (i) $5/6\leq m^L/m\leq 5/6+1/m$ and (ii) $m^S/m^L\geq \frac{1/6 - 1/m}{5/6+1/m}$.
Given $m=\omega(1)$,
%, and (iii) $m^S/m\leq$. 
%Furthermore,
%because $m^S\leq m^L\leq m$,
% \begin{align*}
% T & \leq \frac{1}{m}\sum_{j\in\mathcal{J}}p_j + (1-\frac{m^L}{2m})n^L
% + \frac{1}{2}\left( 1-\frac{m^L}{m^{\phantom{L}}} \right)n^S + p_{\ell}
% \end{align*}
% For $m^L=\lceil 5m/6\rceil$, it clearly holds that $m^L\geq m^S$. 
% By using the standard bounds $a/b\leq\lfloor a/b\rfloor\leq a/b+(b-1)/b$ for each pair of integers $a,b\in\mathbb{Z}^+$,
\begin{align*}
T & \leq \frac{1}{m}\sum_{j\in\mathcal{J}}p_j + \left(1-\frac{5}{12} \right)n^L + \left[\frac{5}{6}+ \left(1-\frac{5}{6} \right) \left(1-\frac{1}{5} \right)\right]n^S + p_{\ell}.
\end{align*}
% \begin{align*}
% T(\mathcal{S}) & \leq \frac{1}{m}\sum_{j\in\mathcal{J}}p_j + \frac{m^L}{m}n^S + (1-\frac{m^L}{2m})\lambda(B^L)  
% + \left( 1-\frac{m^L}{m^{\phantom{L}}} \right)\lambda(B^S) + p_{\ell} \\
% & \leq \frac{1}{m}\sum_{j\in\mathcal{J}}p_j + \left( 1-\frac{m^L}{m^{\phantom{L}}} \right)\lambda(B^S) + \left( 1-\frac{m^L}{m^{\phantom{L}}} \right)\lambda(B^L) + p_{\ell}
% \end{align*}
Note that an optimal solution $\mathcal{S}^*$ has makespan:
\begin{align*}
T^* \geq \max\left\{ \frac{1}{m}\sum_{j\in\mathcal{J}}p_j, n^L+n^S+p_{\ell} \right\}.
\end{align*}
Because the instance is mixed with long and short jobs and we consider the case $\ell\in\mathcal{J}^S$, we have $p_{\ell}\geq T^*/2$.
Therefore, $T\leq(1+\frac{29}{30}+(\frac{29}{30})\frac{1}{2})T^*\leq 1.985 T^*$.
Now, consider the opposite subcase where $\lambda(F^L)+\lambda(F^S)\leq 5n^S/6$. 
Given $\lambda(B^L)\leq n^L$ and $\lambda(B^S)\leq n^S$, expression (\ref{Eq:Alg_Makespan_Bound}) becomes
$T \leq \frac{11}{6}(n^S+n^L+p_{\ell}) \leq 1.835\cdot T^*$.
\paragraph{Case \bm{$\ell\in\mathcal{J}^L$}}
%The analysis bears similarities with the Theorem \ref{Thm:LPT_Long} proof for analyzing LPT in the case of long instances.
Recall that $\mathcal{A}_t^L$ and $\mathcal{B}_t^L$ are the sets of long jobs which are alive and begin, respectively, at time slot $t$.
Furthermore, $r^L=\max\{s_j:j\in\mathcal{J}^L\}$ is the last long job starting time. 
Because LSM greedily uses $m^L$ machines for long jobs, either $|\mathcal{A}_t^L|=m^L$, or $|\mathcal{B}_t^L|=1$, for each $t\in[1,r^L]$.
So, we may partition time slots $\{1,\ldots,r^L\}$ into $F^L=\{t:|\mathcal{A}_t^L|=m\}$ and $B^L=\{t:|\mathcal{A}_t^L|<m,|\mathcal{B}_t^L|=1\}$ and obtain: 
%At each time slot $t\in[1,r^L]$, LSM uses $m^L$ machines  $|\mathcal{A}_t^L|<m^L$.
%Hence, either $|\mathcal{A}_t^L|\geq m^L$, or 
%Denote by $\lambda^L$ and $\sigma^L$ the corresponding amount of time.
%Then,
\begin{align*}
T \leq \lambda(F^L) + \lambda(B^L) +p_{\ell}.
\end{align*}
Because $m^L$ long jobs are executed at each time slot $t\in F^L$, 
\begin{equation*}
\lambda(F^L)\leq\frac{1}{m^L}\left[\sum_{j\in\mathcal{J}^L}p_j-\mu(B^L)\right].
\end{equation*}
% The part of the LSM schedule for long jobs is exactly the LPT schedule for a long instance with $n^L$ jobs and $m^L$ machines.
% If $|\mathcal{A}_{r^L}^L|<m$, we make the convention that $\mu(B^L)$ does not contain any load of jobs beginning in the maximal slack period completed at time $r^L$.
Then, Lemma~\ref{Lemma:Idle_Slots} implies that $\mu(B^L)\geq n^Lm^L/2$. 
%(except possibly the very last slack period).
Furthermore, $\lambda(B^L)\leq n^L$.
Therefore, by considering Lemma~\ref{Lem:Last_Slack_Period}, we obtain:
\begin{align*}
T \leq \frac{m}{m^L}\left(\frac{1}{m}\sum_{j\in\mathcal{J}}p_j\right)+\frac{1}{2}(n^L+p_{\ell})+\frac{1}{2}p_{\ell}.
\end{align*}
In the case $p_{\ell}\leq T^*/2$, since $T^*\geq n^L+p_{\ell}$, we obtain an approximation ratio of $(\frac{m}{m^L}+\frac{3}{4})\leq 1.95$, when $m^L=\lceil 5m/6\rceil$, given that $m=\omega(1)$.

Next, consider the case $p_{\ell}>T^*/2$.
% \subsection{Extending LSM to General Instances}
% \label{Section:LSM_Extension}
Let $\mathcal{J}^V=\{j:p_j>T^*/2\}$ be the set of very long jobs and $n^V=|\mathcal{J}^V|$. 
Clearly, $n^V\leq m$. 
By using resource augmentation, i.e.\ allowing LSM to use $m'=\lceil6m/5\rceil$ machines, we guarantee that LSM assigns at most one job $j\in\mathcal{J}^V$ in each machine.
The theorem follows.
% Otherwise, the optimal schedule $\mathcal{S}^*$ would have makespan strictly greater than $T^*$, i.e.\ a contradiction.
% To deal with this case, we use machine augmentation.
% In particular, we compare the LSM schedule using $m'=\lceil 6m/5\rceil$ machines to an optimal schedule with $m$ machines.
% Setting $m^L=\lceil5m'/6\rceil$ machines guarantees the approximation ratio.
\end{proof}

\paragraph{Remark}
If $\lceil5m/6\rceil<|\mathcal{J}^V|\leq m$, LSM does not achieve an approximation ratio better than 2, e.g.\ as illustrated by an instance consisting of only $\mathcal{J}^V$ jobs. 
Assigning two such jobs on the same machine is pathological.
Thus, better than 2-approximate schedules require assigning all jobs in $\mathcal{J}^V$ to different machines.

\section{Dealing with Uncertainty}
\label{Section:Uncertainty}
%\input{uncertainty}
%\newpage

\change{This section proposes a two-stage robust optimization approach for BJSP under uncertainty based on lexicographic optimization. 
We recently proposed a variant of this approach for two-stage robust $P||C_{\max}$ \cite{Letsios2018}.
Because job start times are irrevocable in the Royal Mail context, we adapt the $P||C_{\max}$ approach to BJSP, using resource augmentation.
That is, new machines are added once the uncertainty is realized. 
The robustness of a solution is measured by the level of resource augmentation, i.e.\ number of machines required for the final solution feasibility (instead of the actual makespan objective value we adopt in \cite{Letsios2018} with a different recovery strategy).
% which computes provably near-optimal solutions in Section~\ref{Section:LSM}, and %constructs 
% Lexicographic Optimation (LexOpt) \cite{Letsios2018}, for robust solutions to $P||C_{\max}$, to design a robust optimization approach for BJSP under uncertainty.
% Section~\ref{Section:Guidelines} summarizes a collection of guidelines from the Royal Mail experience shaping our investigations to account for uncertainty.
Section~\ref{Section:Uncertainty_Setting} formally describes our uncertainty setting, including the uncertainty set structure and investigated robustness measure.
Section~\ref{Section:LexOpt} presents the proposed approach for solving two-stage robust BJSP instances, i.e.\ the first and second stages.
Given a collection of schedules of makespan $\leq D$, our approach determines which are the more robust.}

\subsection{Uncertainty Setting}
\label{Section:Uncertainty_Setting}

%This section specifies the uncertainty setting of robust mail delivery scheduling.
%\paragraph{Two-Stage Robust Optimization Model}

%An instance of BJSP under uncertainty is a set $\mathcal{J}$ of jobs each one associated with a 

%Based on the Section~\ref{Section:Guidelines} guidelines, 
Figure~\ref{Fig:Two_Stage_Model} illustrates the two-stage setting for solving BJSP under uncertainty. The Figure~\ref{Fig:Two_Stage_Model} setting is most similar to the \cite{Liebchen2009} recoverable robustness setting, but also has connections to other two-stage optimization problems under uncertainty \cite{BenTal2004,Bertsimas2010,Hanasusanto2015}.
Stage 1 computes a feasible, efficient schedule $\mathcal{S}$ for an initial nominal instance $I$ of the problem with a set $\mathcal{J}$ of jobs and vector of processing times $p$.
After stage 1, there is uncertainty realization and a different, perturbed vector $\tilde{p}$ of processing times becomes known.
Stage 2 transforms $\mathcal{S}$ into a feasible, efficient solution $\tilde{\mathcal{S}}$ for the perturbed instance $\tilde{I}$ with vector $\tilde{p}$ of processing times, \change{using machine augmentation}.
Designing and analyzing a two-stage robust optimization method necessitates (i) defining the uncertainty set structure
%(iii) initial solution flexibility, 
and (ii) quantifying the solution robustness.

% %Figure~\ref{Fig:Two_Stage_Model}.
% %Figures~\ref{Fig:Two_Stage_Model}-\ref{Fig:Recovery} illustrates our setting.
% Stage 1 computes a feasible, efficient schedule $\mathcal{S}$ for an initial nominal instance $I$ of the problem. 
% Instance $I$ consists of a job set $\mathcal{J}$ with vector of processing times $\bm{p}$.
% To attain low fixed, global deadline $D$.
% Schedule $\mathcal{S}$ should be robust in satisfying $D$ while containing a low number of machines.
% After stage 1, there is uncertainty realization and a different, perturbed vector $\bm{\tilde{p}}$ of processing times becomes known.
% %Instance $\tilde{I}$ belongs to the uncertainty set $\mathcal{U}_F(I)$ of instance $I$.
% The second stage transforms $\mathcal{S}$ into a feasible, efficient solution $\tilde{\mathcal{S}}$ for the perturbed instance $\tilde{I}$ with vector of processing times $\bm{\tilde{p}}$.
% %$\tilde{I}=\langle\mathcal{J},\bm{\tilde{p}}\rangle$. 

\begin{figure}
\begin{subfigure}[t]{\textwidth}
\centering
\begin{tikzpicture}[scale=0.8]

\node (a) at (-1,0) {};
\node (b) at (11.5,0) {\scriptsize time};
\draw[->] (a) edge (b) ;

\draw[dotted] (0,-0.4) -- (0,0.45) -- (3,0.45) -- (3,-0.4) -- (0,-0.4);

\draw[-] (1.5,0.05) edge (1.5,-0.05);
%\node (a) at (1.5,0.5) {\scriptsize Planning};
\node (a) at (1.5,0.25) {\scriptsize Planning Phase};
\node (a) at (1.5,-0.25) {\scriptsize Stage 1};

\draw[-] (5,0.05) edge (5,-0.05);
%\node (a) at (5,0.5) {\scriptsize Uncertainty};
\node (a) at (5,0.25) {\scriptsize Uncertainty Realization};
\node (a) at (5,-0.25) {\scriptsize Disturbances};

\draw[dotted] (7,-0.4) -- (7,0.45) -- (10,0.45) -- (10,-0.4) -- (7,-0.4);

\draw[-] (8.5,0.05) edge (8.5,-0.05);
%\node (a) at (8.5,0.5) {\scriptsize Recovery};
\node (a) at (8.5,0.25) {\scriptsize Recovery Phase};
\node (a) at (8.5,-0.25) {\scriptsize Stage 2};

\end{tikzpicture}
\caption{Two-Stage Model} 
\vspace*{0.5cm}
\label{Fig:Two_Stages}
\end{subfigure}
\begin{subfigure}[t]{\textwidth}
\centering
\begin{tikzpicture}[scale=0.8]

\node at (-0.5,0.7) {\scriptsize Input $I$};

\node at (-2.6,0.0) {\scriptsize Efficient schedule $\mathcal{S}$ for input $I$};
%\node at (-1.3,-0.15) {\scriptsize for input $I_{\text{init}}$};

\node at (-1.35,-0.7) {\scriptsize Perturbed input $\tilde{I}$};

\draw [black, fill=black!5, rounded corners] (1.3,0.6) rectangle (3.7,-0.6);
\node at (2.5,0.15) {\scriptsize Recovery};
\node at (2.5,-0.15) {\scriptsize algorithm};

\node at (6.5,0.15) {\scriptsize Efficient schedule $\tilde{\mathcal{S}}$};
\node at (6.5,-0.15) {\scriptsize for input $\tilde{I}$};

\node (p1) at (0.2,0.7) {};
\node (p2) at (1.2,0.3) {};
\draw[->] (p1) edge (p2);

\node (p3) at (0.2,-0.7) {};
\node (p4) at (1.2,-0.3) {};
\draw[->] (p3) edge (p4);

\node (p5) at (0.2,0) {};
\node (p6) at (1.2,0) {};
\draw[->] (p5) edge (p6);

\node (p7) at (3.85,0) {};
\node (p8) at (4.85,0) {};
\draw[->] (p7) edge (p8);

\end{tikzpicture}
\caption{Rescheduling}
\label{Fig:Rescheduling}
\end{subfigure}

\caption{Uncertainty Setting. Figures obtained from \cite{Letsios2018}.}
\label{Fig:Two_Stage_Model}
\end{figure}
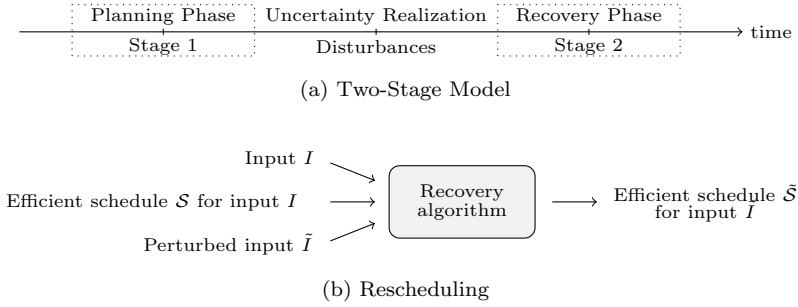

% Perturbations involve input parameters subject to uncertainty.
% {\color{gray} We consider (i) duration variations, (ii) gate capacity fluctuations, and (iii) vehicle failures.}
% Itinerary duration $d_i$ may be modified and become $\tilde{d}_i=f_id_i$, where $f_i\geq 0$ is the perturbation factor of itinerary $i\in I$.
% Similarly, for a given time length $\lambda$, the gate capacity $g^u(\lambda)$ may become $\tilde{g}^u(\lambda)=g^u(\lambda)+ a(\lambda)$.
% Finally, vehicle $k\in V$ might fail during the delivery of itinerary $i\in I$.
% However, such a failure is a complex issue because it involves (i) restoring the broken vehicle, (ii) retrieving its content, and (iii) proceeding the suspended itinerary.

Scheduling under uncertainty may involve different perturbation types. %, i.e.\ parameters subject to uncertainty.
We study processing time variations, i.e.\ $p_j$ may be perturbed by $f_j> 0$
%modified 
to become $\tilde{p}_j=f_jp_j$.
%, where $f_j> 0$ is the perturbation factor of job $j\in \mathcal{J}$.
If $f_j>1$, then job $j$ is delayed.
If $f_j<1$, job $j$ completes early.
Instance $I$ is modified by a perturbation factor $F > 1$ when $1/F\leq f_j\leq F$, for each $j\in\mathcal{J}$.
Uncertainty set $\mathcal{U}_F(I)$ contains every instance $\tilde{I}$ that can be obtained by disturbing $I$ w.r.t.\ perturbation factor $F$.

\change{Our two-stage robust optimization approach aims to achieve a low number of machines once the recovery stage 2 is completed.
%good trade-offs between the number of used machines and makespan after uncertainty realization.
% To this end, given stage 1 imposes a global deadline $D$ for upper bounding makespan and computes a robust solution with a small number of vehicles, thus resource augmentation.
% Stage 2 imposes the stage 1 job starts to avoid many deadline violations and dynamically allocates jobs to machines. Stage 2 may require more machines than the initial schedule. 
% To quantify the initial solution robustness, 
% we aim for a low \emph{price of robustness} \cite{Bertsimas2004,Bertsimas2011,Goerigk2016,NIPS2006_3053}.
%For a perturbed instance $\langle m,\tilde{\mathcal{J}}\rangle$, 
%In particular, 
Specifically, let $\mathcal{S}$ be an initial schedule, of makespan $\leq D$, for a nominal BJSP instance $I$ and $\tilde{\mathcal{S}}$ be a recovered schedule, obtained from $\mathcal{S}$ after uncertainty realization, for a perturbed instance $\tilde{I}\in\mathcal{U}_F(I)$.
Denote by $\tilde{\mathcal{S}}^*$ a feasible schedule for $\tilde{I}$ with makespan $\leq D$ and minimal number of machines.
% be the final obtained solution and the nominal optimal solution obtained with full input knowledge, respectively.
Schedule $\mathcal{S}$ is robust if the ratio $V(\tilde{\mathcal{S}})/V(\tilde{\mathcal{S}}^*)$, where $V(\cdot)$ denotes the number of machines in a given schedule for $\tilde{I}$, is as low as possible.
By slightly abusing standard robust optimization terminology, we refer to this ratio as the \emph{price of robustness} \cite{Bertsimas2004,Bertsimas2011,Goerigk2016,NIPS2006_3053}.}
%The stage 1 global deadline and stage 2 fixed decisions both avoid large makespan constraint deviations.

% The following definition formalizes the notion of robustness.
% That is, a robust solution should be near-optimal and near-feasible after uncertainty realization. 

% \begin{definition}
% Schedule is $\alpha$-approximate $\beta$-feasible if it holds that (i) $m(\mathcal{S})\leq \alpha m(\mathcal{S}^*)$, and (ii) $T(\mathcal{S})\leq\beta T$.
% \end{definition}

% The main result of the paper is an $O(1)$-approximate $O(1)$-feasible schedule.

% A solution is robust if it minimizes the number of misses.
% A more generic definition assigns a weight to each itinerary and minimizes the total weight of itineraries that are not completed.
% The weight could an estimation of the remaining load.

% \begin{itemize}
%     \item The duration of an itinerary might be different than its nominal value, i.e.\ either greater or smaller. There is the issue of how to select the nominal values: should we overestimate, or underestimate? An uncertain parameter $f_i$ specifies the perturbation factor of a duration so that $\tilde{d}_i=\phi_i\cdot d_i$. This parameter allows to incorporate either itinerary delays, or departure delays.
%     \item A vehicle might break down and then we need new ones to repair the schedule of the day. Should we wait for the one to return, or use a new one? This should depend on the delay. A binary uncertain paramerer $\beta_j$ specifies whether vehicle $j$ breaks down.
% \end{itemize}

\subsection{Two-Stage Robust Scheduling Approach}
\label{Section:LexOpt}

This section proposes a two-stage robust optimization for solving BJSP under uncertainty with: (i) an exact method producing initial solution $\mathcal{S}$ and (ii) a recovery strategy restoring $\mathcal{S}$ after instance $\tilde{I}$ is revealed. 
%This section presents the proposed approach. %our approach for computing the stage 1 solution and a recovery strategy for the stage 2 solution.

\subsubsection{Exact LexOpt Scheduling with Machine Augmentation (Stage 1)}

\change{To compute robust first-stage schedules, we develop an integer programming formulation minimizing a characteristic value, which is motivated by lexicographic optimization and the fact that machine augmentation is required in the recovery stage.}
% the notion of a
%for producing the initial solution ,
%Before stating the integer program, 
%we define a schedule's 
%a characteristic value.}

\change{Recent work shows that two-stage robust $P||C_{\max}$ schedules can be obtained by lexicographically minimizing the machine completion times $T_1\geq\ldots\geq T_m$, i.e.\ $\text{lex}\min\{T_1,\ldots,T_m\}$, where $T_i$ corresponds to the $i$-th greatest machine completion time \cite{Letsios2018}.
That is, we minimize $T_1$ and, among all schedules minimizing $T_1$, we select a schedule minimizing $T_2$, then $T_3$ etc.
Here, the proposed two-stage robust optimization approach lexicographically minimizes the job completion times $C_1\geq\ldots\geq C_n$, i.e.\ $\text{lex}\min\{C_1,\ldots,C_n\}$, where $C_j$ refers to the $j$-th greatest completion time.
%Note that $T_i$ and $C_j$ refer to the $i$-th greatest machine completion time and $j$-th grea
By considering all jobs instead of only the ones completing last in each machine, we enforce robustness at the price of extra computational effort.
%, because the number of objectives increases.
In particular, we are faced with a multi-objective optimization problem, where the number of objectives is $n>m$.
Based on standard weighting lexicographic optimization methods, this problem can be reformulated as the mono-objective problem $\min\{\sum_{j=1}^nB^{C_j}\}$, where $B>1$ is a sufficiently large scalar \cite{Sherali1982}. 
%Section \ref{Section:Robustness_Assessment} 
We empirically select $B=2$.
% in multi-objective lexicographic optimization, the $\text{lex}\min\{C_1,\ldots,C_n\}$ problem can be heuristically solved via the monobjective optimization problem $\sum_{j=1}^n2^{C_j}$.

To achieve low resource augmentation at the stage 2 schedule $\tilde{\mathcal{S}}$, we incorporate the number $V$ of machines in the characteristic value for obtaining the stage 1 schedule $\mathcal{S}$.
Because the job starts are not modified in stage 2, if a minimal of machines is used in $\mathcal{S}$, many new simultaneous job executions may occur in the stage 2 schedule $\tilde{\mathcal{S}}$ after uncertainty realization, due to job delays.
%So, the number of machines might be high in the final solution $\tilde{\mathcal{S}}$.
On the other hand, if a large number of machines is used in $\mathcal{S}$, a small number of new job overlaps will occur in the stage 2 schedule $\tilde{\mathcal{S}}$, but the number of machines in the final schedule is already high.
An empirically chosen parameter $\theta>0$ specifies the contribution of $V$ in the characteristic value.

% Minimizing $W(\mathcal{S})=\sum_{j\in\mathcal{J}}w_j(\mathcal{S})$ lexicographically minimizes the sum of job completion times and produces robust schedules in terms of minimizing makespan \cite{Letsios2018}.
% Minimizing the characteristic value generates a robust schedule with small machine augmentation if the parameter $\theta$ is carefully selected. 
}

Denote by $V(\mathcal{S})$ be the number of machines in $\mathcal{S}$.
In addition, associate the weight $w_j(\mathcal{S})=2^{C_j(\mathcal{S})}$ to each job $j\in\mathcal{J}$, where $C_j(\mathcal{S})$ is the job $j$ completion time, and let 
$W(\mathcal{S})=\sum_{j\in\mathcal{J}}w_j(\mathcal{S})$ be the sum of job weights in $\mathcal{S}$.
% Given a schedule $\mathcal{S}$ with a number $V(\mathcal{S})$ of used machines, .
% Then, $W(\mathcal{S})=\sum_{j\in\mathcal{J}}w_j(\mathcal{S})$ is the sum of job weights in schedule $\mathcal{S}$
%For sufficienlty large $M$, minimizing $w(\mathcal{S})$ lexicographically minimizes the job completion times.
The characteristic value $F(\mathcal{S})$ of schedule $\mathcal{S}$ is the weighted sum:
\begin{equation*} 
F(\mathcal{S}) \triangleq V(\mathcal{S}) + \theta\cdot W(\mathcal{S}), 
\end{equation*}
where $\theta>0$ is a parameter specifying the relevant importance between $V(\mathcal{S})$ and $W(\mathcal{S})$.
Section~\ref{Section:Numerical_Results} selects the $\theta$ value empirically.
A schedule of minimal characteristic value can be computed with integer programming formulation (\ref{Eq:LexOpt_Model}).
Variable $v$ corresponds to the number of machines and parameter $w_{j,t}=2^t$ is the weight of job $j$ if it completes at time slot $t$.
%
% Insights revealed by our approximation analysis in the setting under uncertainty:
% \begin{itemize}
%   \item If in the optimal solution, many job starts occur during full periods, then this solution might be unstable because delays might cause may jobs to simultaneously begin.
%   \item An small additional number of machines allows to obtain a significantly better schedule.
% \end{itemize}
%
% There is a trade-off between the number of used vehicles and the solution robustness.
% If a job is significantly delayed, then an important number of other jobs might have to be implemented by a different machine than the originally planned one.
% If a job is slightly delayed, then delays might be propagated due to the gate capacity constraint.
%
{\allowdisplaybreaks
\begin{subequations}
\label{Eq:LexOpt_Model}
\begin{align}
\min_{x_{j,s},\;v,w} \quad & v+\theta\left(\sum_{j\in\mathcal{J}}\sum_{s\in\mathcal{F}_j}x_{j,s}w_{j,s+p_j}\right) \label{Eq:LO_Objective} \\
& v \geq x_{j,s} & j\in\mathcal{J}, s\in D \label{Eq:LO_Machines} \\
& x_{j,s}(s+p_j)\leq D & j\in\mathcal{J}, s\in D \label{Eq:LO_Deadline} \\
& \sum_{j\in \mathcal{J}} \sum_{\substack{s\in A_{j,t}}} x_{j,s} \leq v & t\in \mathcal{D} \label{Eq:LO_Vehicles_Count} \\
& \sum_{s\in F_j} x_{j,s} = 1 & j\in \mathcal{J} \label{Eq:LO_Itinerary_Assignment} \\
& \sum_{j\in \mathcal{J}_s} x_{j,s} \leq g & s\in D \label{Eq:LO_Throughput} \\ 
& x_{j,s}\in\{0,1\} & j\in \mathcal{J}, s\in F_j \label{Eq:LO_Integrality}
\end{align}
\end{subequations}}
\noindent
Expression (\ref{Eq:LO_Objective}) minimizes the characteristic value.
Constraints (\ref{Eq:LO_Machines}) limit the active machines to the total number of machines.
Constraints (\ref{Eq:LO_Deadline}) force all jobs to complete before the makespan $D$.
Constraints (\ref{Eq:LO_Vehicles_Count}) ensure that at most $v$ machines are used at each time slot $t$. 
Constraints (\ref{Eq:LO_Itinerary_Assignment}) require that each job $j$ is scheduled.
Constraints (\ref{Eq:LO_Throughput}) express the BJSP constraint.

\change{Large exponents provoke numerical issues when solving Integer Program (\ref{Eq:LexOpt_Model}).
To circumvent this issue, we
%We may 
reduce the number of objectives in the underlying lexicographic optimization problem by rounding job completion times.
% to avoid numerical instabilities.
%To avoid many lexicographic objectives, round the completion times.
In particular, we divide the time horizon into a set of $\ell$ time periods.
A job $j$ completing at time period $q=1,\ldots,\ell$ has weight $w_j=2^q$.
By minimizing $\sum_{j\in \mathcal{J}}w_j$, we compute near-lexicographically optimal schedules.}
%which ar the job completion times.}
%In this way, we ensure that not many jobs deviate from the global deadline.}

\subsubsection{Rescheduling Strategy (Stage 2)}

%\paragraph{Recovery strategy}
A rescheduling strategy  %specifies the actions for 
transforms an initial schedule $\mathcal{S}$ for the nominal problem instance $I$ into a final schedule $\tilde{\mathcal{S}}$ for the perturbed instance $\tilde{I}$.
To satisfy the requirement that schedule $\tilde{\mathcal{S}}$ should stay close to $\mathcal{S}$, we distinguish between binding and free optimization decisions similarly to \cite{Letsios2018}.
Let $x_{j,s}$ and $y_{i,j}$ be binary variables specifying whether job $j\in\mathcal{J}$ begins at time slot $t\in\mathcal{D}$ and is executed by machine $i\in\mathcal{M}$, respectively.
\change{Based on Royal Mail practices,} we consider rescheduling with restricted job start times and flexible job-to-machine assignments.
Definition~\ref{Def:Rescheduling} formalizes this fact with binding and free optimization decisions.

\begin{definition}
\label{Def:Rescheduling}
Let $\mathcal{S}$ be the initial schedule in BJSP under uncertainty.
\begin{itemize}
  \item \textbf{Binding decisions} $\{x_{j,t}:j\in\mathcal{J},t\in\mathcal{F}_j\}$ are variable evaluations determined from 
    $\mathcal{S}$ in the rescheduling process.
  \item \textbf{Free decisions} $\{y_{i,j}:i\in\mathcal{M},j\in\mathcal{J}\}$ are variable evaluations not determined from $\mathcal{S}$ but needed to recover feasibility.
\end{itemize}
\end{definition}

% \emph{Binding decisions} %$\{x_{j,s}:x_{j,s}(\mathcal{S})=1,j\in\mathcal{S},s\in\mathcal{F}_j\}$ 
% are variable evaluations determined from the initial solution after uncertainty realization.
% \emph{Free decisions} %$\{y_{i,j}:i\in\mathcal{M},j\in\mathcal{J}\}$ 
% are not determined from the initial solution, but are essential to ensure feasibility.
%Here, job start times, i.e.\ $x_{j,s}$, are binding decisions while job-to-machine assignments, i.e.\ $y_{i,j}$, are free decisions.
Enforcing binding decisions ensures a limited number of initially planned solution modifications.
Note that first-stage decisions remain critical in this context.
%This separation ensures that we stay close to the initially planned solution and first-stage decisions remain critical.
%In the proposed recovery strategy, job start times are binding decisions, while job-to-machine assignments are free decisions.
%We consider a simple recovery strategy with restrictive job start times and flexible job-to-machine assignments. 
%Following the \citet{Letsios2018} terminology, the former are binding decisions while the latter are free decisions.
%This approach is typically adopted in practice.
%and imposes restrictions due to modification costs.
%Algorithm~\ref{Alg:Rescheduling_Strategy} formally describes the proposed rescheduling strategy that maintains all binding decisions.
%In particular, 
The proposed recovery strategy sets $x_{j,s}(\mathcal{S})=x_{j,s}(\tilde{\mathcal{S}})$.
On the other hand, job-to-machine assignments are decided in an online manner.
For $t=1,\ldots,\tau$, the jobs $\{j:x_{j,t}(\mathcal{S})=1\}$ are assigned to the lowest-indexed available machines.
A machine is available at $t$ if it executes no jobs.
This assignment derives the $y_{i,j}(\tilde{\mathcal{S}})$ values.

\section{Numerical Results}
\label{Section:Numerical_Results}
%\input{numerical_results}
%\newpage

\change{This section computationally evaluates the proposed heuristics and robust optimization approach for BJSP with perfect knowledge and under uncertainty, respectively, using Royal Mail data.}
Section \ref{Section:Instance_Generation} discusses %our data analysis steps for deriving the 
the derivation of Royal Mail BJSP instances and historical schedules.
\change{Section \ref{Section:Long_Short_Jobs} presents information about the number and load of long and short jobs in these instances.
Section \ref{Section:Evaluation_Heuristics} compares the proposed LPT, LSPT and LSM heuristics.
}
Section \ref{Section:Evaluation_Historical_Schedules} evaluates \change{the historical schedules sensitivity with respect to the number of machines, i.e.\ Royal Mail vans.}
Finally, Section \ref{Section:Robustness_Assessment} presents a numerical assessment of the two-stage robust optimization approach. 
%{\color{red} 
We run all computations on a 2.5GHz Intel Core i7 processor with an 8 GB RAM memory running macOS Mojave 10.14.6.
Our implementations use Python 3.6.8, Pyomo 5.6.1, and solve the integer programming instances with CPLEX 12.8. A recent MEng thesis considers several of these contexts in greater detail \cite{suraj-g}.
%}

%\subsection{Data Analysis}
\subsection{Generation of Benchmark Instances and Historical Schedules}
\label{Section:Instance_Generation}

We use historical data from three Royal Mail delivery offices which we refer to as (i) \basingstoke, (ii) \colchester, and (iii) \northampton. 
Part of the data is encrypted for confidentiality protection.
We consider a continuous time period of 78, 111, and 111 working days for \basingstoke, \colchester, and \northampton, respectively.
A BJSP instance is the set of all jobs performed by a single delivery office during one date.
So, we examine 300 instances in total.
A job corresponds to a delivery in a set of neighboring postal codes.
The data is a list of jobs, each specified by
%, i.e.\ itineraries, performed during a specific time period.
%One job is associated with 
a (i) unique identifier, (i) delivery office, (iii) date, (iv) vehicle plate number, (v) begin time, and (v) completion time.
Below, we give more details for generating the benchmark instances and the actual schedules realized by Royal Mail. % from the historical data.
% All jobs performed on one date at one delivery office specify a BJSP instance and the actual realized schedule.
% The current section discusses details in generating these instances.

% We use mail delivery data from (i) Basingstoke DO during 28 July 2017 and 30 September 2017, (ii) Colchester DO during 01 July 2017 and 12 November 2017, and (iii) Northampton during 01 July 2017 and 12 November 2017. 

A BJSP instance is defined by a (i) time horizon, (ii) time discretization, (iii) number of available vehicles, (iv) set of jobs, and (v) BJSP parameter. 
%To generate a BJSP instance, we determine the (i) number of available vehicles, (ii) set of jobs, (iii) earliest start time, (iv) last completion time, and (v) gate capacity.
A simple data inspection shows that, among all jobs, 92\% run during [06:00,19:00] in \basingstoke, 91\% are executed during [05:00,19:30] in \colchester, and 93\% are implemented during [05:30,19:30] in \northampton.
A scatter plot illustration clearly demonstrates that these boundaries specify the time horizon for each delivery office after dropping outliers \cite{source_code}.
%This observation motivates the design of robust optimization methods.
The time horizon boundaries might be violated by both the historical schedules and our two-stage robust optimization method. 
We use a time discretization of $\delta=15$ minutes.
The number of available vehicles is the number of distinct plate numbers used by each delivery office.
% In \basingstoke, 97\% of the itineraries begin during 06:00 and complete no later than 19:00.
% In \colchester, 96\% of the itineraries begin during 05:00 and complete no later than 19:00.
% In \northampton, 95\% of the itineraries begin during 05:30 and complete no later than 19:00.
% Even the historical schedules have infeasbilities in terms of number of machines and makespan.
% The number of available vehicles is the number of distinct plate numbers used during the day.
% The set of jobs contains all distinct itineraries performed during the day.
We set the processing time of job $j\in \mathcal{J}$ equal to $p_j=\lceil(C_j-s_j)/\delta\rceil$, where $s_j$ and $C_j$ is start and completion time of $j$ in the corresponding historical schedule.
%The durations are rounded up to the closest integer multiple of $\delta$.
The minimal processing time is 30 minutes, but a job may last for a number of hours.
A scatter plot illustration shows that the distribution of processing times follows a similar pattern on a weekly basis \change{\cite{Chassein2019,source_code}}.
This observation supports using robust optimization to deal with the Royal Mail BJSP instances under uncertainty.
%The duration values follow the same pattern during each day.
%Figure illustrates the distribution of itinerary durations.
%This observation motivates the usage of robust optimization to deal with the mail delivery problem under interval uncertainty.
%The time horizon beginning is set equal to the earliest start time after dropping outlier points.
%The begin time bound $b^u$ is set equal to the time at X\% of itineraries have already departed.
%Similarly, the global deadline is equal to the earliest time at which the greatest part of itineraries have been completed.
%We manually set the discretization length $\delta=15$ minutes.
BJSP parameter $g$ is set equal to the maximum number of jobs beginning in a time interval $\delta$ units of time after ignoring few outliers.
%, where $\delta$ is the 
%Similarly to the time horizon length case, we set the delivery office gate capacity equal to the maximal number of vehicles, i.e.\ machines, departing per time slot by ignoring outliers. 
%Using this information we generate X instances for \basingstoke, Y instances for \colchester, and Z instances for \northampton. 
%{\color{red} For presentation purposes, we would like to also store the exact day (e.g.\ Monday) and time (e.g.\ 1pm) that events occur. } 
%The effect can be clearly illustrated using scatter plots, but we omit them.

The Royal Mail data include a historical schedule for each BJSP instance.
Such a schedule is associated with (i) job start times, (ii) makespan, and (iii) number of used machines.
Begin times are rounded down to the closest multiple of $\delta$ for time discretization.
After rounding, the makespan is the time at which the last job completes.
The number of vehicles is the maximal number of jobs running simultaneously.
We note that solutions in this form do not explicitly specify job-to-machine assignments.
However, once the job start times are known, feasible assignments can be computed with simple interval scheduling algorithms \cite{Kolen2007}.

\change{

\subsection{Long and Short Jobs}
\label{Section:Long_Short_Jobs}

\begin{figure}[t]
    \begin{subfigure}[t]{0.5\textwidth}
        \begin{center}
        \includegraphics[width=\textwidth]{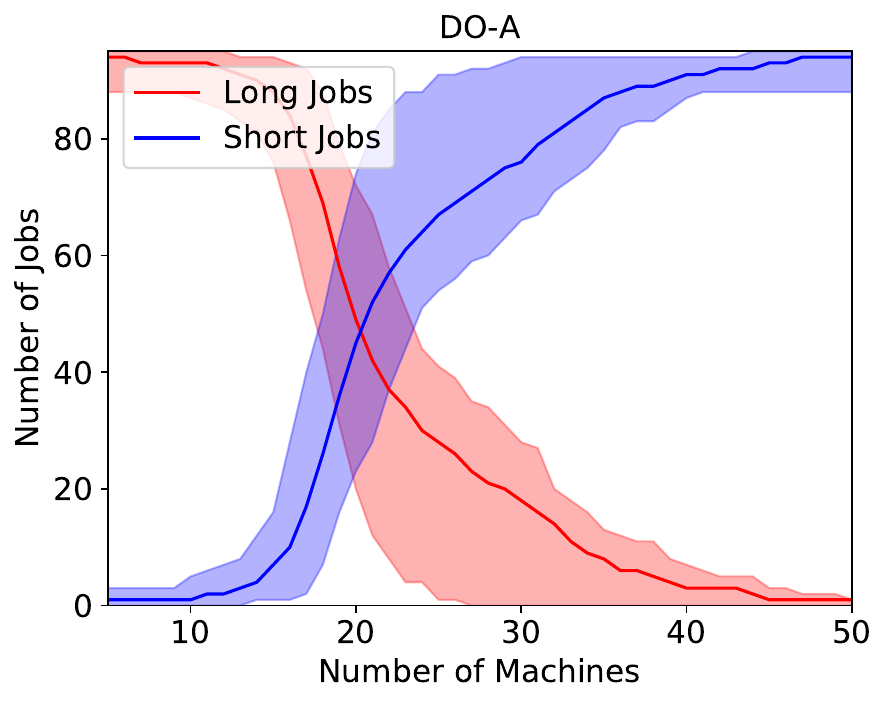}
        \end{center}
        \caption{\basingstoke}
        %\label{Figure:basingstoke_return_times_histogram}
    \end{subfigure}
    \begin{subfigure}[t]{0.5\textwidth}
        \begin{center}
        \includegraphics[width=\textwidth]{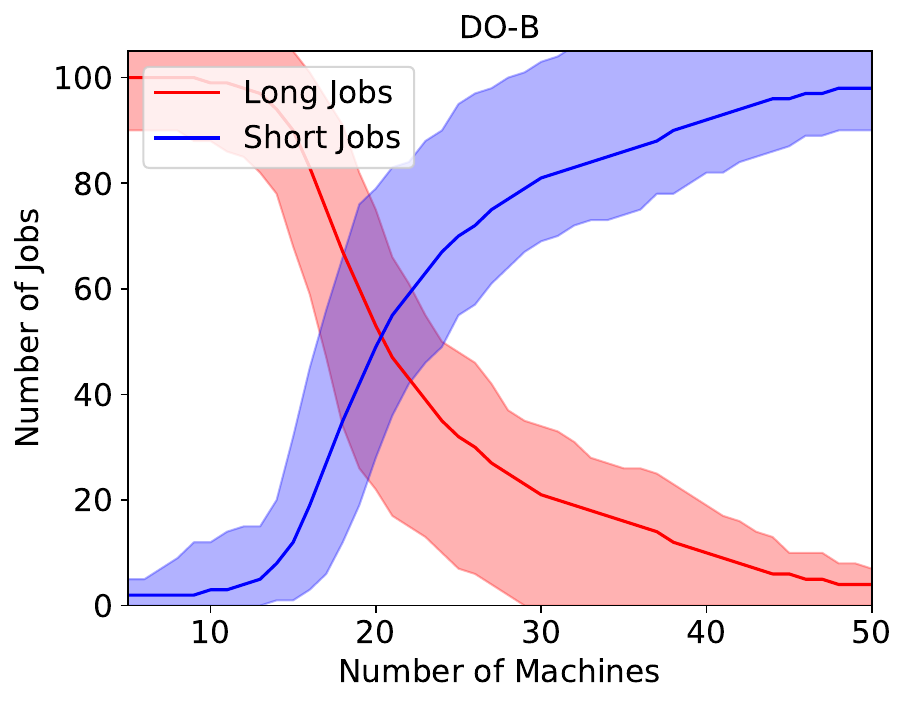}
        \end{center}
        \caption{\colchester}
        %\label{Figure:colchester_durations_histogram}
    \end{subfigure}
    \begin{subfigure}[t]{\textwidth}
        \begin{center}
        \includegraphics[width=0.5\textwidth]{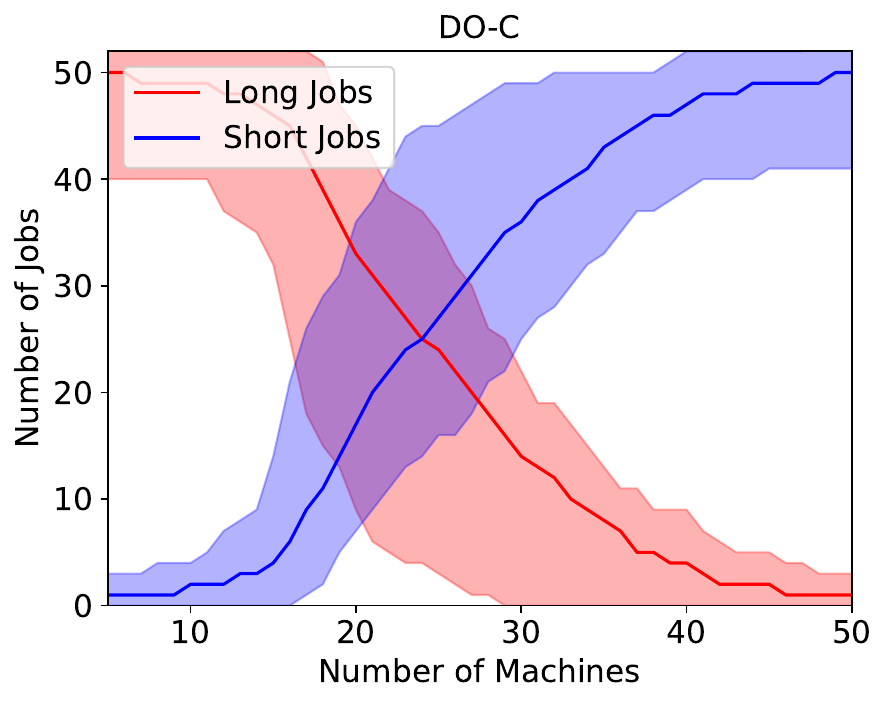}
        \end{center}
        \caption{\northampton}
        %\label{Figure:northampton_durations_histogram}
    \end{subfigure}
\caption{\change{Line chart comparing the cardinality of long and short jobs. Given a number $m$ machines, (i) the solid line plots the average cardinality and (ii) the shaded area shows the difference between the maximum and minimum cardinality, with respect to all job sets (days).}
%\todo[inline]{Can't read the figure fonts}
%schedules realized by Royal Mail and nominal schedules produced by solving MILP (\ref{Eq:DT_Model}) with CPLEX. Dotted vertical lines indicate Mondays.
}
\label{Figure:Long_Short_Cardinality}
\end{figure}

This section presents information about the number and processing load of long and short jobs in the instances of each delivery office.
Since the long-short job separation depends on the number $m$ of machines, we examine a range of $m$ values.
A job set, i.e.\ the jobs executed by a delivery office in one day, is solved for every $m\in[5,50]$.
% Recall that the data set contains the jobs executed by \basingstoke, \colchester, and \northampton\ during a period of 78, 111, and 111 days  respectively.
Let $K$ be the number of examined days for a delivery office, e.g.\ \basingstoke.
Moreover, denote by $N$ the number of all completed jobs during these $K$ days and by $(P_1,\ldots,P_N)$ the corresponding vector of processing times.
%The discussion about long and short jobs is based on these parameters.
%Using this information, we compute the average number of long and short jobs per day, for each delivery office.
%By definition, the classification of a job as long or short depends on the number of available machines. 
Then, $N^L(m)=|\{j:P_j\geq m, 1\leq j\leq N\}|$ and $N^S(m)=|\{j:P_j<m, 1\leq j\leq N\}|$ is the total number of long and short jobs, respectively, during all days, assuming $m$ machines. 
In addition, $\Lambda^L(m)=\frac{1}{m}\sum_{j:P_j\geq m}P_j$ and $\Lambda^S(m)=\frac{1}{m}\sum_{j:P_j< m}P_j$ is the mean load of long and short jobs, respectively, with $m$ machines.

Figure~\ref{Figure:Long_Short_Cardinality} plots the averaged number $\frac{N^L(m)}{K}$ and $\frac{N^S(m)}{K}$ of long and short jobs per day, with respect to $m$. 
% Figure~\ref{Figure:Long_Short_Cardinality} also shows the the deviation of the maximum and minimum number of long and short jobs from their mean, for each $m$.
% Similarly, let $\Lambda^L(m)=\frac{1}{m}\sum_{j:P_j\geq m}P_j$ and $\Lambda^S(m)=\frac{1}{m}\sum_{j:P_j< m}P_j$ be the mean load of long and short jobs, respectively, with $m$ machines.
Similarly, Figure~\ref{Figure:Long_Short_Load} illustrates the averaged mean loads $\frac{\Lambda^L(m)}{K}$ and $\frac{\Lambda^S(m)}{K}$ per day, with respect to $m$.
% We check the range $m\in[2,n_{\min}]$, where $n_{\min}$ is the minimum number of jobs that may appear during one day in the delivery office.
% We would also like to get an indication of the average load $\Lambda^L/K$ and $\Lambda^S/K$ of the long and short jobs, where $\Lambda^L$ and $\Lambda^S$ are obtained by summing all processing times $(P_1,\ldots,P_N)$.
Clearly, when $m$ increases, $\frac{N^L(m)}{K}$ and $\frac{P^L(m)}{K}$ decrease, while $\frac{N^S(m)}{K}$ and $\frac{P^S(m)}{K}$ increase.
% This fact is verified in Figures~\ref{Figure:Long_Short_Cardinality}-\ref{Figure:Long_Short_Load} for each delivery office.
Section~\ref{Section:LPT} implies that the most difficult instances for LPT arise when the averaged mean load $\frac{\Lambda^L(m)}{K}$ of long jobs per day tends to become equal to the averaged number $\frac{N^S(m)}{K}$ of short jobs per day.
Figures~\ref{Figure:Long_Short_Cardinality}-\ref{Figure:Long_Short_Load} show that this pathological situation occurs when $m$ belongs to $[18,25]$, $[20,27]$ and $[23,30]$ for \basingstoke, \colchester\ and \northampton, respectively.

% The analysis in Section~\ref{Section:LPT} implies that the approximation ratio of LPT is strictly better than two when the average mean load $\frac{\Lambda^L(m)}{K}$ of long jobs is significantly different than the average number $\frac{N^S(m)}{K}$ of short jobs, for each delivery office.
% Worst-case instances for LPT arise when these quantities tend to become roughly equal.
% This pathological situation occurs when $m$ belongs to $[18,25]$, $[20,27]$ and $[23,30]$ for \basingstoke, \colchester and \northampton, respectively.
% For instances with values of $m$ in these ranges, we observe that LSM produces better solutions than LPT.
}

\begin{figure}[t]
    \begin{subfigure}[t]{0.5\textwidth}
        \begin{center}
        \includegraphics[width=\textwidth]{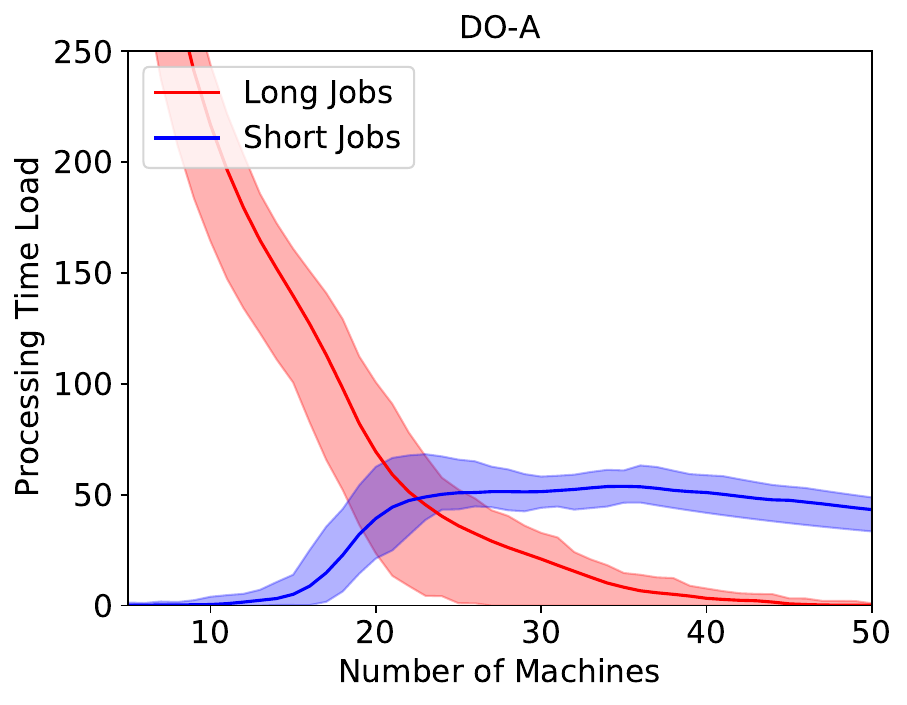}
        \end{center}
        \caption{\basingstoke}
        %\label{Figure:basingstoke_return_times_histogram}
    \end{subfigure}
    \begin{subfigure}[t]{0.5\textwidth}
        \begin{center}
        \includegraphics[width=\textwidth]{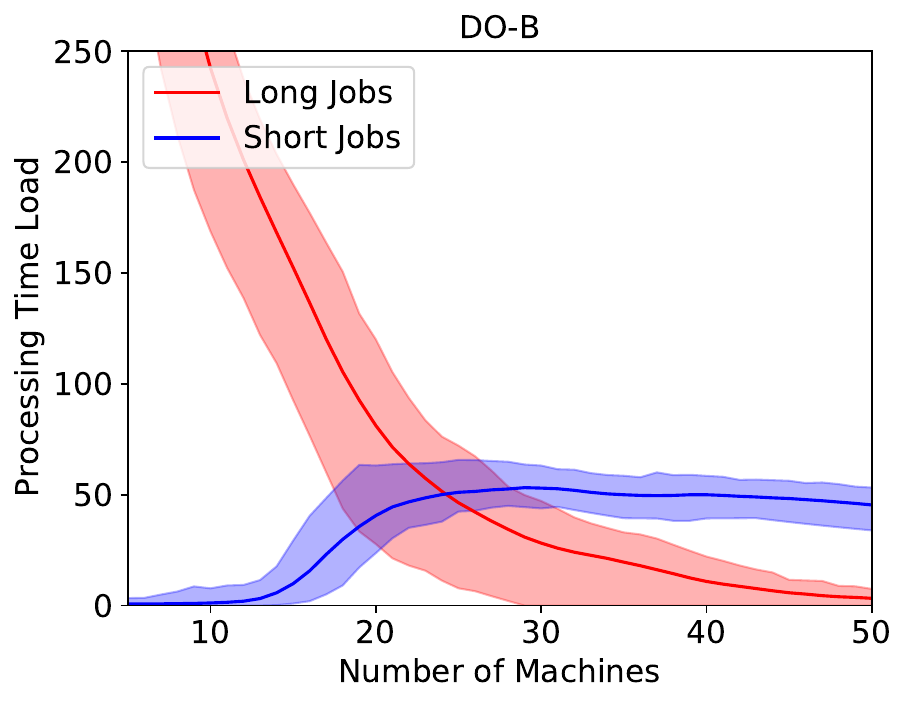}
        \end{center}
        \caption{\colchester}
        %\label{Figure:colchester_durations_histogram}
    \end{subfigure}
    \begin{subfigure}[t]{\textwidth}
        \begin{center}
        \includegraphics[width=0.5\textwidth]{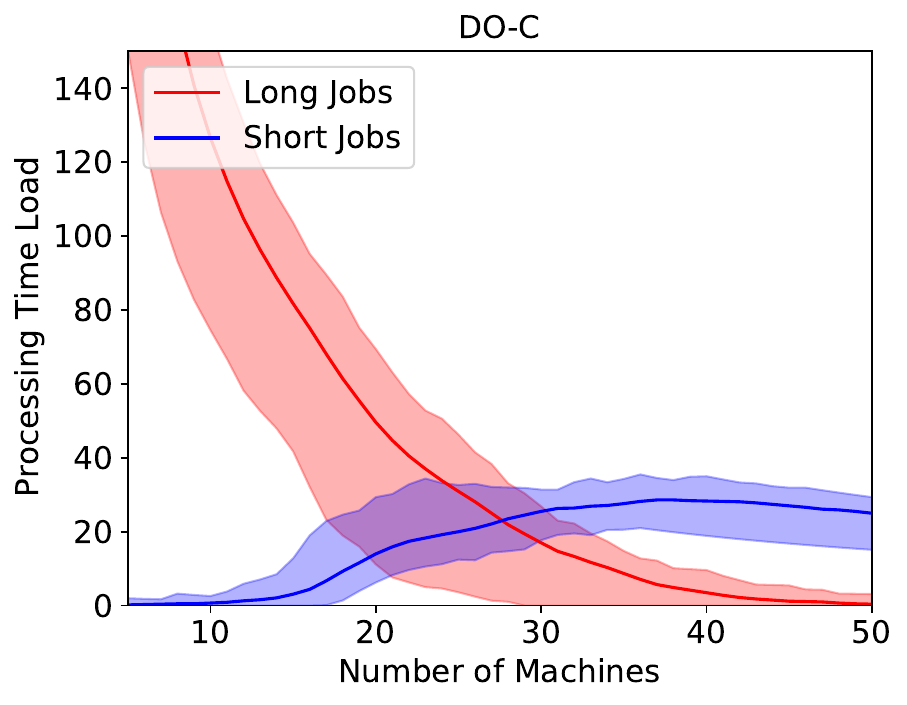}
        \end{center}
        \caption{\northampton}
        %\label{Figure:northampton_durations_histogram}
    \end{subfigure}
\caption{\change{Line chart comparing the processing time load $\frac{1}{m}\sum_{j}p_j$ of long and short jobs. Given a number $m$ machines, (i) the solid line plots the average load and (ii) the shaded area shows the difference between the maximum and minimum load, with respect to all job sets (days).}
%\todo[inline]{Can't read the figure fonts}
%schedules realized by Royal Mail and nominal schedules produced by solving MILP (\ref{Eq:DT_Model}) with CPLEX. Dotted vertical lines indicate Mondays.
}
\label{Figure:Long_Short_Load}
\end{figure}

\subsection{Evaluation of Heuristics}
\label{Section:Evaluation_Heuristics}

\begin{figure}[t]
    \begin{subfigure}[t]{0.5\textwidth}
        \begin{center}
        \includegraphics[width=\textwidth]{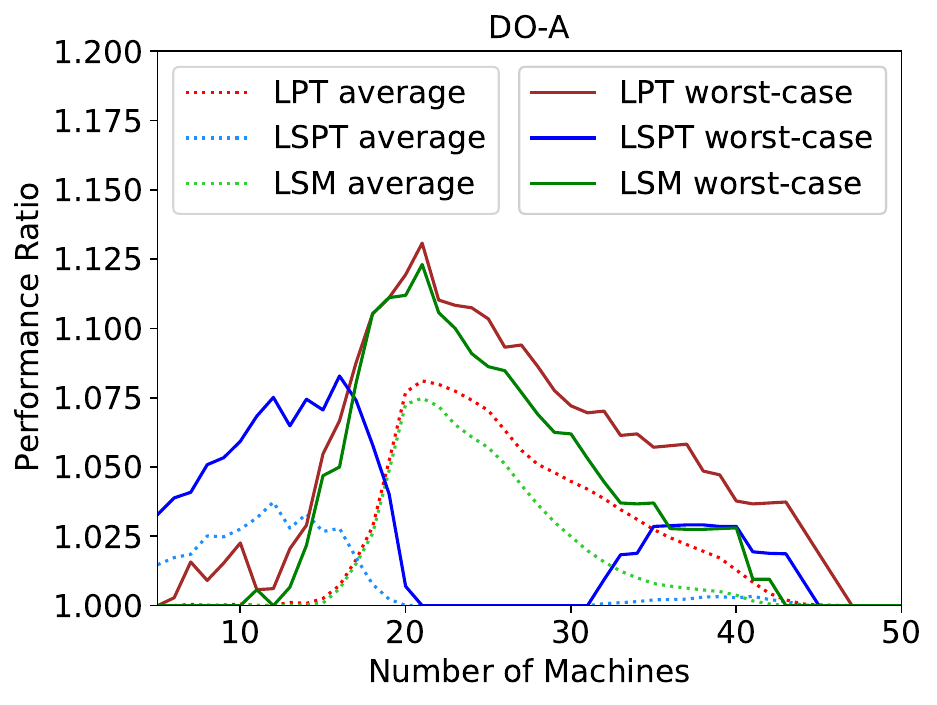}
        \end{center}
        \caption{\basingstoke}
        %\label{Figure:basingstoke_return_times_histogram}
    \end{subfigure}
    \begin{subfigure}[t]{0.5\textwidth}
        \begin{center}
        \includegraphics[width=\textwidth]{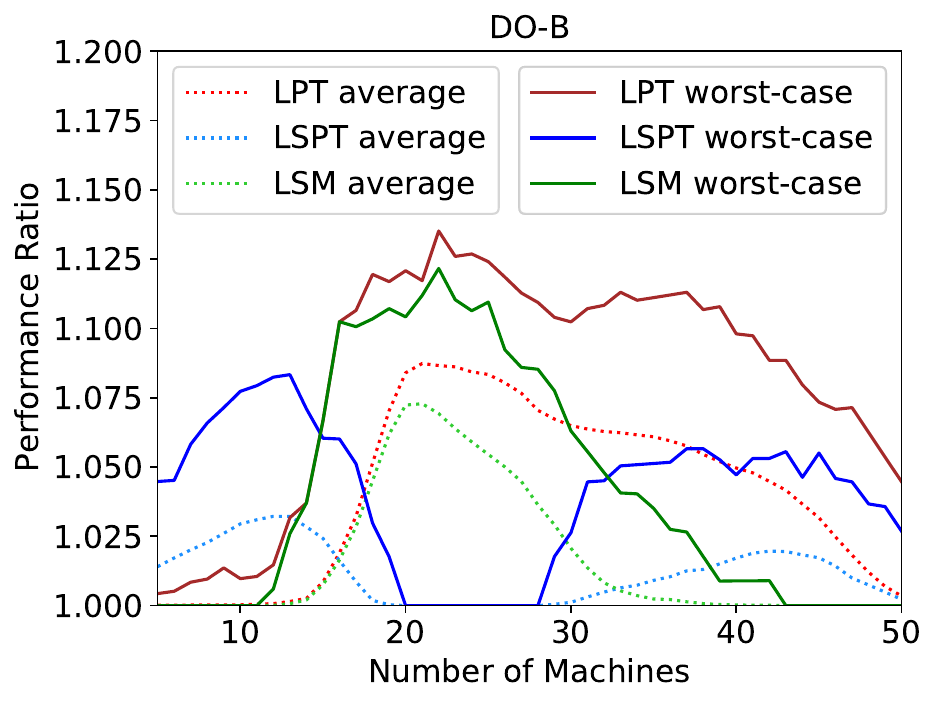}
        \end{center}
        \caption{\colchester}
        %\label{Figure:colchester_durations_histogram}
    \end{subfigure}
    \begin{subfigure}[t]{\textwidth}
        \begin{center}
        \includegraphics[width=0.5\textwidth]{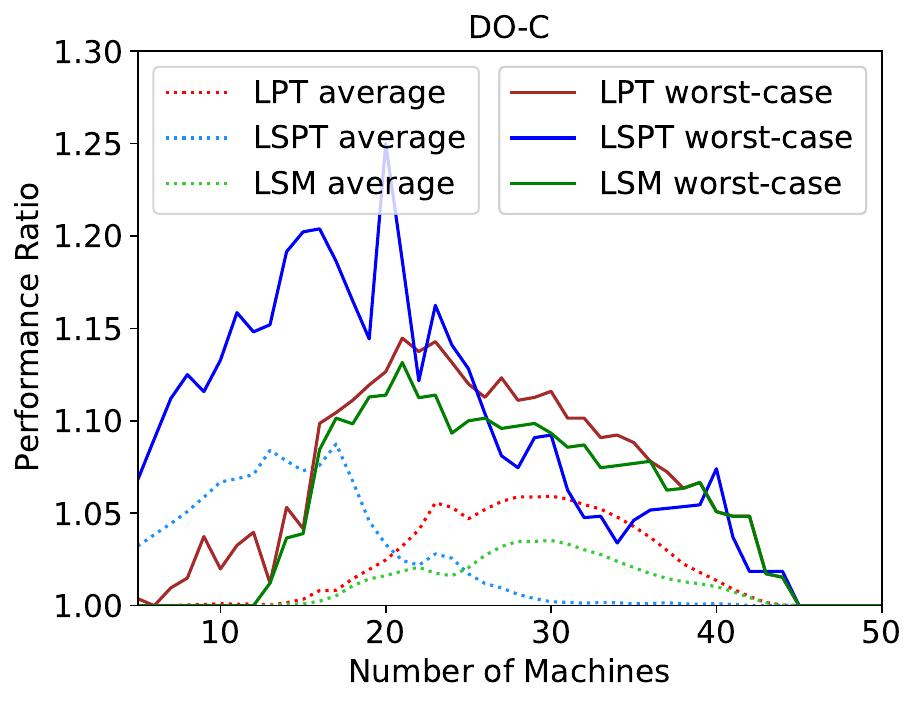}
        \end{center}
        \caption{\northampton}
        %\label{Figure:northampton_durations_histogram}
    \end{subfigure}
\caption{\change{Line chart comparing the performance of the heuristics. For a given number $m$ of machines, (i) solid lines plot the worst-case performance ratios and (ii) dotted lines plot the average performance ratios of the heuristics with respect to all jobs sets (days).}
%\todo[inline]{Can't read the figure fonts}
%schedules realized by Royal Mail and nominal schedules produced by solving MILP (\ref{Eq:DT_Model}) with CPLEX. Dotted vertical lines indicate Mondays.
}
\label{Figure:Heuristic_Performance}
\end{figure}

\change{
This section compares the performance of the LPT (Section~\ref{Section:LPT}), LSPT %(Section~\ref{Section:SPT}) 
(Section~\ref{Section:Long_Short}) and LSM (Section~\ref{Section:LSM}) heuristics
%with respect to optimal solutions 
for BJSP. 
For each job set $\mathcal{J}$ (i.e.\ collection of jobs executed by a delivery office in one day) and number $m\in[5,50]$ of machines, we create a BJSP instance with $g=1$. 
%job sets in the Royal Mail data and a range of values for the number of available machines.
%compares the makespan of the heuristic schedules using the ranges of $m$ values specified in Section~\ref{Section:Long_Short_Jobs}.
%obtained by optimal solving Integer Program~\ref{Eq:DT_Model} (Section~\ref{Section:Problem_Definition}) using CPLEX.
%We use the job sets in the Royal Mail data for evaluating our heuristics.
%Specifically, a job set $\mathcal{J}$ consists of all jobs executed by delivery office on a certain date.
%For each $\mathcal{J}$ in the data set, we evaluate the heuristics using every instance $(m,\mathcal{J})$, where $m\in[5,50]$. 
%Specifically, we consider every $m$ value in the range $[10,30]$ for \basingstoke, $[15,30]$ for \colchester\ and $[20,35]$ for \northampton.
%Table~\ref{Table:Delivery_Offices} reports information about the resulting BJSP instances.
%the range of $n$ values, i.e.\ the number of jobs executed during a day, for each delivery office and  range for the number of machines.
%Here, we examine all possible values in the range $[0.1n,0.3n]$ for the number of machines.
%In all cases, we set a unit BJSP parameter, i.e.\ $g=1$.
We solve every instance $I=(m,\mathcal{J})$ using LPT, LSPT and LSM.
Let $T(A,I)$ be the makespan of the schedule produced by heuristic $A$ for instance $I$.
Then, the performance ratio of heuristic $A$ for $I$ is $T(A,I)/T^*(I)$, where $T^*(I)$ is the best heuristically computed makespan for $I$.
Figure~\ref{Figure:Heuristic_Performance} plots the worst-case and average performance ratio of each heuristic, for each $m$ value.
For small $m$ values, the number $n^S$ of short jobs 
%starts and the idle period before the last job begins are 
is low compared to the mean load $\frac{1}{m}\sum_{j\in\mathcal{J}^L}p_j$ of long jobs and LPT produces good heuristic schedules, noticeably better than LSPT.
%LSPT schedules are worse because 
% Therefore, LPT produces good solutions LSPT schedules, which leaves big jobs in the end of the schedule.
As $m$ increases, the idle period before the last job begins becomes progressively more significant and LSPT tends to compute better schedules than LPT (recall that LSPT and LPT achieve low idle machine before and after, respectively, the last job start).
% The opposite situation is encountered for the larger $m$ values and LSPT dominates LPT. 
Interestingly, LSPT produces the best heuristic schedules in the pathological LPT case where $n^S$ tends to become equal to $\frac{1}{m}\sum_{j\in\mathcal{J}^L}p_j$.
Finally, LSM consistently produces better schedules than LPT.
Therefore, scheduling short jobs in parallel with long jobs early in a schedule is useful for achieving low makespan.
Our findings indicate that a LSM variant where long jobs are executed according to SPT, in parallel with short jobs, might be a possible alternative for obtaining better BJSP approximation algorithms. }
% In our implementations, we run LSM once for every $[\frac{m}{8},\frac{m}{4}]$
% The harder instances are encountered when $n^S$ is equal to $\frac{1}{m}\sum_{j\in\mathcal{J}^L}p_j$. 
% Finally, in all cases, that LSM is better than LPT, i.e.\ there is a benefit in executing short jobs in parallel with long jobs early in the schedule.
% Note, that we have implemented LSM using LSPT.

\begin{figure}[t]
    \begin{subfigure}[t]{0.5\textwidth}
        \begin{center}
        \includegraphics[width=\textwidth]{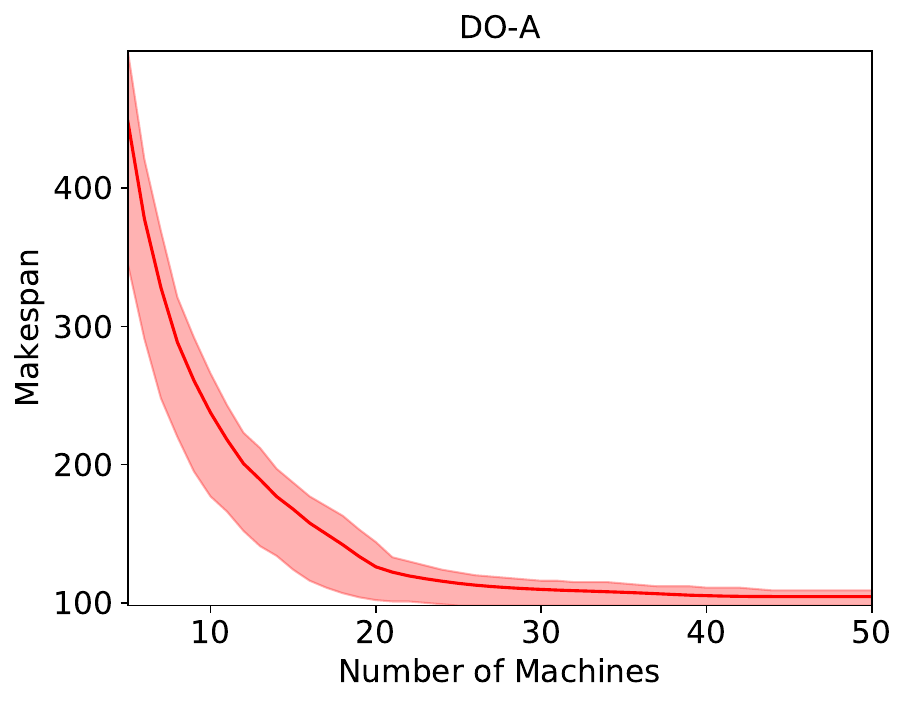}
        \end{center}
        \caption{\basingstoke}
        %\label{Figure:basingstoke_return_times_histogram}
    \end{subfigure}
    \begin{subfigure}[t]{0.5\textwidth}
        \begin{center}
        \includegraphics[width=\textwidth]{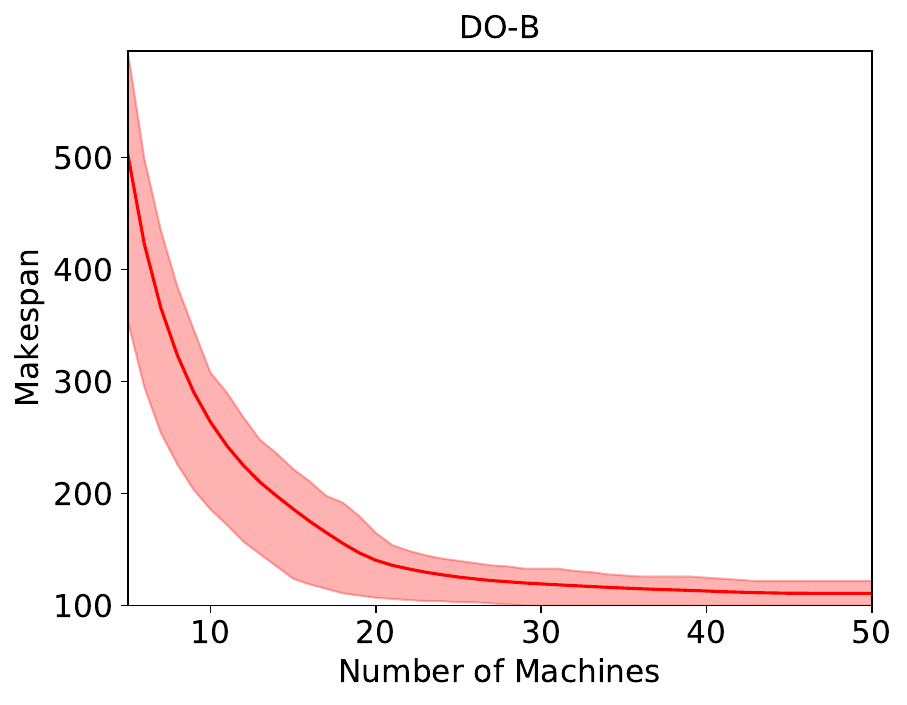}
        \end{center}
        \caption{\colchester}
        %\label{Figure:colchester_durations_histogram}
    \end{subfigure}
    \begin{subfigure}[t]{\textwidth}
        \begin{center}
        \includegraphics[width=0.5\textwidth]{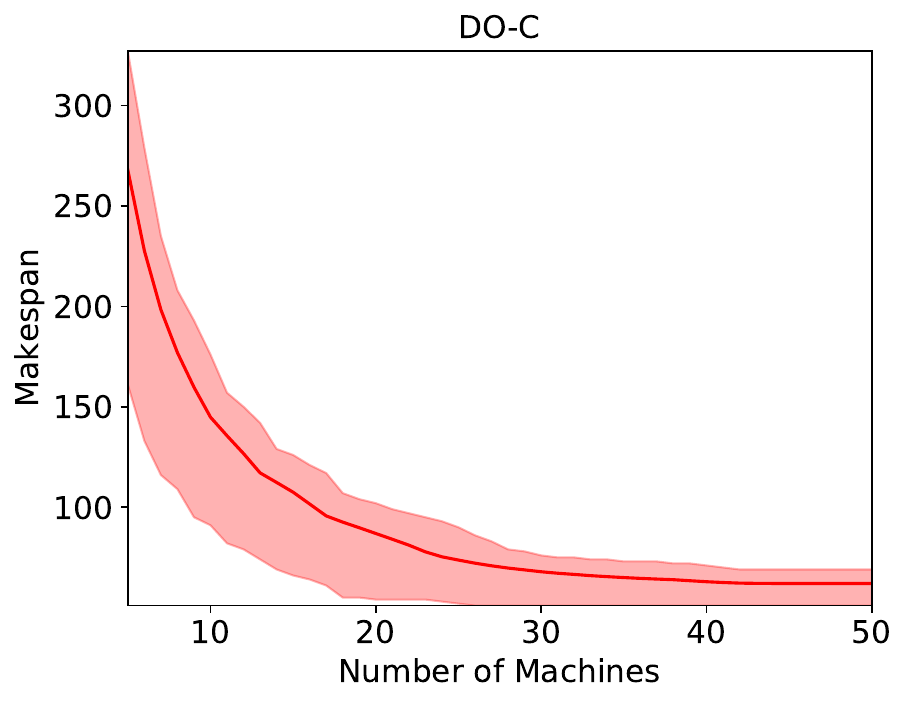}
        \end{center}
        \caption{\northampton}
        %\label{Figure:northampton_durations_histogram}
    \end{subfigure}
\caption{\change{Trade-off between the makespan and number of machines. Given a number $m$ machines, (i) the solid line plots the average makespan and (ii) the shaded area shows the difference between the maximum and minimum makespan, with respect to all job sets (days).}
%\todo[inline]{Can't read the figure fonts}
%schedules realized by Royal Mail and nominal schedules produced by solving MILP (\ref{Eq:DT_Model}) with CPLEX. Dotted vertical lines indicate Mondays.
}
\label{Figure:Heuristic_Tradeoff}
\end{figure}

\change{
For completeness, Figure 10 plots the trade-off between the best heuristically computed makespan $T$ with respect to the number $m$ of machines. 
Clearly, as $m$ increases, $T$ decreases.
This finding supports using machine augmentation for better makespan schedules in the presence of uncertainty.
}

\subsection{Evaluation of Historical Schedules}
\label{Section:Evaluation_Historical_Schedules}

This section evaluates the Royal Mail historical schedules (i) efficiency in number of used machines and (ii) sensitivity with respect to processing time and (iii) BJSP parameter variations.

For part (i), we solve each BJSP instance by feeding the corresponding MILP (\ref{Eq:LexOpt_Model}) model to CPLEX.
In these MILP (\ref{Eq:LexOpt_Model}) models, we set $\theta=0$ to minimize the number of used machines.
Figure \ref{Figure:Historical_Schedules} compares the number of machines in the Royal Mail historical schedules and the CPLEX solutions. 
%the minimal number of machines obtained by solving the corresponding MILP (\ref{Eq:LexOpt_Model}) with $\theta=0$ using CPLEX.
%Note that the job processing times are identical in the two schedules. 
%objective values obtained by minimizing the number of vehicles with full input knowledge.
We observe that nominal optimal solutions save at least 10, 25, and 10 vehicles per day compared to historical schedules for \basingstoke, \colchester, and \northampton, respectively.
This finding is a strong indication that more efficient fleet management might be possible in Royal Mail delivery offices.
%We note that, in given a mail delivery scheduling problem instance $I$, the number $n$ of itineraries is a trivial upper bound on the number $v(S)$ of vehicles in any feasible schedule $S$ for $I$, i.e.\ $v(S)\leq n$ $\forall S\in \mathcal{S}(I)$.
%Let $\mathcal{H}$ be the set of Royal Mail historical schedules.
%We identify as major weakness of the historical schedules in our data set the fact that $v(S)=n$ $\forall S\in\mathcal{H}$.

\begin{figure}[t]
    \begin{subfigure}[t]{0.5\textwidth}
        \begin{center}
        \includegraphics[width=\textwidth]{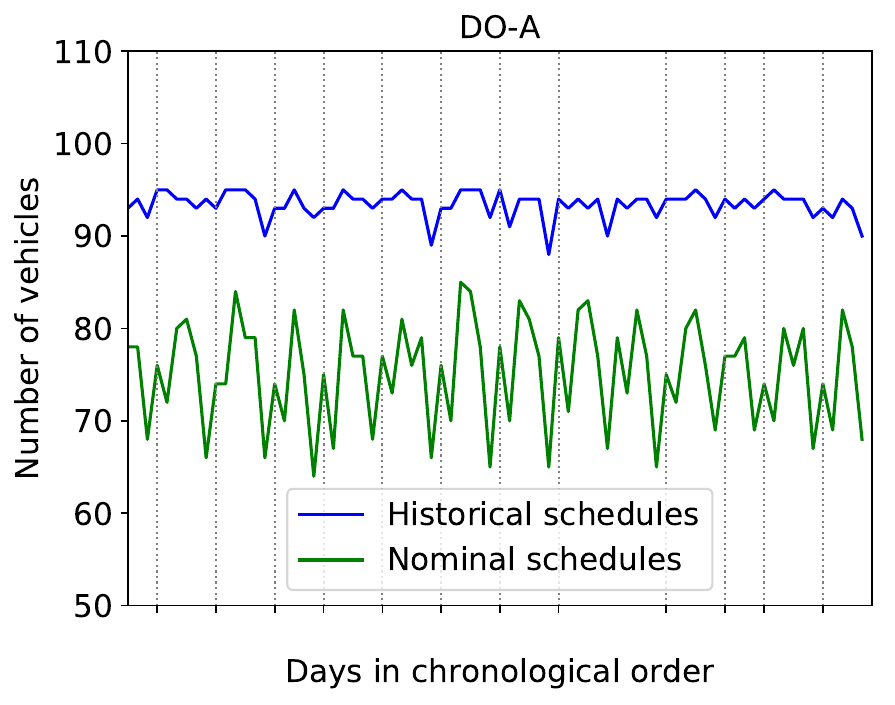}
        \end{center}
        \caption{\basingstoke}
        %\label{Figure:basingstoke_return_times_histogram}
    \end{subfigure}
    \begin{subfigure}[t]{0.5\textwidth}
        \begin{center}
        \includegraphics[width=\textwidth]{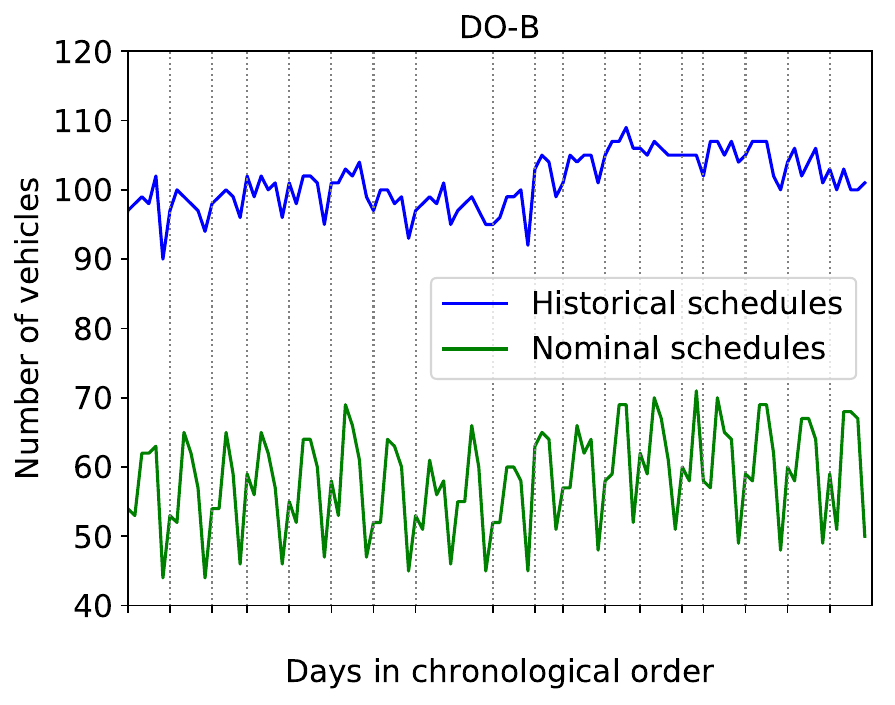}
        \end{center}
        \caption{\colchester}
        %\label{Figure:colchester_durations_histogram}
    \end{subfigure}
    \begin{subfigure}[t]{\textwidth}
        \begin{center}
        \includegraphics[width=0.5\textwidth]{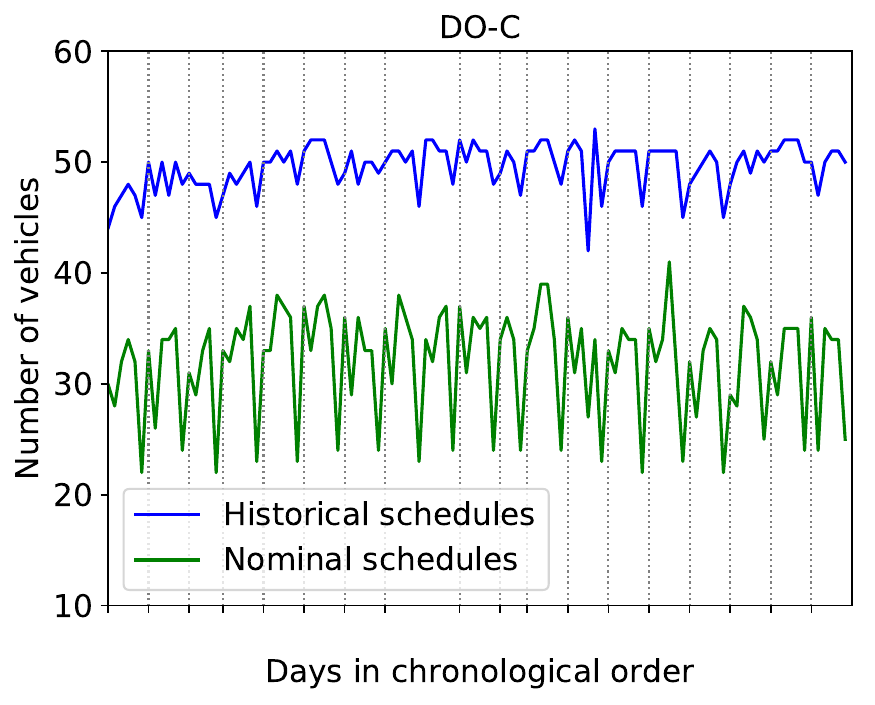}
        \end{center}
        \caption{\northampton}
        %\label{Figure:northampton_durations_histogram}
    \end{subfigure}
\caption{Line chart comparing the number of used machines in historical and nominal optimal schedules. 
%\todo[inline]{Can't read the figure fonts}
%schedules realized by Royal Mail and nominal schedules produced by solving MILP (\ref{Eq:DT_Model}) with CPLEX. Dotted vertical lines indicate Mondays.
}
\label{Figure:Historical_Schedules}
\end{figure}

%\subsection{Sensitivity Analysis}

For part (ii), we create a set of perturbed instances.
In particular, for each original instance $I$, we create one perturbed instance $\tilde{I}$ where the processing time of each job $j\in\mathcal{J}$ is decreased by a factor $f_j=0.5$.
We reduce the processing times to guarantee feasibility.
For both instances $I$ and $\tilde{I}$, we employ CPLEX to solve the corresponding MILP (\ref{Eq:LexOpt_Model}) formulations with $\theta=0$. 
Figure~\ref{Figure:Itinerary_Delays} compares the number of used machines obtained for the original and perturbed instances. %to the ones obtained for the perturbed instances. 
Not surprisingly, doubling itinerary durations results in a proportional increase on the number of used vehicles in the nominal optimal solution.
But, Figure~\ref{Figure:Itinerary_Delays} exhibits an important consequence of uncertainty in Royal Mail fleet management.
%displays the number of vehicles in chronological order week after a week.
Disturbances amplify the difference in number of used machines between different days for one delivery office.
This situation leads to inefficient machine utilization.

%shows that doubling itinerary durations results in a proportional increase on the number of used vehicles in the nominal solution.
%each BJSP instance twice: once with the original processing times and another with    

%This section evaluates the impact on the objective value of (i) itinerary duration uncertainty, (ii) gate capacity, and (iii) itinerary splitting.
%Figure~\ref{Figure:Itinerary_Delays} shows that doubling itinerary durations results in a proportional increase on the number of used vehicles in the nominal solution.
%Itinerary delays may significantly increase the required vehicles for delivering mail.
%Furthermore, the naive approach of dealing with uncertainty by assuming larger itinerary durations than their nominal values may attain poor performance. 
%Figure~\ref{Figure:Itinerary_Delays} also shows the amplified difference of the required vehicles between consecutive days. 
%These observations motivate the design of robust mail delivery scheduling methods.
%Figure~\ref{Figure:Gate_Capacity} illustrates the impact of the gate capacity constraint on the objective value.

For part (iii), we investigate the effect of modifying the BJSP parameter for each delivery office.
Figure~\ref{Figure:Gate_Capacity} depicts the obtained results.
Adding BJSP constraints, especially in the \colchester\ case, may significantly increase the number of used machines.
This outcome motivates further investigations on scheduling with BJSP constraints.

% In \basingstoke  and \colchester, the gate capacity may severely affect the final number of required vehicles.
% In \northampton, gate delays provoke less disturbances.
% These findings motivate the design of efficient policies to deal with the gate capacity constraint and stimulate further investigations on degenerate conditions.

% reveals an unfavorable consequence of uncertainty in Royal Mail fleet management.
% The obtained findings have been the motivating elements of the current work. 
% We observe a repetitive pattern with respect the number of vehicles on a weekly basis. 
% For instance, in \basingstoke, we observe a sharp nominal objective value increase ranging from 5 to 10 vehicles between Tuesdays and Wednesdays.
% These fluctuations motivate vehicle pooling between delivery offices.

% Figure~\ref{Figure:Iteration_Splitting} shows the benefit gained by allowing a splitting of a bounded number of jobs.

\begin{figure}[t]
    \begin{subfigure}[t]{0.5\textwidth}
        \begin{center}
        \includegraphics[width=\textwidth]{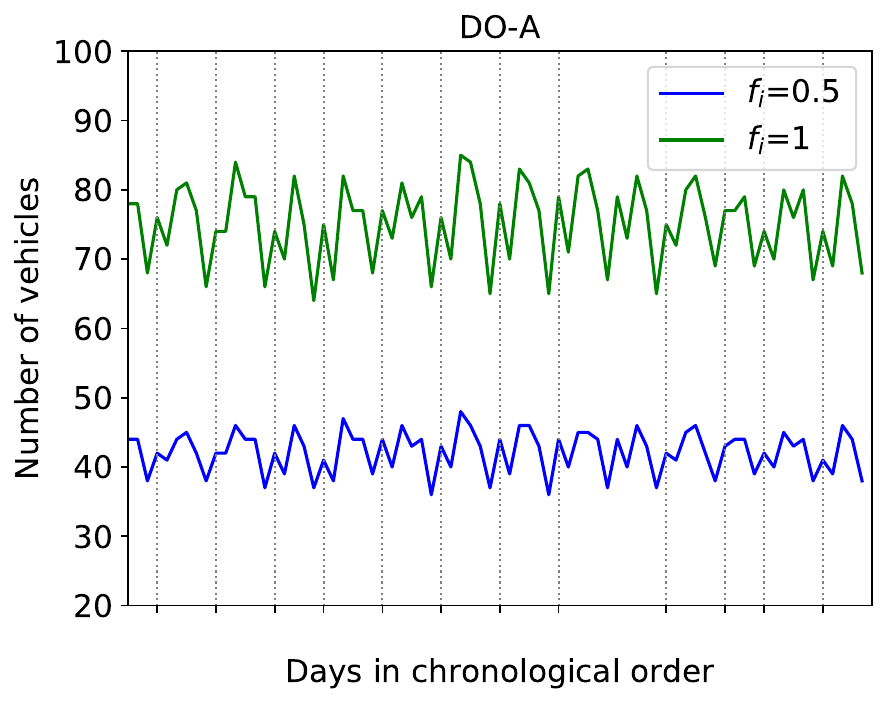}
        \end{center}
        \caption{\basingstoke}
        %\label{Figure:basingstoke_return_times_histogram}
    \end{subfigure}
    \begin{subfigure}[t]{0.5\textwidth}
        \begin{center}
        \includegraphics[width=\textwidth]{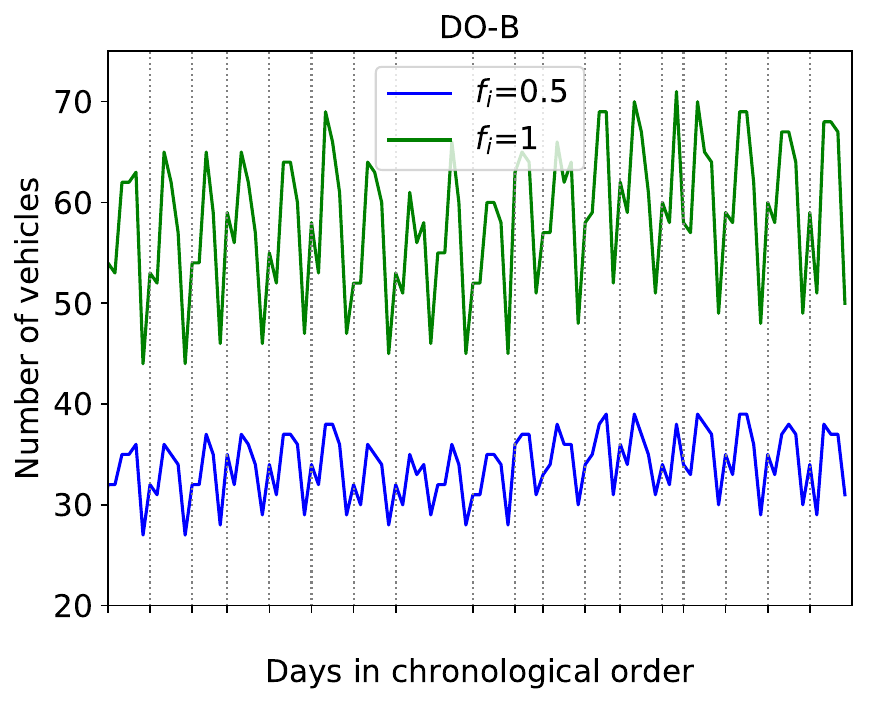}
        \end{center}
        \caption{\colchester}
        %\label{Figure:colchester_durations_histogram}
    \end{subfigure}
    \begin{subfigure}[t]{\textwidth}
        \begin{center}
        \includegraphics[width=0.5\textwidth]{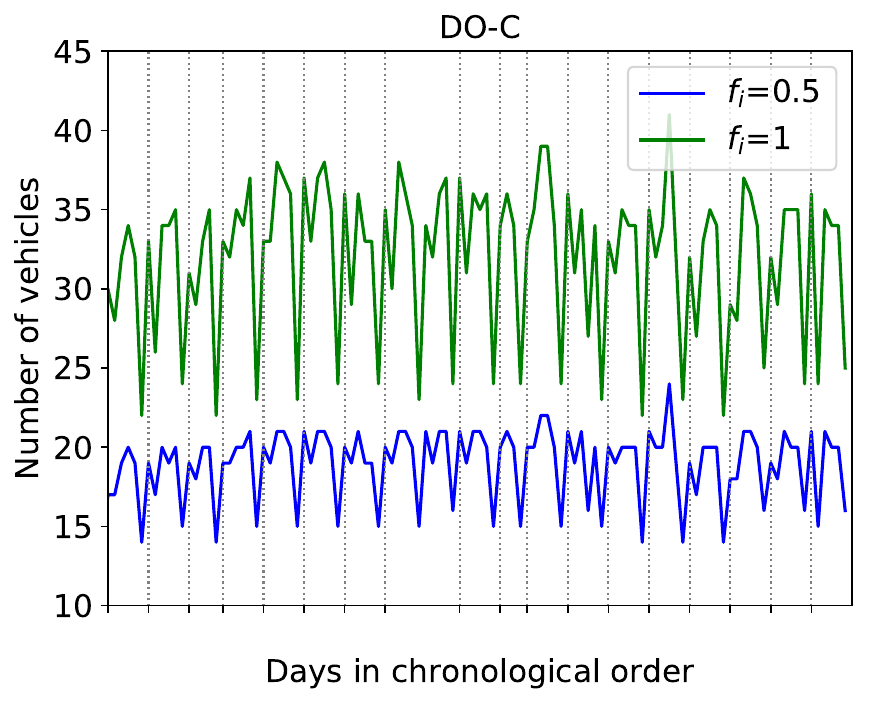}
        \end{center}
        \caption{\northampton}
        %\label{Figure:northampton_durations_histogram}
    \end{subfigure}
\caption{
Line chart comparing the number of used machines between the original instances and instances where the job processing times have been halved. 
%\todo[inline]{Can't read the figure fonts} 
%with different itinerary durations, i.e.\ perturbation factors. Each instance is solved with the nominal durations and the original durations halved. Dotted lines indicate Mondays.
}
\label{Figure:Itinerary_Delays}
\end{figure}

\begin{figure}[t]
    \begin{subfigure}[t]{0.5\textwidth}
        \begin{center}
        \includegraphics[width=\textwidth]{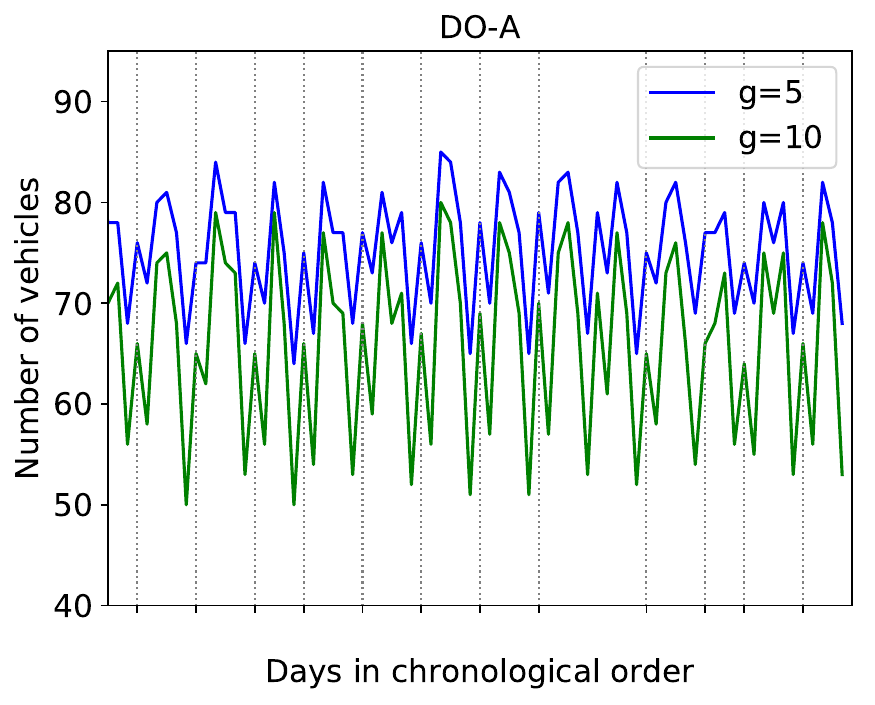}
        \end{center}
        \caption{\basingstoke}
        %\label{Figure:basingstoke_return_times_histogram}
    \end{subfigure}
    \begin{subfigure}[t]{0.5\textwidth}
        \begin{center}
        \includegraphics[width=\textwidth]{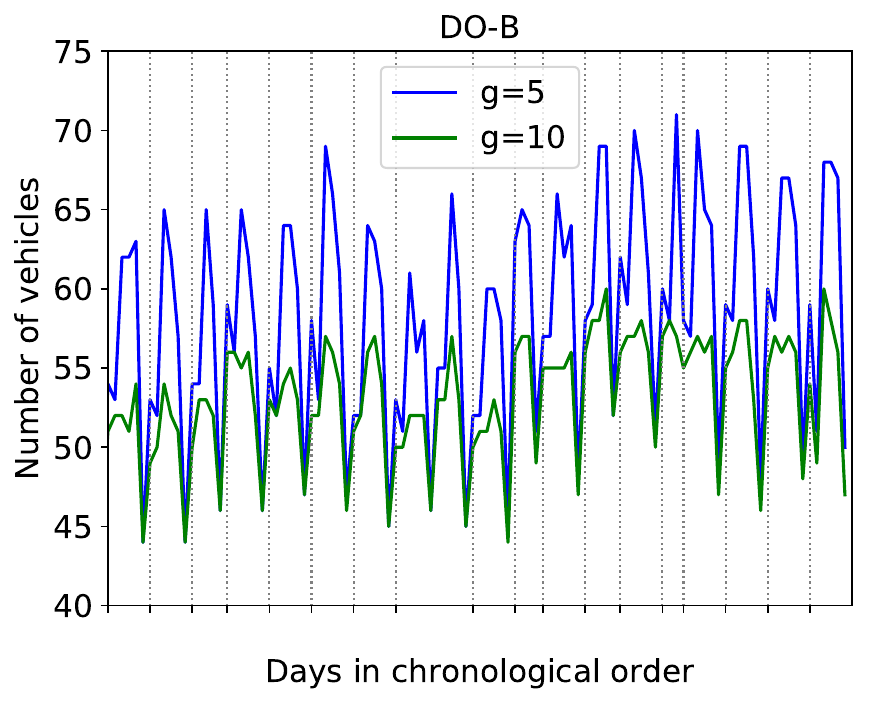}
        \end{center}
        \caption{\colchester}
        %\label{Figure:colchester_durations_histogram}
    \end{subfigure}
    \begin{subfigure}[t]{\textwidth}
        \begin{center}
        \includegraphics[width=0.5\textwidth]{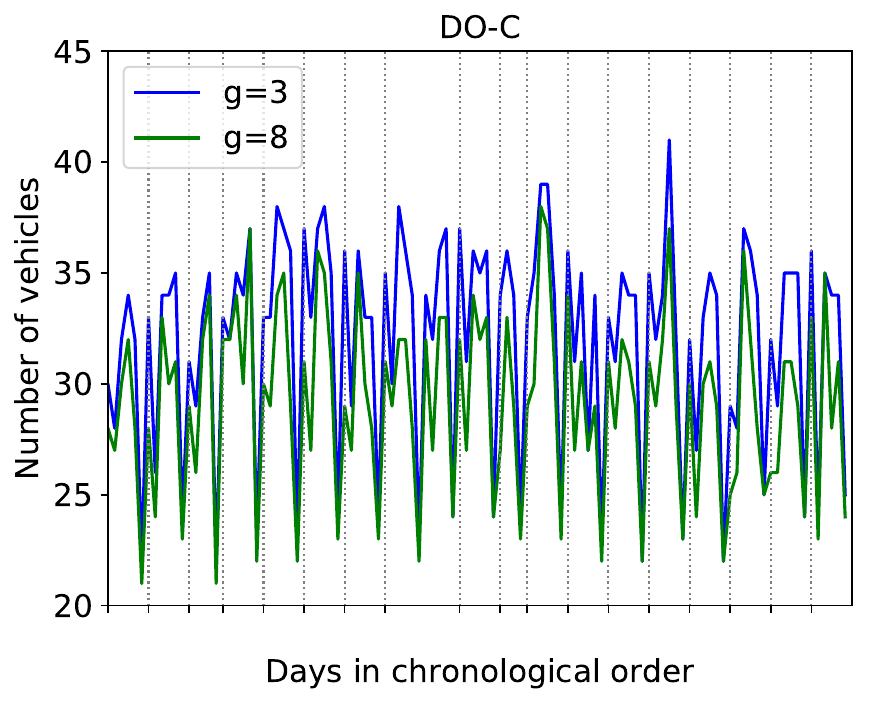}
        \end{center}
        \caption{\northampton}
        %\label{Figure:northampton_durations_histogram}
    \end{subfigure}
\caption{Line chart comparing the number of vehicles between instances with different BJSP parameters. 
%\todo[inline]{Can't read the figure fonts}
%for instances with different gate capacities. Each instance is solved with two different gate capacities. Dotted lines correspond to Mondays.
}
\label{Figure:Gate_Capacity}
\end{figure}

\subsection{Robustness Assessment}
\label{Section:Robustness_Assessment}

\change{Next, we evaluate the two-stage robust optimization method in Section~\ref{Section:LexOpt}.
Specifically, we show that low characteristic value results in more robust BJSP schedules.
%To this end, 
We adopt the experimental setup in \cite{Letsios2018}.
The Royal Mail instances are considered as the nominal ones before any disturbances occur.
The true instances after uncertainty realization are derived by choosing a new processsing time $\tilde{p}_j$ for job each $j\in\mathcal{J}$ uniformly at random from the interval $[p_j-50\%p_j,p_j+50\%p_j]$, where $p_j$ is the nominal processing time.
}

\change{For each nominal instance $I$, we generate a collection $\mathcal{C}(I)$ of feasible, diverse (i.e.\ with different characteristic values) first-stage schedules, using the CPLEX solution pool feature.
Let $F^*(I)=\min\{F(\mathcal{S}):\mathcal{S}\in\mathcal{C}(I)\}$ be the minimum characteristic value among all schedules in $\mathcal{C}(I)$.
Next, denote by $\tilde{I}$ and $\tilde{\mathcal{S}}$ the perturbed instance and recovered schedule from $\mathcal{S}$, after uncertainty realization. 
Moreover, let $V^*(\tilde{I})=\min\{V(\tilde{\mathcal{S}}):\mathcal{S}\in\mathcal{C}(I)\}$ be the minimum number of machines achievable for $\tilde{I}$ with perfect knowledge.
For each initial schedule $\mathcal{S}\in\mathcal{C}(I)$ and recovered schedule $\tilde{\mathcal{S}}$, we set a normalized characteristic value $F^N(\mathcal{S})=F(\mathcal{S})/F^*(I)$ and normalized number of used machines $V^N(\tilde{\mathcal{S}})=V(\tilde{\mathcal{S}})/V^*(\tilde{I})$.
Figure~\ref{Figure:Robustness} correlates $F^N(\cdot)$ to $V^N(\cdot)$, by plotting every computed pair $(F^N(\mathcal{S}),V^N(\tilde{\mathcal{S})})$, for every nominal instance and initial solution.
% is a scatter plot with a pair $(V^N(\mathcal{S}),F^N(\mathcal{S}))$ for each schedule and for all instances.
We observe that the smaller the initial characteristic value is, the better the final solution we get in terms of number of machines.}

\begin{figure}[t]
    \begin{subfigure}[t]{0.5\textwidth}
        \begin{center}
        \includegraphics[width=\textwidth]{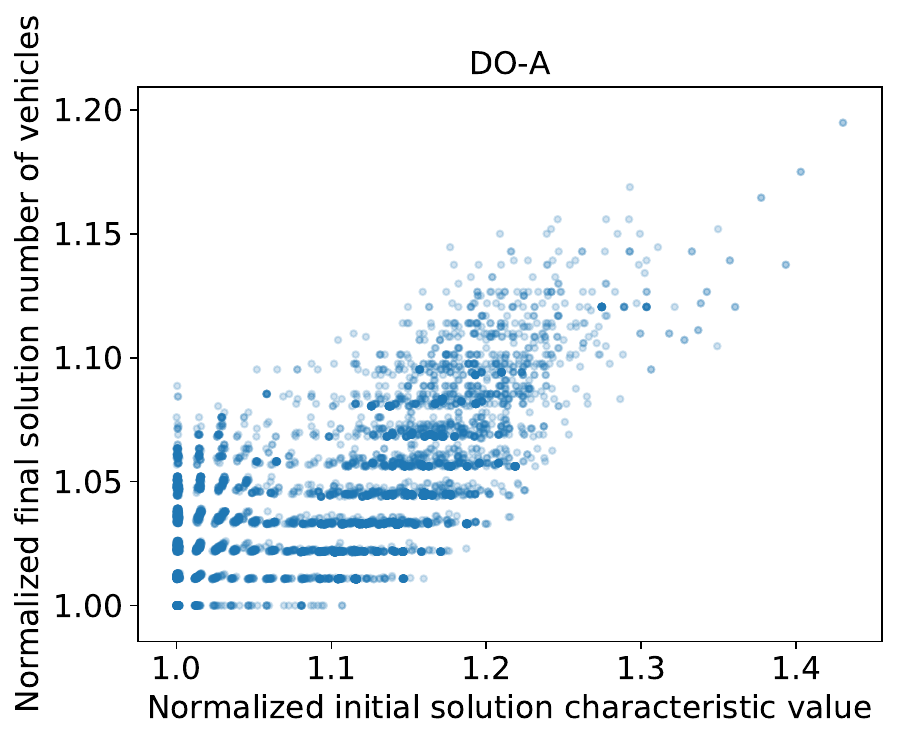}
        \end{center}
        \caption{\basingstoke}
        %\label{Figure:basingstoke_return_times_histogram}
    \end{subfigure}
    \begin{subfigure}[t]{0.5\textwidth}
        \begin{center}
        \includegraphics[width=\textwidth]{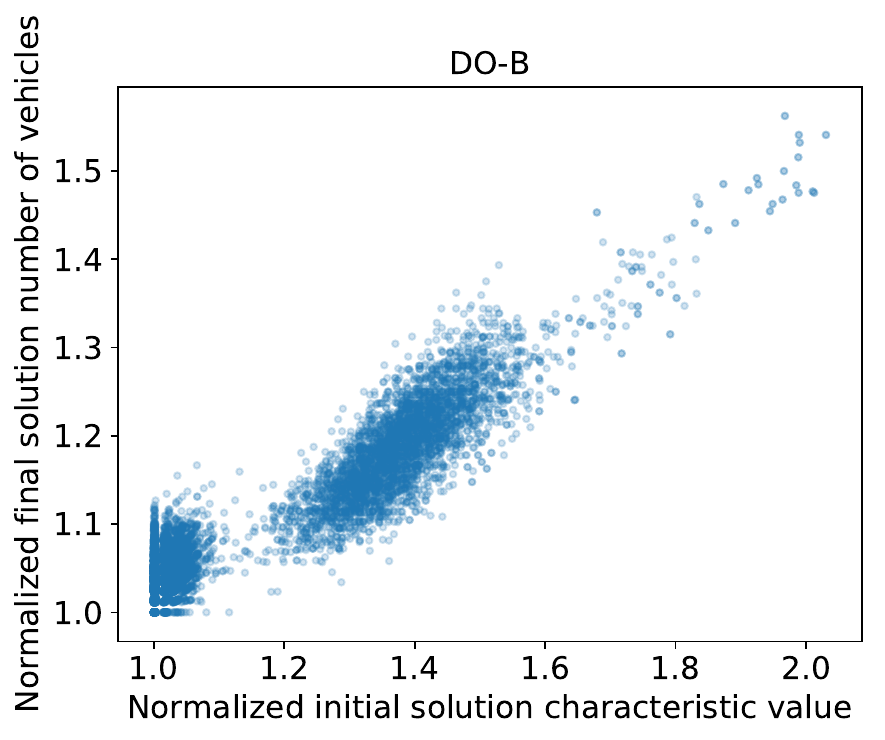}
        \end{center}
        \caption{\colchester}
        %\label{Figure:colchester_durations_histogram}
    \end{subfigure}
    \begin{subfigure}[t]{\textwidth}
        \begin{center}
        \includegraphics[width=0.5\textwidth]{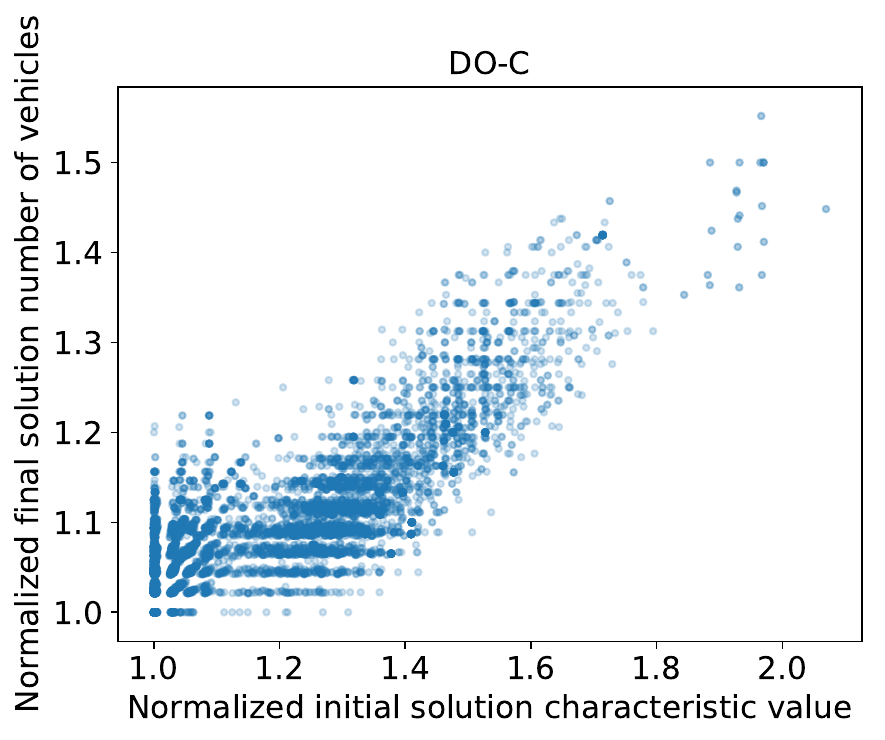}
        \end{center}
        \caption{\northampton}
        %\label{Figure:northampton_durations_histogram}
    \end{subfigure}
\caption{Scatter plots comparing the initial solution weighted value with the final solution number of vehicles. 
%\todo[inline]{Can't read the figure fonts}
}
\label{Figure:Robustness}
\end{figure}

% There are two approaches to detect how robust a solution structure is:
% \begin{itemize}
%     \item Generate multiple solutions and show that one solution is more robust than another.
%     \item Generate a solutions according to a strategy that is offline, and then according to a more robust, and show improvement if we take into account uncertainty.
% \end{itemize}

%%%%%%%%%%%%%%%%%%%%%%%%%%%%%%%%%%%%%%%%%%%%%%%%%%%%%%%%%%%%%%%%

\section{Conclusion}
\label{Section:Conclusion}
%\input{conclusion}
%\newpage

This manuscript initiates study of the \emph{bounded job start scheduling problem (BJSP)}, e.g.\ as arising in Royal Mail deliveries. This project is part of our larger aims towards approximation algorithms for process systems engineering \cite{LETSIOS2019106599}.
%{\color{red} 
The main contributions are (i) 
%a 1.985-approximation algorithm using 1.2-machine augmentation 
better than 2-approximation algorithms for various cases of the problem
and (ii) a two-stage robust optimization approach for BJSP under uncertainty based on machine augmentation and lexicographic optimization, whose performance is substantiated empirically.
%}
We conclude with a collection of future directions.
Because BJSP relaxes scheduling problems with non-overlapping constraints for which better than 2-approximation algorithms are impossible under widely adopted conjectures, the existence of an algorithm with an approximation ratio strictly better than 2 which does not use resource augmentation is an intriguing open question.
A positive answer combining LSM algorithm with a new algorithm specialized to instances with many very long jobs is possible.
Moreover, analyzing the price of robustness of the proposed two-stage robust optimization approach may provide new insights for effectively solving BJSP under uncertainty.
Our findings demonstrate a strong potential for more efficient Royal Mail resource allocation by using vehicle sharing between different delivery offices. 
The BJSP scheduling problem where multiple delivery offices are integrated in a unified setting and vehicle exchanges are performed on a daily basis consists a promising direction for fruitful investigations. %that may allow instantiate machine augmentation.
In this context, recent work on car pooling might be useful.
\change{Finally, bounded job start constraints are broadly relevant to vehicle routing problems \cite{Fisher1997, Gounaris2016}.
However, we are not aware of any work in this direction.}

\begin{acknowledgements}
This work was funded by Engineering \& Physical Sciences Research Council (EPSRC) grant EP/M028240/1 and an EPSRC Research Fellowship to RM (grant number EP/P016871/1).
\end{acknowledgements}

% BibTeX users please use one of
%\bibliographystyle{spbasic}      % basic style, author-year citations
\bibliographystyle{spmpsci}      % mathematics and physical sciences
\bibliography{refs}   % name your BibTeX data base

\begin{thebibliography}{10}
\providecommand{\url}[1]{{#1}}
\providecommand{\urlprefix}{URL }
\expandafter\ifx\csname urlstyle\endcsname\relax
  \providecommand{\doi}[1]{DOI~\discretionary{}{}{}#1}\else
  \providecommand{\doi}{DOI~\discretionary{}{}{}\begingroup
  \urlstyle{rm}\Url}\fi

\bibitem{BenTal2009}
Ben-Tal, A., El~Ghaoui, L., Nemirovski, A.: Robust optimization, vol.~28.
\newblock Princeton University Press (2009)

\bibitem{BenTal2004}
Ben-Tal, A., Goryashko, A., Guslitzer, E., Nemirovski, A.: Adjustable robust
  solutions of uncertain linear programs.
\newblock Mathematical Programming \textbf{99}(2), 351--376 (2004)

\bibitem{BenTal2000}
Ben-Tal, A., Nemirovski, A.: Robust solutions of linear programming problems
  contaminated with uncertain data.
\newblock Mathematical Programming \textbf{88}(3), 411--424 (2000)

\bibitem{Bertsimas2011}
Bertsimas, D., Brown, D.B., Caramanis, C.: Theory and applications of robust
  optimization.
\newblock SIAM review \textbf{53}(3), 464--501 (2011)

\bibitem{Bertsimas2010}
Bertsimas, D., Caramanis, C.: Finite adaptability in multistage linear
  optimization.
\newblock IEEE Transactions on Automatic Control \textbf{55}(12), 2751--2766
  (2010)

\bibitem{Bertsimas2004}
Bertsimas, D., Sim, M.: The price of robustness.
\newblock Operations Research \textbf{52}(1), 35--53 (2004)

\bibitem{Billaut2009}
Billaut, J.C., Sourd, F.: Single machine scheduling with forbidden start times.
\newblock 4OR \textbf{7}(1), 37--50 (2009)

\bibitem{Chassein2019}
Chassein, A.B., Dokka, T., Goerigk, M.: Algorithms and uncertainty sets for
  data-driven robust shortest path problems.
\newblock European Journal of Operational Research \textbf{274}(2), 671--686
  (2019)

\bibitem{davis2011survey}
Davis, R.I., Burns, A.: A survey of hard real-time scheduling for
  multiprocessor systems.
\newblock ACM computing surveys (CSUR) \textbf{43}(4), 35 (2011)

\bibitem{Fisher1997}
Fisher, M.L., J{\"o}rnsten, K.O., Madsen, O.B.: Vehicle routing with time
  windows: Two optimization algorithms.
\newblock Operations research \textbf{45}(3), 488--492 (1997)

\bibitem{suraj-g}
G, Suraj: A practical analysis of optimisation and recovery under uncertainty.
\newblock Master's thesis, Imperial College London (2019).
\newblock
  \url{https://www.imperial.ac.uk/media/imperial-college/faculty-of-engineering/computing/public/1819-ug-projects/GS-A-Practical-Analysis-of-Optimisation-and-Recovery-Under-Uncertainty.pdf}

\bibitem{Gabay2016}
Gabay, M., Rapine, C., Brauner, N.: High-multiplicity scheduling on one machine
  with forbidden start and completion times.
\newblock Journal of Scheduling \textbf{19}(5), 609--616 (2016)

\bibitem{Garey1979}
Garey, M.R., Johnson, D.S.: Computers and intractability, vol. 174.
\newblock Freeman (1979)

\bibitem{Goerigk2016}
Goerigk, M., Sch{\"{o}}bel, A.: Algorithm engineering in robust optimization.
\newblock In: L.~Kliemann, P.~Sanders (eds.) Algorithm Engineering - Selected
  Results and Surveys, \emph{Lecture Notes in Computer Science}, vol. 9220, pp.
  245--279. Springer (2016)

\bibitem{Gounaris2016}
Gounaris, C.E., Repoussis, P.P., Tarantilis, C.D., Wiesemann, W., Floudas,
  C.A.: An adaptive memory programming framework for the robust capacitated
  vehicle routing problem.
\newblock Transportation Science \textbf{50}(4), 1239--1260 (2016)

\bibitem{Graham1969}
Graham, R.L.: Bounds on multiprocessing timing anomalies.
\newblock SIAM Journal on Applied Mathematics \textbf{17}(2), 416--429 (1969)

\bibitem{Hanasusanto2015}
Hanasusanto, G.A., Kuhn, D., Wiesemann, W.: K-adaptability in two-stage robust
  binary programming.
\newblock Operations Research \textbf{63}(4), 877--891 (2015)

\bibitem{kalyanasundaram2000speed}
Kalyanasundaram, B., Pruhs, K.: Speed is as powerful as clairvoyance.
\newblock Journal of the ACM (JACM) \textbf{47}(4), 617--643 (2000)

\bibitem{Kolen2007}
Kolen, A.W.J., Lenstra, J.K., Papadimitriou, C.H., Spieksma, F.C.R.: Interval
  scheduling: {A} survey.
\newblock Naval Research Logistics \textbf{54}(5), 530--543 (2007)

\bibitem{Kouvelis2013}
Kouvelis, P., Yu, G.: Robust discrete optimization and its applications,
  vol.~14.
\newblock Springer Science \& Business Media (2013)

\bibitem{LETSIOS2019106599}
Letsios, D., Baltean-Lugojan, R., F., Mistry, M., Wiebe, J., Misener, R.:
  Approximation algorithms for process systems engineering.
\newblock Computers \& Chemical Engineering p. 106599 (2019).
\newblock \doi{https://doi.org/10.1016/j.compchemeng.2019.106599}

\bibitem{source_code}
Letsios, D., Mistry, M., Misener, R.: Source code (2020).
\newblock \url{https://github.com/dimletsios/}

\bibitem{Letsios2018}
Letsios, D., Mistry, M., Misener, R.: Exact lexicographic scheduling and
  approximate rescheduling.
\newblock European Journal of Operational Research  ((accepted for
  publication))

\bibitem{Liebchen2009}
Liebchen, C., L{\"u}bbecke, M., M{\"o}hring, R., Stiller, S.: The concept of
  recoverable robustness, linear programming recovery, and railway
  applications.
\newblock In: Robust and online large-scale optimization, pp. 1--27. Springer
  (2009)

\bibitem{Mnich2018}
Mnich, M., van Bevern, R.: Parameterized complexity of machine scheduling: 15
  open problems.
\newblock Computers \& Operations Research  (2018)

\bibitem{phillips2002optimal}
Phillips, C.A., Stein, C., Torng, E., Wein, J.: Optimal time-critical
  scheduling via resource augmentation.
\newblock Algorithmica \textbf{32}(2), 163--200 (2002)

\bibitem{Rapine2013}
Rapine, C., Brauner, N.: A polynomial time algorithm for makespan minimization
  on one machine with forbidden start and completion times.
\newblock Discrete Optimization \textbf{10}(4), 241--250 (2013)

\bibitem{Schaffter1997}
Sch{\"a}ffter, M.W.: Scheduling with forbidden sets.
\newblock Discrete Applied Mathematics \textbf{72}(1-2), 155--166 (1997)

\bibitem{Sherali1982}
Sherali, H.D.: Equivalent weights for lexicographic multi-objective programs:
  Characterizations and computations.
\newblock European Journal of Operational Research \textbf{11}(4), 367--379
  (1982)

\bibitem{Skutella2016}
Skutella, M., Verschae, J.: Robust polynomial-time approximation schemes for
  parallel machine scheduling with job arrivals and departures.
\newblock Mathematics of Operations Research \textbf{41}(3), 991--1021 (2016)

\bibitem{Svensson2011}
Svensson, O.: Hardness of precedence constrained scheduling on identical
  machines.
\newblock {SIAM} Journal on Computing \textbf{40}(5), 1258--1274 (2011)

\bibitem{NIPS2006_3053}
Xu, H., Mannor, S.: The robustness-performance tradeoff in {Markov} decision
  processes.
\newblock In: B.~Sch\"{o}lkopf, J.C. Platt, T.~Hoffman (eds.) Advances in
  Neural Information Processing Systems 19, pp. 1537--1544. MIT Press (2007)

\end{thebibliography}

\end{document}